\theoremstyle{plain}
\theoremstyle{plain}
\newtheorem{lem}{\protect\lemmaname}
\date{\today}
\newtheorem{thm}{Theorem}
\newtheorem{obs}{Observation}
\providecommand{\lemmaname}{Lemma}
\providecommand{\theoremname}{Theorem}
\begin{document}
\title{A geometric criterion for optimal measurements in multiparameter quantum metrology}

\author{Jing Yang}
\email{jing.yang.quantum@zju.edu.cn}
\affiliation{Institute of Fundamental and Transdisciplinary Research, Institute of Quantum Sensing, and Institute for Advanced Study in Physics, Zhejiang University, Hangzhou 310027, China}
\affiliation{Nordita, KTH Royal Institute of Technology and Stockholm University, Hannes Alfv\'ens v\"ag 12, 106 91 Stockholm, Sweden.}

\author{Satoya Imai}
\email{satoyaimai@yahoo.co.jp}
\affiliation{Institute of Systems and Information Engineering, University of Tsukuba, Tsukuba, Ibaraki 305-8573, Japan}
\affiliation{Center for Artificial Intelligence Research (C-AIR), University of Tsukuba, Tsukuba, Ibaraki 305-8577, Japan}

\author{Luca Pezz\`e}
\email{luca.pezze@ino.cnr.it}
\affiliation{Istituto Nazionale di Ottica del Consiglio Nazionale delle Ricerche (CNR-INO), Largo Enrico Fermi 6, 50125 Firenze, Italy}
\affiliation{European Laboratory for Nonlinear Spectroscopy (LENS), University of Florence, 50019 Sesto Fiorentino, Italy}

\begin{abstract}

Determining when the multiparameter quantum Cram\'er--Rao bound (QCRB) is saturable with experimentally relevant single-copy measurements is a central open problem in quantum metrology. Here we establish an equivalence between QCRB saturation and the simultaneous hollowization of a set of traceless operators associated with the estimation model, i.e., the existence of complete (generally nonorthogonal) bases in which all corresponding diagonal matrix elements vanish. This formulation yields a geometric characterization: optimal rank-one measurement vectors are confined to a subspace orthogonal to a state-determined Hermitian span. This provides a direct criterion to construct optimal Positive Operator-Valued Measures(POVMs). We then identify conditions under which the partial commutativity condition proposed in \href{https://link.aps.org/doi/10.1103/PhysRevA.100.032104}{[Phys. Rev. A 100, 032104(2019)}] becomes necessary and sufficient for the saturation of the QCRB, demonstrate that this condition is not always sufficient, and prove the counter-intuitive uselessness of informationally-complete POVMs. 
\end{abstract}
\maketitle

The quantum Cram\'er-Rao bound (QCRB)~\citep{helstrom1976quantum,helstrom1968theminimum} sets the ultimate precision limit for estimating parameters encoded in quantum states and underpins quantum-enhanced sensing and metrology.
This bound proves crucial for quantifying the impact of quantum noise in practical applications such as interferometry, magnetometry, optical imaging, gravitational wave detection, navigation, atomic clocks, and optical phase estimation, see e.g. \cite{degen2017quantum, pezz`e2018quantum}.
While the QCRB is always achievable in single-parameter estimation~\citep{braunstein_statistical_1994,braunstein_generalized_1996}, in multiparameter problems it remains valid but need not be saturable~\citep{albarelli2020aperspective,liu2019quantum,demkowicz-dobrzanski2020multiparameter,suzuki2020quantum,pezz`e2025advances}.
This limitation is intrinsically quantum: the measurements optimal for different parameters may be incompatible, as the corresponding observables need not commute.
The incompatibility problem in multiparameter sensing has attracted growing interest recently~\cite{carollo2020geometry,tsang2020quantum,belliardo2021incompatibility}. 

Compared to other metrological bounds~\citep{holevo2011probabilistic,hayashi2005asymptotic,conlon2021efficient,tsang2020quantum,albarelli2019evaluating}, the QCRB has a closed form in terms of the quantum state, making it particularly convenient for computations.
However, despite recent advances in multiparameter estimation theory, little is known about the optimal measurements that saturate the bound for general states, especially in the experimentally relevant single-copy regime~\citep{matsumoto2002anew,ragy2016compatibility,pezz`e2017optimal,zhu2018universally, yang2019attaining, lu2021incorporating, chen2022information, zhou2025randomized, PezzPRL2025}.
This problem addresses one of the major open questions in quantum information theory~\citep{horodecki2022fiveopen}.
It is also tied to the search for necessary and sufficient saturation conditions, which are still missing in the literature.
In particular, it was shown that the QCRB  can be saturated at the single-copy level only when a quantum state satisfies the partial commutativity condition (PCC)~\citep{yang2019optimal}.
Understanding whether also the converse is true, remains outstanding~\cite{Imai2026}.

In this work, we establish the connection between saturating the QCRB and the  {\it simultaneous hollowization} condition of multiple traceless matrices. 
Here, hollowization refers to the procedure to find a set of complete bases, not necessarily orthogonal, such that the diagonal matrix elements of a traceless operator vanish.
This yields a geometric picture in which optimal measurements are confined to a subspace orthogonal to the space spanned by some linear Hermitian operators solely determined by the density operator, as schematically shown in Fig.~\ref{fig:cen-geo}. 
Our results imply, among other consequences, that information-complete Positive Operator-Valued Measures (POVMs) are generically ineffective for saturating nontrivial multiparameter QCRB. We also provide an efficient constructive procedure for optimal measurements at the level of each measurement outcome.
We finally identify regimes in which PCC becomes sufficient to saturate the QCRB -particularly if the dimension of the Hilbert space is much larger than the rank of the density matrix and the number of estimation parameters, which is particularly relevant for continuous-variable systems.

\textit{{Optimal measurement conditions.}}{---} We consider a state $\rho_{\bm{\lambda}}=\sum_{a=1}^{r}p_{a\bm{\lambda}}\ket{\psi_{a\bm{\lambda}}}\bra{\psi_{a\bm{\lambda}}}$, in spectral decomposition, that depends on $s$ parameters and denote the estimation
parameter by $\bm{\lambda}=(\lambda_{1},\,\cdots,\,\lambda_{s})$, where the coefficients
$p_{a\bm{\lambda}}$ are strictly positive and $r$ is the rank of the density operator. 
The classical Fisher information matrix (CFIM) that characterize the precision limit associated with the POVM measurement $\{E_{\omega}\}_{\omega=1}^{\Omega}$ is indicated as $[F^C]_{ij}\equiv \sum_\omega[F^C_{\omega}]_{ij}$, where $[F^C_{\omega}]_{ij}= \partial_i p(\omega|\bm{\lambda})\partial_j p(\omega|\bm{\lambda})/p(\omega|\bm{\lambda})$ and the probability $p(\omega|\bm{\lambda})=\mathrm{Tr}(\rho_{\bm{\lambda}}E_{\omega})$ is given by the Born rule.
The quantum Fisher information matrix (QFIM) is given by $[F^Q]_{ij}\equiv \sum_\omega[F^Q_{\omega}]_{ij}$, where $[F^Q_{\omega}]_{ij}=\mathrm{ReTr}(\rho_{\bm{\lambda}}L_{i}E_{\omega}L_{j})$ and the symmetric logarithmic derivative is defined as $(L_i\rho_{\bm{\lambda}}+\rho_{\bm{\lambda}}L_i)/2=\partial_i \rho_{\bm{\lambda}}$, where we suppress the dependence of $L_i$  on $\bm{\lambda}$ for simplicity.
Focusing on optimal POVMs we can assume, without loss of generality, that $E_{\omega}$ is of rank-one.
The matrix inequality
\begin{equation}
F^C_{\omega}\preceq F^Q_{\omega}\label{eq:Refined-Helstrom}
\end{equation}
holds for every $\omega$~\citep{yang2019optimal},
where $A \preceq 0 $ means that $A$ is negative-semi-definite.
The QCRB is saturated if and only if Eq.~(\ref{eq:Refined-Helstrom}) holds for each POVM element $E_\omega$.
Without loss of generality, one can consider rank-one POVM with $E_\omega=\ket{\pi_\omega}\bra{\pi_\omega}$, where $\ket{\pi_\omega}$  is not necessarily normalized and therefore $E_\omega$ is not necessarily projective.
The necessary and
sufficient conditions for saturating Eq.\eqref{eq:Refined-Helstrom} are as follows~\citep{yang2019optimal}:
For regular operators where $\text{Tr}(\rho_{\bm{\lambda}}E_{\omega})\neq0$,
the inequality~(\ref{eq:Refined-Helstrom}) is saturated if and only
if
\begin{equation}
\braket{\pi_{\omega}\big|\psi_{a\bm{\lambda}}}=\xi_{\omega,\,i}\braket{\pi_{\omega}\big|L_{i}\big|\psi_{a\bm{\lambda}}},\,\forall a,\,\omega,\,i,\label{eq:Opt-Reg}
\end{equation}
where $\xi_{\omega,\,i}$ is real coefficient independent of $a$.
For null operators where $\text{Tr}(\rho_{\bm{\lambda}}E_{\omega})=0$,
the inequality~(\ref{eq:Refined-Helstrom}) is saturated if and only
if
\begin{equation}
\braket{\psi_{a\bm{\lambda}}\big|L_{i}\big|\pi_{\omega}}=\eta_{\omega,\,ij}\braket{\psi_{a\bm{\lambda}}\big|L_{j}\big|\pi_{\omega}},\,\forall a,\omega,\label{eq:Opt-Null}
\end{equation}
where $\eta_{\omega,\,ij}$ is real coefficient independent of $a$. 

Eqs.~(\ref{eq:Opt-Reg},~\ref{eq:Opt-Null})
allow one to check whether a given POVM can saturate the QCRB bound or not without calculating the CFIM and QFIM explicitly. However, the presence of the unknown
real coefficients $\xi_{\omega,\,i}$, $\eta_{\omega,\,ij}$ still prevents  a challenge for both analytical and numerical search of
the optimal measurements. 

Below, we shall present a new optimal measurement
condition that removes these real coefficients.
\begin{thm}
\textup{\label{thm:Simultaneous=000020Hollowization=000020Theorem}
[Simultaneous Hollowization Theorem] For a given rank-one POVM operator
$E_{\omega}=\ket{\pi_{\omega}}\bra{\pi_{\omega}}$, the inequality~(\ref{eq:Refined-Helstrom})
is saturated if and only if 
\begin{equation}
\braket{\pi_{\omega}\big|W_{ij,\,ab}\big|\pi_{\omega}}=0,\,\forall i,\,j,\,i\neq j,\,a,\,b,\label{eq:W-opt}
\end{equation}
and 
\begin{equation}
\braket{\pi_{\omega}\big|M_{i,\,ab}\big|\pi_{\omega}}=0,\,\forall a,\,b, \label{eq:M-opt}
\end{equation}
where the indices $i,j$ run from $1$ to the number of parameters $s$, while the indices $a,b$ run from $1$ to the rank $r$ of $\rho_{\bm{\lambda}}$, $P_{ab}\equiv\ket{\psi_{a\bm{\lambda}}}\bra{\psi_{b\bm{\lambda}}}$, $M_{i,\,ab}\equiv[L_{i},\,P_{ab}]$ and
$W_{ij,\,ab}\equiv L_{i}P_{ab}L_{j}-L_{j}P_{ab}L_{i}$. 
In other words, inequality~(\ref{eq:Refined-Helstrom}) can be saturated if and only if there exists a POVM measurement basis $\{\ket{\pi_{\omega}}\}$ such that the operators $W_{ij,\,ab}$ and $M_{{i},ab}$ can be simultaneously ``hollowized'', i.e., they can be brought in a matrix form with vanishing diagonal entries (so-called hollow matrices).
As a result, we indicate Theorem~\ref{thm:Simultaneous=000020Hollowization=000020Theorem}
as the hollowization theorem~\footnote{Clearly, Eqs.~(\ref{eq:W-opt}-\ref{eq:M-opt}) are
$U(1)$ gauge invariant under the transformation of
the eigenstates $\ket{\psi_{a\bm{\lambda}}}\to\ket{\psi_{a\bm{\lambda}}}\mathrm{e}^{\mathrm{i}\theta_{a\bm{\lambda}}}$
and the scaling transformation of the measurement basis $\ket{\pi_{\omega}}\to\alpha_{\omega}\ket{\pi_{\omega}}$,
where $\alpha_{\omega}$ is any complex number.}. 
}
\end{thm}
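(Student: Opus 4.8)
The plan is to reduce both the regular and null saturation conditions [Eqs.~(\ref{eq:Opt-Reg}) and~(\ref{eq:Opt-Null})], whose content is already necessary and sufficient by~\citep{yang2019optimal}, to the coefficient-free quadratic constraints of Eqs.~(\ref{eq:W-opt}) and~(\ref{eq:M-opt}). First I would introduce the overlaps $c_a\equiv\braket{\pi_{\omega}|\psi_{a\bm{\lambda}}}$ and $d_{a,i}\equiv\braket{\pi_{\omega}|L_i|\psi_{a\bm{\lambda}}}$, and insert the definitions $P_{ab}=\ket{\psi_{a\bm{\lambda}}}\bra{\psi_{b\bm{\lambda}}}$, $M_{i,ab}=[L_i,P_{ab}]$ and $W_{ij,ab}=L_iP_{ab}L_j-L_jP_{ab}L_i$ directly into the sandwich $\braket{\pi_{\omega}|\cdots|\pi_{\omega}}$. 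Using $\braket{\psi_{b\bm{\lambda}}|\pi_{\omega}}=c_b^*$ and the hermiticity of $L_i$, a one-line computation yields the explicit forms
\begin{align}
\braket{\pi_{\omega}|M_{i,ab}|\pi_{\omega}}&=d_{a,i}c_b^*-c_ad_{b,i}^*,\\
\braket{\pi_{\omega}|W_{ij,ab}|\pi_{\omega}}&=d_{a,i}d_{b,j}^*-d_{a,j}d_{b,i}^*,
\end{align}
so that Eqs.~(\ref{eq:M-opt}) and~(\ref{eq:W-opt}) become the bilinear identities $d_{a,i}c_b^*=c_ad_{b,i}^*$ and $d_{a,i}d_{b,j}^*=d_{a,j}d_{b,i}^*$ for all indices.

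Next I would split into the two regimes of the prior result according to whether $\mathrm{Tr}(\rho_{\bm{\lambda}}E_{\omega})=\sum_ap_{a\bm{\lambda}}|c_a|^2$ is nonzero (regular) or zero (null). The crucial extraction step is purely algebraic: in the regular regime at least one $c_a\neq0$, the diagonal case $a=b$ of $d_{a,i}c_a^*=c_ad_{a,i}^*$ forces the ratio $d_{a,i}/c_a$ to be \emph{real}, and the off-diagonal cases force it to be \emph{independent of} $a$; identifying its reciprocal with $\xi_{\omega,i}$ recovers Eq.~(\ref{eq:Opt-Reg}) exactly (with $c_a=0\Rightarrow d_{a,i}=0$ handled trivially), and conversely substituting Eq.~(\ref{eq:Opt-Reg}) returns the $M$-identity. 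Substituting $d_{a,i}=c_a/\xi_{\omega,i}$ into the $W$-identity then makes its right-hand side manifestly symmetric in $i\leftrightarrow j$, so the $W$-condition holds \emph{automatically} in this regime; the two hollowization conditions therefore collapse to Eq.~(\ref{eq:Opt-Reg}) and hence to saturation. In the null regime all $c_a=0$, so the $M$-condition is satisfied trivially, and the same reality-plus-independence argument applied to $d_{a,i}d_{b,j}^*=d_{a,j}d_{b,i}^*$ extracts real, $a$-independent ratios that reproduce the coefficients $\eta_{\omega,ij}$ of Eq.~(\ref{eq:Opt-Null}). Combining the two regimes shows the pair of conditions is equivalent to saturation irrespective of which regime $\omega$ belongs to, which is exactly the unified content of the theorem: the binding constraint is $M$ in the regular case and $W$ in the null case, while the other is free.

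The hard part will be the null regime. Because both $p(\omega|\bm{\lambda})$ and its first derivatives $\partial_ip(\omega|\bm{\lambda})=\mathrm{Re}\sum_ap_{a\bm{\lambda}}d_{a,i}c_a^*$ vanish at a null outcome, the classical Fisher information there is a $0/0$ limit, so the equivalence with Eq.~(\ref{eq:Opt-Null}) must be argued through the appropriate limiting form of the CFIM rather than by naive substitution, and I would lean on the prior result~\citep{yang2019optimal} to supply this. A related technical point is the degenerate sub-case in which some derivative-overlap vectors $(d_{a,i})_a$ vanish while others do not: here I must check that the clean bilinear $W$-constraint still coincides with the common-ray (proportionality) structure encoded by $\eta_{\omega,ij}$, rather than being strictly weaker. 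Finally, as a consistency check I would verify the $U(1)$-gauge and scaling invariance noted in the footnote, since under $\ket{\psi_{a\bm{\lambda}}}\to e^{\mathrm{i}\theta_{a\bm{\lambda}}}\ket{\psi_{a\bm{\lambda}}}$ and $\ket{\pi_{\omega}}\to\alpha_{\omega}\ket{\pi_{\omega}}$ both terms of each bilinear identity acquire matching phase and modulus factors, so the vanishing conditions are invariant as required.
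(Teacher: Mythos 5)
Your proposal is correct and follows essentially the same route as the paper's proof: both reduce Eqs.~(\ref{eq:W-opt}) and~(\ref{eq:M-opt}) to the known conditions~(\ref{eq:Opt-Reg}) and~(\ref{eq:Opt-Null}) by showing that the vanishing of the bilinears forces the relevant ratios to be real and independent of $a$, that the $M$-condition implies the $W$-condition for regular outcomes, and that $M$ is automatic for null outcomes. Your explicit $d_{a,i}c_b^*-c_ad_{b,i}^*$ and $d_{a,i}d_{b,j}^*-d_{a,j}d_{b,i}^*$ identities are just a denominator-free packaging of the paper's ratio equations (S1) and (S8), and the degenerate sub-cases you flag (vanishing overlap vectors) are treated with the same level of informality in the paper itself.
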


Furthermore, given $E_\omega=\ket{\pi_{\omega}}\bra{\pi_{\omega}}$, if $F^Q_{\omega}\neq0$~\footnote{The case where $F^Q_{\omega}=0$ is trivial, which leads to $F^C_\omega=0$}, one can always choose a given estimation parameter $\lambda_{\bar{i}}$ and a corresponding eigenstate $\ket{\psi_{\bar{a}\bm{\lambda}}}$ satisfying 
$\braket{\psi_{\bar{a}\bm{\lambda}}\big|L_{\bar{i}}\big|\pi_{\omega}}\neq 0,$
such that the number of constraints on the $W$ and $M$ matrices can be significantly reduced as follows: 
\begin{equation}
\braket{\pi_{\omega}\big|W_{\bar{i}j,\,\bar{a}b}\big|\pi_{\omega}}=0,\,\forall j\in[1,s],\,j\neq\bar{i},\,b\in[1,r],\label{eq:W-opt-pair}
\end{equation}
and 
\begin{equation}
\braket{\pi_{\omega}\big|M_{\bar{i},\,\bar{a}b}\big|\pi_{\omega}}=0,\,\forall b\in[1,r].\label{eq:M-opt-pair}
\end{equation}

We call $(\lambda_j,\ket{\psi_{b\bm{\lambda}}})$  \textit{pair-compatible}  with   $(\lambda_i,\ket{\psi_{a\bm{\lambda}}})$ under the rank-one POVM $E_\omega$ if Eqs.~(\ref{eq:W-opt},\ref{eq:M-opt}) hold for the index pairs $(i,a)$ and $(j,b)$ and $E_\omega$, as indicated by a uni-directional arrow in Fig.~\ref{fig:cen-geo}(a).
All the parameters are called compatible under a complete rank-one POVM, if and only if Eq.~(\ref{eq:Refined-Helstrom}) holds, which is equivalent to the fact that for all the pairs of 
parameters and eigenstates are pair-compatible
With this definition, we observe that pairs with the same estimations parameter $(\lambda_i,\ket{\psi_{a\bm{\lambda}}})$ and $(\lambda_i,\ket{\psi_{b\bm{\lambda}}})$ are always compatible under the projective POVM formed by the eigenvectors of the SLD $L_{{i}}$.

As shown in Fig.~\ref{fig:cen-geo}, Eqs.~(\ref{eq:W-opt-pair},\ref{eq:M-opt-pair}) further break the pair compatibility among all the parameters and eigenstates
down into the pair compatibility between any pair with a central pair consisting of an estimation parameter $\lambda_{\bar{i}}$  and a central eigenstate $\ket{\psi_{\bar{a}\bm{\lambda}}}$. For pure states,
there is only one eigenstate, which provides the intuition for the easier construction of optimal measurements when compared to generic
mixed states~\citep{pezz`e2017optimal,matsumoto2002anew}.

Theorem~\ref{thm:Simultaneous=000020Hollowization=000020Theorem}
is our first main result, which leads to many profound consequences in what follows. The proof can be found in the Sec.~\ref{sec:Proof-Hollowization-Theorem} and \ref{sec:Proof-Pair-Hollowization-Theorem} in the Supplemental Material~\citep{SM}. 
Next, upon summing over $\omega$ in these equations, one can see that the set of operators $\{W_{ij,\,ab},\,M_{i,ab}\}$
are traceless. 
While $M_{i,\,ab}$ is traceless by definition,
$W_{ij,ab}$ is traceless if and only if the PCC holds
\begin{equation}
\braket{\psi_{a\bm{\lambda}}\big|[L_{i},\,L_{j}]\big|\psi_{b\bm{\lambda}}}=0,\,\forall i,j,a,b,\,i\neq j.\label{eq:PCC}
\end{equation}
Thus, we thus see that the PCC is necessary for the saturation of the QCRB and it naturally emerges from the simultaneous hollowization Theorem~\ref{thm:Simultaneous=000020Hollowization=000020Theorem}. We note that the pure state version
of Eq~(\ref{eq:M-opt-pair}) also appears in Ref.~\citep{zhou2020saturating,liu2025optimal}
in searching for optimal local measurements in single-parameter quantum
metrology.

\begin{figure}
\begin{centering}
\includegraphics[scale=0.37]{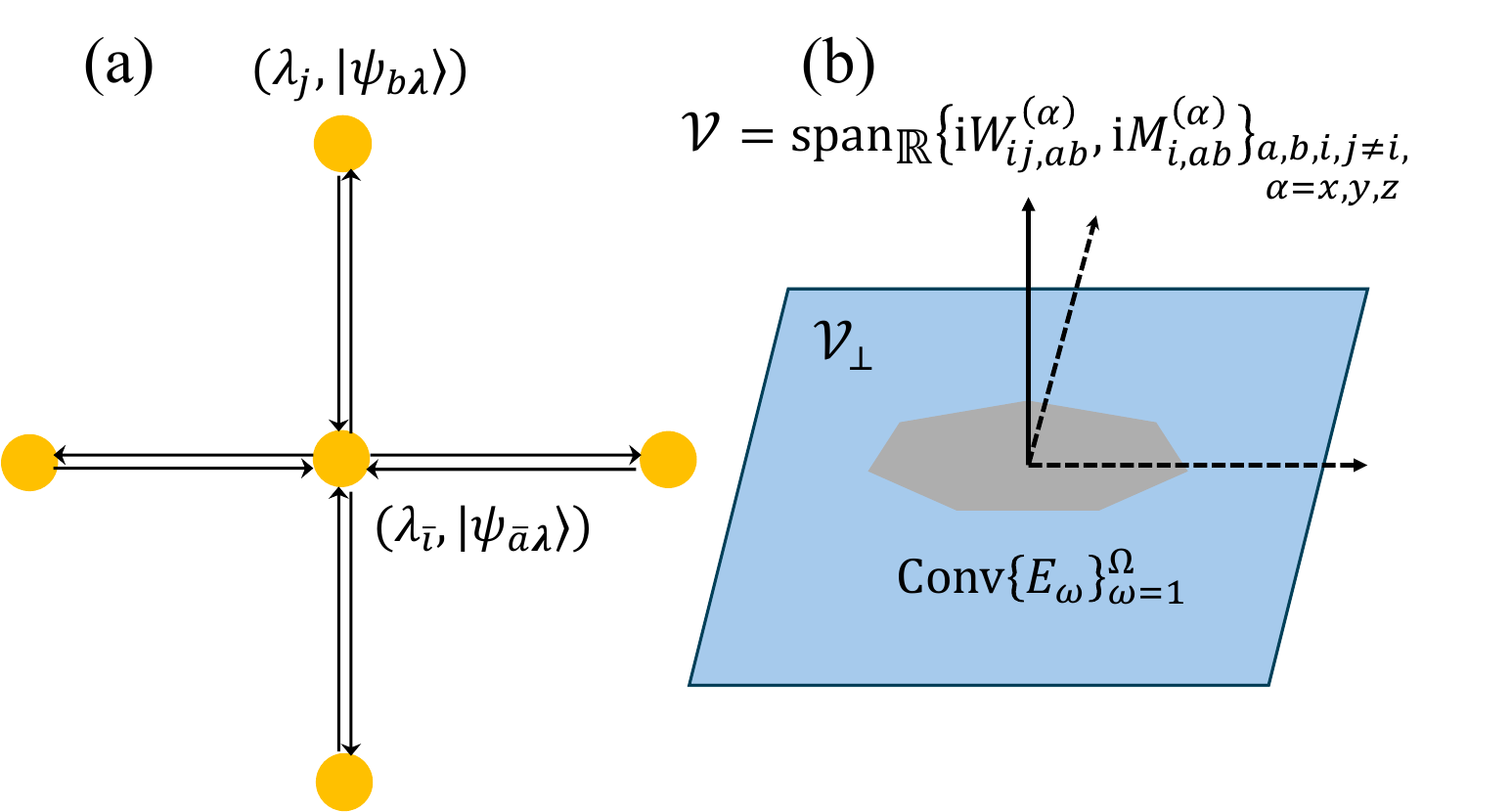}
\par\end{centering}
\caption{\protect\label{fig:cen-geo} (a) Schematic of parameter compatibility under a given rank-one POVM. The yellow dots represent the tuple $(\lambda_j, \ket{\psi_{b\bm{\lambda}}})$. The arrows in both directions implies mutually compatibility. (b) Geometric interpretation of the saturation condition. The gray shaded area is the convex hull formed by the rank-one POVM.
}

\end{figure}

We conclude this section by noting that every single traceless matrix
can be always hollowized~\citep{zhou2020saturating,liu2025optimal,fillmore1969similarity}
through unitary transformation.
As indicated by the 
hollowization Theorems~\ref{thm:Simultaneous=000020Hollowization=000020Theorem}, pair compatibility in multiparameter
estimation would imply the existence of common basis (not necessarily
orthonormal) such that the set of matrices $\{W_{{i}j,\,{a}b},\,M_{{i},\,{a}b}\}$
can be hollowized simultaneously, which highlights the measurement
compatibility problem in multiparameter estimation from a different
perspective. As one might expect, like the standard simultaneous diagonalization,
with the increasing number of the traceless matrices, the difficulty
to hollowize them simultaneously can increase dramatically.

\textit{{Geometry of the saturation condition}}
--- 
In this section, we give a geometric interpretation
of the Hollowization Theorem\textcolor{blue}{~\ref{thm:Simultaneous=000020Hollowization=000020Theorem}}, from which the recipe
for optimal measurements can be easily constructed, if it exists. To this end, let us first focus on the case of finite dimensional
Hilbert space of dimension $d$. we choose a set of
linearly independent rank-one POVM operators $\{E_{\omega}\equiv\ket{\pi_{\omega}}\bra{\pi_{\omega}}\}_{\omega=1}^{|\Omega|}$
such that $\sum_{\omega=1}^{|\Omega|}E_{\omega}=\mathbb{I}$, where
$\Omega$ and $|\Omega|$ are the set and the number of measurement
outcomes, respectively. 
Note that the measurement basis $\{\ket{\pi_{\omega}}\}$
is not necessarily orthonormal. Furthermore, due to the rank requirement
and linear independence, we know $d\leq|\Omega|\leq d^{2}$. 

Next, note that while $W_{ij,aa}$ and $M_{i,\,aa}$
are anti-Hermitian, for $a\neq b$ both $W_{ij,\,ab}$ and $M_{i,\,ab}$
are neither Hermitian nor anti-Hermitian.
To introduce the geometric interpretation of the saturation condition, we construct hermitain matrices from $W$ and $M$  into two anti-Hermtian matrices. To this end, we define
\begin{equation}
W_{ij,\,ab}\equiv\frac{1}{2}[W_{ij,\,ab}^{(x)}+\text{i}W_{ij,\,ab}^{(y)}],\,M_{i,ab}\equiv\frac{1}{2}[M_{i,\,ab}^{(x)}+\text{i}M_{i,\,ab}^{(y)}],\label{eq:W-M-def}
\end{equation}
where $W_{ij,\,ab}^{(\alpha)}\equiv L_{i}\sigma_{ab}^{(\alpha)}L_{j}-L_{j}\sigma_{ab}^{(\alpha)}L_{i}$
and $M_{i,\,ab}^{(\alpha)}\equiv[\sigma_{ab}^{(\alpha)},\,L_{i}]$
with $\alpha=x,\,y,z$ are anti-Hermitian operators and $\sigma_{ab}^{(\alpha)}$
is defined as $\sigma_{ab}^{(x)}\equiv P_{ab}+P_{ba}$, $\sigma_{ab}^{(y)}\equiv-\text{i}(P_{ab}-P_{ba})$.
Furthermore, we define $W_{ij,aa}^{(z)}\equiv L_{i}\sigma_{aa}^{(z)}L_{j}-\mathrm{h.c.}$
and $M_{i,\,aa}^{(z)}\equiv[\sigma_{aa}^{(z)},\,L_{i}]$, where $\sigma_{aa}^{(z)}\equiv P_{aa}$.

With above notations, the hollowization conditions Eq.~(\ref{eq:W-opt})
and Eq.~(\ref{eq:M-opt}) can be straightforwardly recast into
\begin{equation}
\langle\mathrm{i}W_{{i}j,\,{a}b}^{(\alpha)},\,E_{\omega}\rangle=\langle\mathrm{i}M_{{i},\,{a}b}^{(\alpha)},\,E_{\omega}\rangle=0\label{eq:Tr-condition}
\end{equation}
for all $j\in[1,s],\,j\neq i$, $b\in[1,r]$, $\alpha=x,y,z$
and $\omega$, where the inner product defined on the real vector
space $\mathrm{Herm}(d)$ is defined as $\langle A,\,B\rangle=\mathrm{Tr}(AB)$. As a higher-rank
optimal POVM measurement operator can be further refined into several
optimal rank-one POVM measurements, Eq.~(\ref{eq:Tr-condition})
also holds for higher-rank optimal POVM measurement, but is not sufficient
in general.
We denote 
\begin{subequations}
\begin{equation}
\mathcal{\mathcal{V}}\equiv\mathrm{span}_{\mathbb{R}}\{\mathrm{i}W_{{i}j,\,{a}b}^{(\alpha)},\,\mathrm{i}M_{{i},\,{a}b}^{(\alpha)}\}_{a,\,b,\,i,\, j\neq {i},\alpha=x,y,z},
\end{equation}
\end{subequations} 
where
$\mathcal{V}$ is a subspace $\mathrm{Herm}(d)$
or equivalently the vector space $\mathrm{i}\mathcal{V}$ with anti-Hermitian
elements is a subspace of $\mathfrak{u}(d)$. 

As shown in Fig.~\ref{fig:cen-geo}(b), Eq.~(\ref{eq:Tr-condition})
leads to the remarkable geometric interpretation of the saturation
condition: The optimal rank-one POVM operator $E_{\omega}$ must lie
in orthogonal complement of $\mathcal{V}$ in $\mathrm{Herm}(d)$,
denoted as $\mathcal{V}_{\perp}$. 
We introduce the convex hull of a  rank-one POVM measurement, denoted as $\mathrm{Conv}\{E_{\omega}\}_{\omega=1}^{|\Omega|}$, which is defined as the convex combination of the all rank-one POVM measurement. The saturation of the QCRB also means that
\begin{equation}
\mathrm{Conv}\{E_{\omega}\}_{\omega=1}^{|\Omega|}\subseteq\mathcal{V}_{\perp}.
\end{equation}
Since $\sum_{\omega}E_{\omega}=\mathbb{I}$ therefore, $\mathbb{I\in}\mathcal{V}_{\perp}$,
implying $\mathrm{Tr}W_{ij,\,ab}^{(\alpha)}=0$, recovering again the partial
commutativity condition. 

A few comments in order: (i) $\dim\mathcal{V}$ is upper bounded by
the total number of matrices $W_{{i}j,\,{a}b}^{(\alpha)},\,M_{{i},\,{a}b}^{(\alpha)}$,
which is $s(s+1)r^2/2$, where we recall that $r$ is the rank of $\rho_{\bm{\lambda}}$ and $s$ is the number of the estimation parameter
However, it may be difficult to find simple,
yet universal rules that determines $\dim\mathcal{V}$ exactly, as
it can vary from case to case. For example, when $[L_{{i}},L_{j}]=0$,
$W_{{i}j,ab}=0$, which renders $\dim\mathcal{V}$ strictly smaller
than $s(s+1)r^2/2$. 
(ii) Nevertheless, by exploiting these structure,
some results can still be found on the linear independence or dependence
among the matrices$\,M_{i,\,ab}^{(\alpha)}$,  see Sec.~\ref{sec:lin-indept} in the Supplemental Material~\citep{SM}. 

Let us present central results that follows
from Eq.~(\ref{eq:Tr-condition}). 
\begin{obs} \label{obs:IC-POVM}
\textup{[No-go theorem for informationally-complete POVMs (IC-POVM)] 
For $s\ge2$ and non-singular QFIM, no IC-POVM can saturate the multiparameter
QCRB.}
\end{obs}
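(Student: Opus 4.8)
The plan is to convert informational completeness into a linear-algebraic statement about spanning, and then collide it with the geometric criterion of Eq.~(\ref{eq:Tr-condition}). Suppose, for contradiction, that an IC-POVM $\{E_\omega\}$ saturates the QCRB. By the geometric interpretation of the Hollowization Theorem~\ref{thm:Simultaneous=000020Hollowization=000020Theorem}, saturation forces every POVM element to lie in $\mathcal{V}_\perp$; for higher-rank elements this follows from Eq.~(\ref{eq:Tr-condition}) directly, or equivalently by refining each element into optimal rank-one pieces, which only enlarges (never shrinks) the span and hence preserves informational completeness. Therefore $\mathrm{span}_{\mathbb{R}}\{E_\omega\}\subseteq\mathcal{V}_\perp$. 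But \emph{informational completeness} means exactly $\mathrm{span}_{\mathbb{R}}\{E_\omega\}=\mathrm{Herm}(d)$, so $\mathrm{Herm}(d)\subseteq\mathcal{V}_\perp$, and since $\mathcal{V}_\perp\subseteq\mathrm{Herm}(d)$ always, we get $\mathcal{V}_\perp=\mathrm{Herm}(d)$, i.e. $\mathcal{V}=\{0\}$. The whole statement thus reduces to showing that $\mathcal{V}=\{0\}$ cannot coexist with a non-singular QFIM once $s\ge2$.

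First I would read off the consequences of $\mathcal{V}=\{0\}$. Because $\mathcal{V}$ is the real span of the generators $\mathrm{i}W^{(\alpha)}_{ij,ab}$ and $\mathrm{i}M^{(\alpha)}_{i,ab}$, its vanishing forces each generator to vanish; I only need the $M$-part, namely $M_{i,ab}=[L_i,P_{ab}]=0$ for all $i,a,b$. Summing the diagonal cases gives $[L_i,\sum_a P_{aa}]=[L_i,\Pi]=0$, where $\Pi=\sum_a\ket{\psi_{a\bm{\lambda}}}\bra{\psi_{a\bm{\lambda}}}$ projects onto $\mathrm{supp}(\rho_{\bm{\lambda}})$, so each $L_i$ is block-diagonal with respect to $\mathcal{H}=\mathrm{supp}(\rho_{\bm{\lambda}})\oplus\ker(\rho_{\bm{\lambda}})$. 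Restricting $[L_i,P_{ab}]=0$ to the support and noting that the $P_{ab}$ act there as a complete set of matrix units spanning the full algebra $\mathrm{M}_r(\mathbb{C})$, whose commutant is only the scalars, I conclude that $L_i$ acts on the support as $c_i\mathbb{I}$ with $c_i=\braket{\psi_{a\bm{\lambda}}|L_i|\psi_{a\bm{\lambda}}}\in\mathbb{R}$ independent of $a$.

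The final step is the QFIM computation under this rigidity. Since $L_i\ket{\psi_{a\bm{\lambda}}}=c_i\ket{\psi_{a\bm{\lambda}}}$ for every eigenstate, summing Eq.~(\ref{eq:Refined-Helstrom})'s quantum part over $\omega$ (using $\sum_\omega E_\omega=\mathbb{I}$) yields $[F^Q]_{ij}=\mathrm{Re}\,\mathrm{Tr}(\rho_{\bm{\lambda}}L_iL_j)=\sum_a p_{a\bm{\lambda}}\,c_ic_j=c_ic_j$. Hence $F^Q=\bm{c}\,\bm{c}^{\mathrm{T}}$ with $\bm{c}=(c_1,\dots,c_s)$ is a rank-one $s\times s$ matrix, which is singular for $s\ge2$. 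This contradicts the assumed non-singularity of the QFIM, so no IC-POVM can saturate the multiparameter QCRB.

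I expect the only delicate point to be the rigidity argument of the middle paragraph: showing that commuting with $\Pi$ pins each $L_i$ to block-diagonal form, and that commuting with all matrix units $P_{ab}$ pins its support block to a scalar. The remaining ingredients — that the collapse of $\mathcal{V}$ to $\{0\}$ is equivalent to $\mathcal{V}_\perp=\mathrm{Herm}(d)$, and that refining a higher-rank IC-POVM into optimal rank-one elements leaves it informationally complete so the rank-one criterion applies verbatim — are routine but should be stated explicitly to make the contradiction airtight.
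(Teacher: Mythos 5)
Your proof is correct and follows essentially the same route as the paper: informational completeness forces all the $M_{i,ab}$ to vanish, which pins each $L_i$ to a real multiple of the support projector, making the QFIM the rank-one matrix $c_ic_j$ and contradicting non-singularity for $s\ge2$. The only difference is cosmetic — your middle "rigidity" step (block-diagonality from $[L_i,\Pi]=0$ plus the trivial commutant of the matrix units $P_{ab}$ on the support) is an inline, elementary rederivation of the paper's Theorem~\ref{thm:vanishing-M-condition}, which reaches the same conclusion via Schur's lemma for the irreducible representation of $\mathfrak{u}(r)$.
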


The informationally-complete (IC-POVM) or tomographically complete POVM is the rank-one POVM with $|\Omega|=d^{2}$. In this case, $\{E_{\omega}\}_{\omega=1}^{|\Omega|}$ spans
the real vector space of Hermitian operators denoted as $\mathrm{Herm}(d)$,
isomorphic to the Lie algebra $\mathfrak{u}(d)$.
For proof of above theorem, see the End Matter. For single parameter estimation where $s=1$, IC-POVM can saturate
the QFI only for states with particular property, see details in Theorem~\ref{thm:vanishing-M-condition}
in the Supplemental Material.
This observation is particularly relevant because meaningful quantum sensing tasks also involve non-singular QFIM with more than one estimation parameters.

Next we discuss the saturation of the QCRB, when PCC is satisfied. In this case, we know $\mathbb{I}\in\mathcal{V}_{\perp}$
and the remaining orthogonal basis must be traceless.
Thus, without
loss of generality, we can choose the basis to be 
\begin{equation}
T_{0}=\mathbb{I},\,\mathrm{Tr}T_{k}=0\,(k\ge1),\,\mathrm{Tr}(T_{k}T_{l})=d\delta_{kl}.\label{eq:su-alg-norm}
\end{equation}
Operationally, this basis can be found via the Gram-Schmidt orthogonalization
process under the trace inner product, starting from the set of matrices
$\{W_{{i}j,\,{a}b}^{(\alpha)},\,M_{{i},\,{a}b}^{(\alpha)}\}$.
We denote the linear subspace of the $\mathfrak{su}(d)$ Lie algebra $\mathcal{N}\equiv\mathrm{span}_{\mathbb{R}}\{\mathrm{i}T_{k}\}_{k\ge1}$
and the dimension of $\mathcal{N}$ as $n$. Clearly, $n=\dim\mathcal{V}_{\perp}-1\leq d^{2}-1$, where the upper bound can be reached if all the $W$ and $M$ matrices vanish.

\begin{obs} \label{obs:=000020dim-bound}
\textup{If $n<d-1$, i.e., $\dim\mathcal{V}_{\perp}<d$,
the QCRB cannot be saturated.}
\end{obs}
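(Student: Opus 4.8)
The plan is to establish the contrapositive: I will assume the QCRB is saturated by some complete rank-one POVM and deduce that necessarily $\dim\mathcal{V}_{\perp}\geq d$ (equivalently $n\geq d-1$). The entry point is the geometric saturation condition derived above in Eq.~(\ref{eq:Tr-condition}), which states that every optimal rank-one element $E_{\omega}=\ket{\pi_{\omega}}\bra{\pi_{\omega}}$ must lie in $\mathcal{V}_{\perp}$. Hence the real span $U\equiv\mathrm{span}_{\mathbb{R}}\{E_{\omega}\}$ obeys $U\subseteq\mathcal{V}_{\perp}$, and it suffices to prove the dimensional lower bound $\dim U\geq d$, since then $\dim\mathcal{V}_{\perp}\geq\dim U\geq d$.

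First I would exploit the completeness relation $\sum_{\omega}E_{\omega}=\mathbb{I}$ to show that the measurement vectors $\{\ket{\pi_{\omega}}\}$ must span the whole Hilbert space $\mathbb{C}^{d}$: if some nonzero $\ket{\phi}$ were orthogonal to all $\ket{\pi_{\omega}}$, then $\braket{\phi|\phi}=\bra{\phi}\mathbb{I}\ket{\phi}=\sum_{\omega}|\braket{\phi|\pi_{\omega}}|^{2}=0$, a contradiction. Consequently one can select $d$ of the vectors, say $\ket{\pi_{1}},\dots,\ket{\pi_{d}}$, that form a (generally nonorthogonal) basis of $\mathbb{C}^{d}$.

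The key linear-algebraic step is then to show that the associated rank-one operators $E_{1},\dots,E_{d}$ are linearly independent in $\mathrm{Herm}(d)$. I would prove this by introducing the dual basis $\{\bra{w_{k}}\}$ defined by $\braket{w_{k}|\pi_{l}}=\delta_{kl}$, for which $\bra{w_{k}}E_{l}\ket{w_{k}}=|\braket{w_{k}|\pi_{l}}|^{2}=\delta_{kl}$. Applying the functional $A\mapsto\bra{w_{k}}A\ket{w_{k}}$ to a putative real relation $\sum_{l}c_{l}E_{l}=0$ immediately yields $c_{k}=0$ for each $k$, so the $E_{l}$ are independent and $\dim U\geq d$. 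By contraposition, $\dim\mathcal{V}_{\perp}<d$ (i.e. $n<d-1$) precludes the existence of any complete rank-one POVM confined to $\mathcal{V}_{\perp}$, and thus the QCRB cannot be saturated.

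I expect the essential point to be conceptual rather than technical: the crux is recognizing that it is the normalization/completeness constraint of the POVM that forces a full $d$-dimensional footprint inside $\mathcal{V}_{\perp}$, whereas the remaining ingredients (spanning of the measurement vectors and independence of the basis projectors) are elementary. I would also remark that this argument uses only the geometric confinement $E_{\omega}\in\mathcal{V}_{\perp}$ and is therefore independent of whether the PCC holds; in the PCC regime, where $\mathbb{I}\in\mathcal{V}_{\perp}$ and $n=\dim\mathcal{V}_{\perp}-1$, the bound simply translates into the stated $\mathfrak{su}(d)$ form.
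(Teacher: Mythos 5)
Your proposal is correct and follows essentially the same route as the paper: completeness of the rank-one POVM forces at least $d$ linearly independent elements $E_{\omega}$, all of which must lie in $\mathcal{V}_{\perp}$ by the geometric saturation condition, so $\dim\mathcal{V}_{\perp}\geq d$. The only difference is that you explicitly justify the linear-independence step (via the spanning argument and the dual basis $\bra{w_{k}}$), which the paper simply asserts.
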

The proof is given in the End Matter. Geometrically, this means that the subspace $\mathcal{V}_{\perp}$ depicted in Fig.~\ref{fig:cen-geo}(b) is below the critical dimension for the rank-one POVM to be complete. Note that Observation~\ref{obs:=000020dim-bound} holds, independent of whether PCC holds or not. This implies that PCC is not sufficient for the saturation of the QCRB in general. 

\begin{thm}
\textup{
[Outcome-wise saturability criterion] 
\label{thm:outcome-wise=000020sat}
When PCC is satisfied, a rank-one projector $\Pi_{\omega}=\ket{\pi_{\omega}}\bra{\pi_{\omega}}$
saturates the outcome-wise QCRB, i.e., $F_{\omega}^{C}=F_{\omega}^{Q}$
if and only if
\begin{equation}
\Pi_{\omega}=\frac{1}{d}(\mathbb{I}-\bm{v}^{(\omega)}\cdot\bm{T}),\label{eq:opt-E-struct}
\end{equation}
where $\bm{v}^{(\omega)}$ is a $n$-dimensional non-zero real vector
such that the spectrum of the $d\times d$ traceless Hermitian matrix
$\bm{v}^{(\omega)}\cdot\bm{T}\equiv\sum_{k=1}^{n}v_{k}^{(\omega)}T_{k}$
are $(d-1)$-fold degenerate with degenerate eigenvalue $1$. }
\end{thm}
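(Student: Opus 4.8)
The plan is to reformulate the outcome-wise saturation condition geometrically using the structure already developed in Eq.~(\ref{eq:Tr-condition}) and the trace-orthogonal basis Eq.~(\ref{eq:su-alg-norm}). When PCC holds, every traceless operator $\mathrm{i}W^{(\alpha)}_{ij,ab}$ and $\mathrm{i}M^{(\alpha)}_{i,ab}$ lies in the span $\mathcal{V}\subseteq\mathfrak{su}(d)$, so the orthogonal-complement condition $\Pi_\omega\in\mathcal{V}_\perp$ is equivalent to requiring that $\Pi_\omega$ have no component along $\mathcal{N}=\mathrm{span}_\mathbb{R}\{\mathrm{i}T_k\}_{k\ge1}$. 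First I would expand a generic rank-one Hermitian operator in the full orthonormal basis $\{T_0=\mathbb{I}, T_1,\dots,T_{d^2-1}\}$: writing $\Pi_\omega=\tfrac{1}{d}\sum_{\mu=0}^{d^2-1}c_\mu T_\mu$, the normalization $\mathrm{Tr}\,\Pi_\omega=1$ fixes $c_0=1$, and the saturation condition Eq.~(\ref{eq:Tr-condition}) forces $c_k=0$ for all $T_k$ whose span contains $\mathcal{N}$, i.e.\ the component of $\Pi_\omega$ orthogonal to $\mathcal{V}_\perp$ vanishes. Thus $\Pi_\omega$ is built only from $\mathbb{I}$ and the surviving directions, which after relabeling gives exactly the form $\Pi_\omega=\tfrac{1}{d}(\mathbb{I}-\bm{v}^{(\omega)}\cdot\bm{T})$ for some real vector $\bm{v}^{(\omega)}$ supported on the $n$ directions spanning $\mathcal{N}$.

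The second half of the argument is to pin down which vectors $\bm{v}^{(\omega)}$ actually yield a rank-one projector, as opposed to a merely traceless Hermitian operator of the correct form. Here I would impose the defining algebraic property of a rank-one projector, $\Pi_\omega^2=\Pi_\omega$ together with $\mathrm{Tr}\,\Pi_\omega=1$, which is equivalent to $\Pi_\omega$ having eigenvalues $1$ once and $0$ with multiplicity $d-1$. Substituting Eq.~(\ref{eq:opt-E-struct}), the spectrum of $\Pi_\omega$ is $\tfrac{1}{d}(1-\mu_j)$ where $\mu_j$ are the eigenvalues of the traceless Hermitian matrix $\bm{v}^{(\omega)}\cdot\bm{T}$. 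Demanding the eigenvalue $0$ with multiplicity $d-1$ forces $\mu_j=1$ for $d-1$ of the $j$, and the single remaining eigenvalue is then fixed by tracelessness of $\bm{v}^{(\omega)}\cdot\bm{T}$ to be $-(d-1)$, which indeed reproduces the eigenvalue $1$ of $\Pi_\omega$ in the nondegenerate slot, namely $\tfrac{1}{d}(1-(-(d-1)))=1$. This establishes the claimed spectral requirement: $\bm{v}^{(\omega)}\cdot\bm{T}$ must have a $(d-1)$-fold degenerate eigenvalue equal to $1$. The nonzero requirement on $\bm{v}^{(\omega)}$ comes from the assumption $F^Q_\omega\neq0$, since $\bm{v}^{(\omega)}=0$ would give $\Pi_\omega=\mathbb{I}/d$, which is not rank-one for $d\ge2$.

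For the converse, I would simply verify that any $\Pi_\omega$ of the stated form with a $(d-1)$-fold degenerate eigenvalue $1$ of $\bm{v}^{(\omega)}\cdot\bm{T}$ is automatically a rank-one projector (by reversing the spectral computation above) and automatically lies in $\mathcal{V}_\perp$ (since $\bm{v}^{(\omega)}\cdot\bm{T}\in\mathcal{N}\subseteq\mathcal{V}_\perp\ominus\mathbb{R}\mathbb{I}$ by construction), hence by Theorem~\ref{thm:Simultaneous=000020Hollowization=000020Theorem} and Eq.~(\ref{eq:Tr-condition}) it saturates the outcome-wise bound $F^C_\omega=F^Q_\omega$.

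The step I expect to be the main obstacle is the clean identification of the effective basis on which $\bm{v}^{(\omega)}$ is supported. The subtlety is that $\mathcal{V}_\perp$ contains $\mathbb{I}$ plus a $(n'=\dim\mathcal{V}_\perp-1)$-dimensional traceless subspace, and one must argue that the relevant directions for $\Pi_\omega$ are precisely those in $\mathcal{N}$ rather than in some larger or orthogonally-chosen set; this requires care because $\mathcal{N}$ was \emph{defined} as the orthogonal complement span built by Gram--Schmidt, so I would need to confirm that the rank-one projector constraint $\Pi_\omega\in\mathcal{V}_\perp$ is genuinely equivalent to expanding $\bm{v}^{(\omega)}\cdot\bm{T}$ over the $T_k$ spanning $\mathcal{N}$ and not inadvertently over the complementary traceless directions of $\mathcal{V}$. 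Once that bookkeeping is fixed, the spectral argument is routine linear algebra, and the degeneracy condition falls out immediately from requiring rank one.
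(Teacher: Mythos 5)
Your proposal is correct and follows essentially the same route as the paper's proof: both use the geometric condition Eq.~(\ref{eq:Tr-condition}) to place $\Pi_\omega$ in $\mathcal{V}_\perp$, expand it over $\{\mathbb{I},T_1,\dots,T_n\}$, fix the identity coefficient to $1$ by tracelessness of the $T_k$, and read off the $(d-1)$-fold degenerate eigenvalue $1$ of $\bm{v}^{(\omega)}\cdot\bm{T}=\mathbb{I}-d\ket{\pi_\omega}\bra{\pi_\omega}$ from the rank-one projector property, with the converse following by direct verification of Eq.~(\ref{eq:Tr-condition}).
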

The proof is presented in the End Matter. The outcome-wise saturability for generic rank-one
POVM is trivial since it can be always renormalized to a rank-one
projector through $\Pi_{\omega}=E_{\omega}/\mathrm{Tr}E_{\omega}$.
Theorem~\ref{thm:outcome-wise=000020sat} immediately leads to an 
efficient numerical algorithm to search for optimal measurements, see the End Matter. Furthermore, by exploring the structure of the optimal rank-one projectors
giving (\ref{eq:opt-E-struct}), we can show more quantitatively when
PCC becomes sufficient:
\begin{thm}\label{thm:suff-ineq}
\textup{\label{thm:sufficiency} When $\dim\mathcal{V}_{\perp}$ is
large enough, more precisely, 
\begin{equation}
n\ge\max_{\mu\in[1,2\,\cdots d-2]}2\mu(d+1/2-\mu)+(d-2)\label{eq:suff-ineq}
\end{equation}
is satisfied, PCC is necessary and sufficient for the saturation of
the Helstrom bound. In particular, PCC ensures the existence of an
optimal rank-one \textit{projective} measurements, which can be constructed
iteratively.
}
\end{thm}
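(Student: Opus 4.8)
The plan is to prove the sufficiency direction constructively, turning the claim into a statement about the geometry of the subspace $\mathcal{N}\subseteq\mathfrak{su}(d)$. By Theorem~\ref{thm:outcome-wise=000020sat}, a rank-one projector $\Pi_\omega=\ket{\pi_\omega}\bra{\pi_\omega}$ saturates the outcome-wise bound iff $\mathbb{I}-d\Pi_\omega\in\mathcal{N}$, i.e. iff $\braket{\pi_\omega|Y|\pi_\omega}=0$ for every $Y\in\mathcal{V}$. A projective measurement $\{\Pi_\omega\}_{\omega=1}^d$ is therefore optimal precisely when its orthonormal eigenbasis $\{\ket{\pi_\omega}\}$ simultaneously hollowizes $\mathcal{V}$. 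First I would record the clean reformulation this yields: the $d$ traceless operators $\mathbb{I}-d\Pi_\omega$ commute, sum to zero, and span the Cartan subalgebra $\mathfrak{h}$ of $\mathfrak{su}(d)$ diagonal in that basis; since $\mathcal{N}$ is linear, they all lie in $\mathcal{N}$ iff $\mathfrak{h}\subseteq\mathcal{N}$. Hence, under PCC (which guarantees $\mathbb{I}\in\mathcal{V}_\perp$ and $\mathcal{V}$ traceless), the existence of an optimal projective measurement is equivalent to $\mathcal{N}$ containing a Cartan subalgebra. This already explains Observation~\ref{obs:=000020dim-bound}: a Cartan has real dimension $d-1$, so $n\ge d-1$ is necessary.

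Next I would construct such a Cartan iteratively, building its eigenbasis one vector at a time. Suppose orthonormal $\ket{\pi_1},\dots,\ket{\pi_m}$ have been found with $\braket{\pi_j|Y|\pi_j}=0$ for all $j\le m$ and $Y\in\mathcal{V}$. On the orthogonal complement $H_m$, of dimension $q=d-m$, the compressions $Y\mapsto P_{H_m}YP_{H_m}$ form a real subspace $\mathcal{V}_m\subseteq\mathrm{Herm}(H_m)$; because all previously fixed diagonal entries and $\mathrm{Tr}\,Y$ vanish, every element of $\mathcal{V}_m$ is traceless on $H_m$. The step succeeds if there is a unit vector in $H_m$ isotropic for all of $\mathcal{V}_m$. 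Crucially, the final vector is free: once the first $d-1$ diagonal entries of each traceless $Y$ vanish, so does the last, so only $d-1$ vectors must be produced.

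The heart of the argument is the per-step existence, which I would phrase as: the joint numerical range $\{(\braket{v|X_1|v},\dots,\braket{v|X_p|v}):\|v\|=1\}\subseteq\mathbb{R}^p$ of any basis $X_1,\dots,X_p$ of $\mathcal{V}_m$ contains the origin. Tracelessness is the key leverage: averaging $\braket{v|X_k|v}$ over $\mathbb{CP}^{q-1}$ gives $\mathrm{Tr}(X_k)/q=0$, so the origin is the centroid of the joint numerical range. When the range is convex --- e.g. via the Toeplitz--Hausdorff theorem and its higher generalizations --- the centroid lies inside it and a common isotropic vector exists; outside the convex regime I would close the gap with a topological degree / Borsuk--Ulam argument for the $U(1)$-invariant quadratic map $\mathbb{CP}^{q-1}\to\mathbb{R}^p$, which forces a zero once $p$ stays below the feasibility threshold for dimension $q$. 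Tracking how $\dim\mathcal{V}_m$ is bounded across the stages (by both $\dim\mathcal{V}=d^2-1-n$ and by $q^2-1$) and optimizing the budget over the critical intermediate stage --- whose block size / intermediate-operator eigenvalue multiplicity I parametrize by $\mu$, with $2\mu(d-\mu)$ the dimension of the corresponding adjoint orbit of type $(\mu,d-\mu)$ and the correction $\mu+(d-2)$ coming from the orthogonality (frame) constraints --- yields exactly the stated bound $n\ge\max_\mu[2\mu(d-\mu)+\mu+(d-2)]=\max_\mu 2\mu(d+\tfrac12-\mu)+(d-2)$. Necessity of PCC is already in hand from the earlier discussion, so combining the two gives the necessary-and-sufficient conclusion.

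The hard part will be the per-step existence: for a generic space of many traceless forms the joint numerical range is not convex, so establishing that the origin still belongs to it --- uniformly over all admissible $\mathcal{V}_m$ --- is where the real work lies, and it is exactly this uniformity that produces the explicit quadratic-in-$\mu$ threshold rather than a naive transversal-dimension count such as $p\le 2q-2$. A secondary subtlety is showing the iteration can be steered so that the compressed dimension $\dim\mathcal{V}_m$ stays under the threshold at every stage (not merely at $q=d$), since otherwise a late small-$q$ stage could obstruct completion; handling this requires using the freedom in choosing the earlier $\ket{\pi_j}$ and is what makes the optimization over $\mu$ unavoidable.
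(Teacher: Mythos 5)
Your reformulation (optimal projective measurement $\Leftrightarrow$ $\mathcal{N}$ contains a Cartan subalgebra of $\mathfrak{su}(d)$, with the last diagonal entry coming for free by tracelessness) is correct and is a nice repackaging of Theorem~\ref{thm:outcome-wise=000020sat}, but the proof has a genuine gap exactly where you flag it. Your per-step existence claim --- that the joint numerical range of the compressed traceless operators on $H_m$ contains the origin --- is only guaranteed by convexity for at most two (or, for $q\ge3$, three) Hermitian forms; in the relevant regime $\dim\mathcal{V}_m$ can be of order $d^2-n$, the joint numerical range is generically non-convex, and ``the origin is the centroid'' does not imply ``the origin is attained.'' The Borsuk--Ulam/degree argument you invoke to cover the non-convex case is not spelled out, and it is precisely the step that would have to produce the quadratic-in-$\mu$ threshold. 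Relatedly, the bound~(\ref{eq:suff-ineq}) is asserted rather than derived: your own parametrization (isotropic unit vector in $H_m$ versus $p=\dim\mathcal{V}_m$ constraints) naturally yields a count linear in $q=d-m$, and the passage from that to $2\mu(d-\mu)+\mu+(d-2)$ via ``adjoint-orbit dimensions'' and ``frame constraints'' is not carried out.

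For contrast, the paper parametrizes the unknown differently: at stage $\mu+1$ it searches over the coefficient vector $\bm{v}^{(\mu+1)}\in\mathbb{R}^{n-\mu}$ in the orthogonal complement of $\mathrm{span}_{\mathbb{R}}\{\Pi_q\}_{q=1}^{\mu}$ inside $\mathcal{V}_\perp$, and demands that $\bm{v}^{(\mu+1)}\cdot\bm{T}^{(\mu+1)}$ have kernel $\mathrm{span}_{\mathbb{C}}\{\ket{\pi_q}\}_{q=1}^{\mu}$ and a $(d-\mu-1)$-fold eigenvalue $1$. Counting these spectral requirements as at most $2\mu(d-\mu)+\mu-1$ plus $d-\mu-1$ real constraints against $n-\mu$ free parameters gives~(\ref{eq:suff-ineq}) directly. (That count is itself a transversality-style argument rather than an existence theorem for the resulting nonlinear system, so your route, if completed, could actually be the more rigorous one --- but as written the decisive lemma is missing.)
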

We prove by constructing the optimal measurements explicitly, see the End Matter and the Supplemental Material for schematica detailed proofs, respectively. A few comments in order: (a) We note that if an optimal rank-one
projective measurement exists, it can be always constructed via the
iterative procedure we presented in the proof of Theorem~\ref{thm:sufficiency}.
(b) It is expected that the lower bound on $n$ in inequality~(\ref{eq:suff-ineq})
above which PCC becomes sufficient may be further refined if the particular
structures or symmetries of $\rho_{\bm{\lambda}}$ and the SLD operators
are exploited. 

When $d\gg r,s$ and $n\gtrsim d^{2}$ while the maximum of r.h.s. of
(\ref{eq:suff-ineq}) scales $d^{2}/2$. Thus the inequality (\ref{eq:suff-ineq})
holds for large dimension of Hilbert space and we have the following observation:
\begin{obs}
\textup{[Asymptotic sufficiency of PCC] When $d\gg r,\,s$, i.e.,
the dimension of the Hilbert space is much larger than the rank of the density
operator and the number of estimation parameters, PCC becomes necessary
and sufficient for the saturation of the QCRB.
In particular, if, for continuous-variable systems, $\rho_{\bm{\lambda}}$ is of finite rank and contains a finite number of estimation parameters, then a finite-dimensional $d$-Lie algebra which contains $\mathcal{V}$ as a subspace can be always constructed with arbitrary large $d$~\footnote{More precisely, we consider the $\mathfrak{su}(d)$ Lie algebra is defined over on the $d$-dimensional Hilbert space containing the subspace $\mathrm{span}_\mathbb{C}\{\ket{\psi_{\bm{a\lambda}}},\ket{\partial_i \psi_{\bm{a\lambda}}}\}_{a,\,i}$.}. 
As such, PCC is always
necessary and sufficient for estimation of finite-rank density matrix
with finite number of parameters.}
\end{obs}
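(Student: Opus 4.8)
The plan is to read the asymptotic statement off the quantitative criterion of Theorem~\ref{thm:sufficiency}, using only the fixed dimension bound $\dim\mathcal{V}\le s(s+1)r^2/2$. First I would estimate the right-hand side of~(\ref{eq:suff-ineq}). Relaxing $\mu$ to a continuous variable, $g(\mu)=2\mu(d+1/2-\mu)$ is a concave parabola maximized at $\mu^\ast=(2d+1)/4$, so the integer maximum is at most $g(\mu^\ast)=(2d+1)^2/8=d^2/2+d/2+1/8$; adding the offset $(d-2)$, the right-hand side of~(\ref{eq:suff-ineq}) is at most $d^2/2+3d/2-15/8$, i.e.\ it scales as $d^2/2$. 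On the other hand, when PCC holds we have $\mathbb{I}\in\mathcal{V}_\perp$, whence $n=\dim\mathcal{V}_\perp-1=d^2-\dim\mathcal{V}-1\ge d^2-s(s+1)r^2/2-1$. Because the subtracted term is a constant fixed by $r$ and $s$, the gap between $n$ and the right-hand side of~(\ref{eq:suff-ineq}) grows as $d^2/2-3d/2+O(1)$ and is eventually positive; solving the resulting quadratic yields an explicit threshold $d_0(r,s)=O(rs)$ such that~(\ref{eq:suff-ineq}) holds for all $d\ge d_0(r,s)$. Theorem~\ref{thm:sufficiency} then certifies that PCC is necessary and sufficient, with an optimal rank-one projective POVM guaranteed to exist, which settles the claim in finite dimension.

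The second step handles the continuous-variable case, where the ambient space is infinite-dimensional but $\rho_{\bm{\lambda}}$ still has finite rank $r$ and the number of parameters $s$ is finite. The key point is that the saturation problem only involves the finite-dimensional subspace $\mathcal{H}_{\mathrm{rel}}\equiv\mathrm{span}_{\mathbb{C}}\{\ket{\psi_{a\bm{\lambda}}},\ket{\partial_i\psi_{a\bm{\lambda}}}\}_{a,i}$, of dimension at most $r(1+s)$. For a constant-rank family one may solve the SLD equation block-wise with respect to the decomposition $\mathrm{supp}(\rho_{\bm{\lambda}})\oplus\mathrm{supp}(\rho_{\bm{\lambda}})^\perp$ and fix the residual gauge freedom by taking the block of $L_i$ acting within $\mathrm{supp}(\rho_{\bm{\lambda}})^\perp$ to vanish; the resulting $L_i$ is then supported on $\mathcal{H}_{\mathrm{rel}}$. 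Consequently $P_{ab}$, $M_{i,ab}$, $W_{ij,ab}$, and hence the entire span $\mathcal{V}$, are supported on $\mathcal{H}_{\mathrm{rel}}$ as well.

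I would then embed $\mathcal{H}_{\mathrm{rel}}$ into a $d$-dimensional subspace $\mathcal{H}_d$ with $d$ chosen arbitrarily large, in particular $d\ge d_0(r,s)$. Since the $W$ and $M$ matrices are fixed operators merely extended by zero, $\dim\mathcal{V}$ is independent of the embedding, so the bound $n\ge d^2-s(s+1)r^2/2-1$ and hence~(\ref{eq:suff-ineq}) persist, and the first step produces an optimal rank-one projective POVM $\{\Pi_\omega\}$ on $\mathcal{H}_d$. To lift it to the full space I would append the complementary projector $\mathbb{I}-\Pi_{\mathcal{H}_d}$ onto $\mathcal{H}_d^\perp$, resolved into rank-one pieces $\ket{\pi_\omega}\bra{\pi_\omega}$ with $\ket{\pi_\omega}\in\mathcal{H}_d^\perp$. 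Each such element is a null operator, and since $L_i$ annihilates $\mathcal{H}_d^\perp\subseteq\mathcal{H}_{\mathrm{rel}}^\perp$ we have $\braket{\psi_{a\bm{\lambda}}|L_i|\pi_\omega}=0$, so both sides of the null condition~(\ref{eq:Opt-Null}) vanish identically; these outcomes are therefore optimal while contributing nothing to the Fisher information. Together with the necessity of PCC established earlier, this shows PCC is necessary and sufficient for any finite-rank state with finitely many parameters.

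The main obstacle I anticipate is not the scaling estimate, which is elementary, but the infinite-dimensional reduction: one must justify that the SLDs admit a choice supported on the finite-dimensional $\mathcal{H}_{\mathrm{rel}}$ (resting on the constant-rank assumption, so that the kernel block of $\partial_i\rho_{\bm{\lambda}}$ vanishes), confirm that $\dim\mathcal{V}$ is genuinely independent of $d$ under the embedding, and verify that resolving the complementary null projector into rank-one elements preserves the outcome-wise saturation of~(\ref{eq:Refined-Helstrom}) across the entire Hilbert space.
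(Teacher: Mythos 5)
Your proposal is correct and follows essentially the same route as the paper: bound the right-hand side of~(\ref{eq:suff-ineq}) by its parabolic maximum $\sim d^2/2$, use $\dim\mathcal{V}\le s(s+1)r^2/2$ (a constant in $d$) to get $n\ge d^2-O(1)$, and invoke Theorem~\ref{thm:sufficiency}; for the continuous-variable case the paper likewise restricts to a $d$-dimensional space containing $\mathrm{span}_{\mathbb{C}}\{\ket{\psi_{a\bm{\lambda}}},\ket{\partial_i\psi_{a\bm{\lambda}}}\}_{a,i}$ with $d$ arbitrarily large. Your write-up is in fact more careful than the paper's two-sentence remark, notably in verifying that the SLDs (with the gauge $L_i^{\mathrm{kk}}=0$) and hence $\mathcal{V}$ are supported on that finite-dimensional subspace, and that the complementary null rank-one elements trivially satisfy the outcome-wise saturation conditions.
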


\textit{{Example: Quasi-pure States.}}\textit{---} One might expect that quantum states satisfying PCC may be complicated and not general enough to be of metrological interest. However, it was shown in the context of postselected quantum metrology ~\cite{yang2024purestate,yang2024theoryof} that for quasi-pure states, which behaves very much like pure states, PCC becomes simple and resembles the one for pure states. In particular, we consider the following class of quasi-pure states generated via the classical correlations between a primary system and an ancilla:
\begin{equation}\label{eq:QP-bipartite}
    \rho_{\bm{\lambda}}=(U_{\bm{\lambda}}\otimes\mathbb{I})\rho_{0}(U_{\bm{\lambda}}^{\dagger}\otimes\mathbb{I})=\sum_{a=1}^{r}q_{a}\ket{\phi_{a\bm{\lambda}}}\bra{\phi_{a\bm{\lambda}}}\otimes|a\rangle\langle a|,
\end{equation}
where $\rho_{0}=\sum_{a=1}^{r}q_{a}|\phi_{a}\rangle\langle\phi_{a}|\otimes|a\rangle\langle a|$, $\ket{\phi_{a\bm{\lambda}}}\equiv U_{\bm{\lambda}}\ket{\phi_a}$, $\ket{a}$ are orthonormal while $\ket{\phi_a}$ is normalized but not necessarily orthogonal to each other.
Remarkably, it was shown in Ref.~\citep{yang2024purestate} that the PCC for states given by Eq.~\eqref{eq:QP-bipartite} reduces to 
\begin{equation}
\mathrm{Im}\braket{D_{i}\phi_{a\bm{\lambda}}|D_{j}\phi_{a\bm{\lambda}}}=0,\,\forall i,\,j,\,a,\label{eq:QP-PCC}
\end{equation}
where $\ket{D_{i}\psi_{a\bm{\lambda}}}$ is the covariant derivative
defined as $\ket{D_{i}\phi_{a\bm{\lambda}}}\equiv (\mathbb{I}-\ket{\phi_{a\bm{\lambda}}}\bra{\phi_{a\bm{\lambda}}})\ket{\partial_{i}\phi_{a\bm{\lambda}}}$. It can be readily checked that, unless for full rank states, the PCC also holds
while the SLD does not mutually commute in general~\cite{yang2024purestate}. 
Thanks to the Hollowization theorem~\ref{thm:Simultaneous=000020Hollowization=000020Theorem}, we can show that
\textup{If the PCC~\eqref{eq:QP-PCC} is satisfied, the measurement incompatibility between different eigenstates disappears. Furthermore, the QCRB for the state given by Eq.~\eqref{eq:QP-bipartite} can be saturated by the local measurements with classical communications (LMCC) $\{\ket{\pi_{\nu,a}}=\ket{e^{(a)}_{\nu}}\otimes \ket{a}\}$, where $\{\ket{e^{(a)}_{\nu}}\}$ is the optimal measurement basis for the saturating the QCRB of each of the pure state $\{\ket{\phi_{a\bm{\lambda}}}\}$, which can be constructed systemmatically Refs.~\citep{matsumoto2002anew} and ~\citep{pezz`e2017optimal}.}
Furthermore, if $U_{\bm{\lambda}}$ in Eq.~\eqref{eq:QP-bipartite} is generated by commuting Hamiltonian, i.e., $U_{\bm{\lambda}}=\mathrm{e}^{-\mathrm{i}\sum_{j}\lambda_{j}H_{j}}$ with $[H_{j},\,H_{k}]=0,\forall j,\,k$, the QCRB for the state is always saturable for all initial state $\rho_0$.
The proofs of these results, and their applications to a magnetometry with two-qubit primary system and a qubit ancilla can be found in the End Matter. It is worth noting that for generic quasi-pure states, whether PCC is sufficient or not is still open, as conjectured in Ref.~\cite{yang2024purestate}.

\textit{{Discussion and Conclusion.}}\textit{---}
We solve the open question on single-copy optimal measurement in multi-parameter
quantum metrology when PCC holds. There could be several ramifications of our results.
First, even when PCC fails, our algorithm can still be used to construct rank-one measurement bases that saturate the QCRB outcome-wise, although they may not satisfy the POVM completeness (closure) relation.
This observation suggests a systematic route to practical approximations of the QCRB, which may be of broad interest for applications.
Second, our work motivates a more comprehensive mathematical study of simultaneous hollowization for sets of traceless matrices, a topic that appears largely unexplored in the current literature.
Progress along these lines would sharpen our understanding of measurement incompatibility in multiparameter sensing and, more broadly, of the geometric constraints imposed by quantum theory.

\textit{{Acknowledgement.}}\textit{---} We thank Wenquan Liu, Yang Wu, Liang Xu for discussions on the experimental relevance of quasi-pure state estimation. J.Y. acknowledges support from Zhejiang University start-up grants, Zhejiang Key Laboratory of R\&D and Application of Cutting-edge Scientific Instruments, and Wallenberg Initiative on Networks and Quantum Information (WINQ). S.I. acknowledges support from JST ASPIRE (JPMJAP2339). L.P. acknowledges support from the QuantERA project SQUEIS (Squeezing enhanced inertial sensing), funded by the European Union's Horizon Europe Program and the Agence Nationale de la Recherche (ANR-22-QUA2-0006).
This publication has received funding under Horizon Europe programme HORIZON-CL4-2022-QUANTUM-02-SGA via the project 10113690 PASQuanS2.1.

\let\oldaddcontentsline\addcontentsline     
\renewcommand{\addcontentsline}[3]{}         

\bibliography{Multi-Parameter-Metrology}

\begin{thebibliography}{39}%
\makeatletter
\providecommand \@ifxundefined [1]{%
 \@ifx{#1\undefined}
}%
\providecommand \@ifnum [1]{%
 \ifnum #1\expandafter \@firstoftwo
 \else \expandafter \@secondoftwo
 \fi
}%
\providecommand \@ifx [1]{%
 \ifx #1\expandafter \@firstoftwo
 \else \expandafter \@secondoftwo
 \fi
}%
\providecommand \natexlab [1]{#1}%
\providecommand \enquote  [1]{``#1''}%
\providecommand \bibnamefont  [1]{#1}%
\providecommand \bibfnamefont [1]{#1}%
\providecommand \citenamefont [1]{#1}%
\providecommand \href@noop [0]{\@secondoftwo}%
\providecommand \href [0]{\begingroup \@sanitize@url \@href}%
\providecommand \@href[1]{\@@startlink{#1}\@@href}%
\providecommand \@@href[1]{\endgroup#1\@@endlink}%
\providecommand \@sanitize@url [0]{\catcode `\\12\catcode `\$12\catcode
  `\&12\catcode `\#12\catcode `\^12\catcode `\_12\catcode `\%12\relax}%
\providecommand \@@startlink[1]{}%
\providecommand \@@endlink[0]{}%
\providecommand \url  [0]{\begingroup\@sanitize@url \@url }%
\providecommand \@url [1]{\endgroup\@href {#1}{\urlprefix }}%
\providecommand \urlprefix  [0]{URL }%
\providecommand \Eprint [0]{\href }%
\providecommand \doibase [0]{https://doi.org/}%
\providecommand \selectlanguage [0]{\@gobble}%
\providecommand \bibinfo  [0]{\@secondoftwo}%
\providecommand \bibfield  [0]{\@secondoftwo}%
\providecommand \translation [1]{[#1]}%
\providecommand \BibitemOpen [0]{}%
\providecommand \bibitemStop [0]{}%
\providecommand \bibitemNoStop [0]{.\EOS\space}%
\providecommand \EOS [0]{\spacefactor3000\relax}%
\providecommand \BibitemShut  [1]{\csname bibitem#1\endcsname}%
\let\auto@bib@innerbib\@empty
\bibitem [{\citenamefont {Helstrom}(1976)}]{helstrom1976quantum}%
  \BibitemOpen
  \bibfield  {author} {\bibinfo {author} {\bibfnamefont {C.~W.}\ \bibnamefont
  {Helstrom}},\ }\href@noop {} {\emph {\bibinfo {title} {Quantum {{Detection}}
  and {{Estimation Theory}}}}}\ (\bibinfo  {publisher} {Academic Press},\
  \bibinfo {year} {1976})\BibitemShut {NoStop}%
\bibitem [{\citenamefont {Helstrom}(1968)}]{helstrom1968theminimum}%
  \BibitemOpen
  \bibfield  {author} {\bibinfo {author} {\bibfnamefont {C.}~\bibnamefont
  {Helstrom}},\ }\bibfield  {title} {\bibinfo {title} {The minimum variance of
  estimates in quantum signal detection},\ }\href@noop {} {\bibfield  {journal}
  {\bibinfo  {journal} {IEEE Transactions on Information Theory}\ }\textbf
  {\bibinfo {volume} {14}},\ \bibinfo {pages} {234} (\bibinfo {year}
  {1968})}\BibitemShut {NoStop}%
\bibitem [{\citenamefont {Degen}\ \emph {et~al.}(2017)\citenamefont {Degen},
  \citenamefont {Reinhard},\ and\ \citenamefont
  {Cappellaro}}]{degen2017quantum}%
  \BibitemOpen
  \bibfield  {author} {\bibinfo {author} {\bibfnamefont {C.~L.}\ \bibnamefont
  {Degen}}, \bibinfo {author} {\bibfnamefont {F.}~\bibnamefont {Reinhard}},\
  and\ \bibinfo {author} {\bibfnamefont {P.}~\bibnamefont {Cappellaro}},\
  }\bibfield  {title} {\bibinfo {title} {Quantum sensing},\ }\href
  {https://doi.org/10.1103/RevModPhys.89.035002} {\bibfield  {journal}
  {\bibinfo  {journal} {Reviews of Modern Physics}\ }\textbf {\bibinfo {volume}
  {89}},\ \bibinfo {pages} {035002} (\bibinfo {year} {2017})}\BibitemShut
  {NoStop}%
\bibitem [{\citenamefont {Pezz{\`e}}\ \emph {et~al.}(2018)\citenamefont
  {Pezz{\`e}}, \citenamefont {Smerzi}, \citenamefont {Oberthaler},
  \citenamefont {Schmied},\ and\ \citenamefont
  {Treutlein}}]{pezz`e2018quantum}%
  \BibitemOpen
  \bibfield  {author} {\bibinfo {author} {\bibfnamefont {L.}~\bibnamefont
  {Pezz{\`e}}}, \bibinfo {author} {\bibfnamefont {A.}~\bibnamefont {Smerzi}},
  \bibinfo {author} {\bibfnamefont {M.~K.}\ \bibnamefont {Oberthaler}},
  \bibinfo {author} {\bibfnamefont {R.}~\bibnamefont {Schmied}},\ and\ \bibinfo
  {author} {\bibfnamefont {P.}~\bibnamefont {Treutlein}},\ }\bibfield  {title}
  {\bibinfo {title} {Quantum metrology with nonclassical states of atomic
  ensembles},\ }\href {https://doi.org/10.1103/RevModPhys.90.035005} {\bibfield
   {journal} {\bibinfo  {journal} {Reviews of Modern Physics}\ }\textbf
  {\bibinfo {volume} {90}},\ \bibinfo {pages} {035005} (\bibinfo {year}
  {2018})}\BibitemShut {NoStop}%
\bibitem [{\citenamefont {Braunstein}\ and\ \citenamefont
  {Caves}(1994)}]{braunstein_statistical_1994}%
  \BibitemOpen
  \bibfield  {author} {\bibinfo {author} {\bibfnamefont {S.~L.}\ \bibnamefont
  {Braunstein}}\ and\ \bibinfo {author} {\bibfnamefont {C.~M.}\ \bibnamefont
  {Caves}},\ }\bibfield  {title} {\bibinfo {title} {Statistical distance and
  the geometry of quantum states},\ }\href
  {https://doi.org/10.1103/PhysRevLett.72.3439} {\bibfield  {journal} {\bibinfo
   {journal} {Physical Review Letters}\ }\textbf {\bibinfo {volume} {72}},\
  \bibinfo {pages} {3439 } (\bibinfo {year} {1994})}\BibitemShut {NoStop}%
\bibitem [{\citenamefont {Braunstein}\ \emph {et~al.}(1996)\citenamefont
  {Braunstein}, \citenamefont {Caves},\ and\ \citenamefont
  {Milburn}}]{braunstein_generalized_1996}%
  \BibitemOpen
  \bibfield  {author} {\bibinfo {author} {\bibfnamefont {S.~L.}\ \bibnamefont
  {Braunstein}}, \bibinfo {author} {\bibfnamefont {C.~M.}\ \bibnamefont
  {Caves}},\ and\ \bibinfo {author} {\bibfnamefont {G.}~\bibnamefont
  {Milburn}},\ }\bibfield  {title} {\bibinfo {title} {Generalized {Uncertainty}
  {Relations}: {Theory}, {Examples}, and {Lorentz} {Invariance}},\ }\href
  {https://doi.org/10.1006/aphy.1996.0040} {\bibfield  {journal} {\bibinfo
  {journal} {Annals of Physics}\ }\textbf {\bibinfo {volume} {247}},\ \bibinfo
  {pages} {135 } (\bibinfo {year} {1996})}\BibitemShut {NoStop}%
\bibitem [{\citenamefont {Albarelli}\ \emph {et~al.}(2020)\citenamefont
  {Albarelli}, \citenamefont {Barbieri}, \citenamefont {Genoni},\ and\
  \citenamefont {Gianani}}]{albarelli2020aperspective}%
  \BibitemOpen
  \bibfield  {author} {\bibinfo {author} {\bibfnamefont {F.}~\bibnamefont
  {Albarelli}}, \bibinfo {author} {\bibfnamefont {M.}~\bibnamefont {Barbieri}},
  \bibinfo {author} {\bibfnamefont {M.~G.}\ \bibnamefont {Genoni}},\ and\
  \bibinfo {author} {\bibfnamefont {I.}~\bibnamefont {Gianani}},\ }\bibfield
  {title} {\bibinfo {title} {A perspective on multiparameter quantum metrology:
  {{From}} theoretical tools to applications in quantum imaging},\ }\href
  {https://doi.org/10.1016/j.physleta.2020.126311} {\bibfield  {journal}
  {\bibinfo  {journal} {Physics Letters A}\ }\textbf {\bibinfo {volume}
  {384}},\ \bibinfo {pages} {126311} (\bibinfo {year} {2020})}\BibitemShut
  {NoStop}%
\bibitem [{\citenamefont {Liu}\ \emph {et~al.}(2019)\citenamefont {Liu},
  \citenamefont {Yuan}, \citenamefont {Lu},\ and\ \citenamefont
  {Wang}}]{liu2019quantum}%
  \BibitemOpen
  \bibfield  {author} {\bibinfo {author} {\bibfnamefont {J.}~\bibnamefont
  {Liu}}, \bibinfo {author} {\bibfnamefont {H.}~\bibnamefont {Yuan}}, \bibinfo
  {author} {\bibfnamefont {X.-M.}\ \bibnamefont {Lu}},\ and\ \bibinfo {author}
  {\bibfnamefont {X.}~\bibnamefont {Wang}},\ }\bibfield  {title} {\bibinfo
  {title} {Quantum {{Fisher}} information matrix and multiparameter
  estimation},\ }\href {https://doi.org/10.1088/1751-8121/ab5d4d} {\bibfield
  {journal} {\bibinfo  {journal} {Journal of Physics A: Mathematical and
  Theoretical}\ }\textbf {\bibinfo {volume} {53}},\ \bibinfo {pages} {023001}
  (\bibinfo {year} {2019})}\BibitemShut {NoStop}%
\bibitem [{\citenamefont {Demkowicz-Dobrza{\'n}ski}\ \emph
  {et~al.}(2020)\citenamefont {Demkowicz-Dobrza{\'n}ski}, \citenamefont
  {G{\'o}recki},\ and\ \citenamefont {Gu{\c t}{\u
  a}}}]{demkowicz-dobrzanski2020multiparameter}%
  \BibitemOpen
  \bibfield  {author} {\bibinfo {author} {\bibfnamefont {R.}~\bibnamefont
  {Demkowicz-Dobrza{\'n}ski}}, \bibinfo {author} {\bibfnamefont
  {W.}~\bibnamefont {G{\'o}recki}},\ and\ \bibinfo {author} {\bibfnamefont
  {M.}~\bibnamefont {Gu{\c t}{\u a}}},\ }\bibfield  {title} {\bibinfo {title}
  {Multi-parameter estimation beyond quantum {Fisher} information},\ }\href
  {https://doi.org/10.1088/1751-8121/ab8ef3} {\bibfield  {journal} {\bibinfo
  {journal} {Journal of Physics A: Mathematical and Theoretical}\ }\textbf
  {\bibinfo {volume} {53}},\ \bibinfo {pages} {363001} (\bibinfo {year}
  {2020})}\BibitemShut {NoStop}%
\bibitem [{\citenamefont {Suzuki}\ \emph {et~al.}(2020)\citenamefont {Suzuki},
  \citenamefont {Yang},\ and\ \citenamefont {Hayashi}}]{suzuki2020quantum}%
  \BibitemOpen
  \bibfield  {author} {\bibinfo {author} {\bibfnamefont {J.}~\bibnamefont
  {Suzuki}}, \bibinfo {author} {\bibfnamefont {Y.}~\bibnamefont {Yang}},\ and\
  \bibinfo {author} {\bibfnamefont {M.}~\bibnamefont {Hayashi}},\ }\bibfield
  {title} {\bibinfo {title} {Quantum state estimation with nuisance
  parameters},\ }\href {https://doi.org/10.1088/1751-8121/ab8b78} {\bibfield
  {journal} {\bibinfo  {journal} {Journal of Physics A: Mathematical and
  Theoretical}\ }\textbf {\bibinfo {volume} {53}},\ \bibinfo {pages} {453001}
  (\bibinfo {year} {2020})}\BibitemShut {NoStop}%
\bibitem [{\citenamefont {Pezz{\`e}}\ and\ \citenamefont
  {Smerzi}(2025)}]{pezz`e2025advances}%
  \BibitemOpen
  \bibfield  {author} {\bibinfo {author} {\bibfnamefont {L.}~\bibnamefont
  {Pezz{\`e}}}\ and\ \bibinfo {author} {\bibfnamefont {A.}~\bibnamefont
  {Smerzi}},\ }\href {https://doi.org/10.48550/arXiv.2502.17396} {\bibinfo
  {title} {Advances in multiparameter quantum sensing and metrology}} (\bibinfo
  {year} {2025}),\ \bibinfo {note} {arXiv:2502.17396 [quant-ph]}\BibitemShut
  {NoStop}%
\bibitem [{\citenamefont {Carollo}\ \emph {et~al.}(2020)\citenamefont
  {Carollo}, \citenamefont {Valenti},\ and\ \citenamefont
  {Spagnolo}}]{carollo2020geometry}%
  \BibitemOpen
  \bibfield  {author} {\bibinfo {author} {\bibfnamefont {A.}~\bibnamefont
  {Carollo}}, \bibinfo {author} {\bibfnamefont {D.}~\bibnamefont {Valenti}},\
  and\ \bibinfo {author} {\bibfnamefont {B.}~\bibnamefont {Spagnolo}},\
  }\bibfield  {title} {\bibinfo {title} {Geometry of quantum phase
  transitions},\ }\href@noop {} {\bibfield  {journal} {\bibinfo  {journal}
  {Physics Reports}\ }\textbf {\bibinfo {volume} {838}},\ \bibinfo {pages} {1}
  (\bibinfo {year} {2020})}\BibitemShut {NoStop}%
\bibitem [{\citenamefont {Tsang}\ \emph {et~al.}(2020)\citenamefont {Tsang},
  \citenamefont {Albarelli},\ and\ \citenamefont {Datta}}]{tsang2020quantum}%
  \BibitemOpen
  \bibfield  {author} {\bibinfo {author} {\bibfnamefont {M.}~\bibnamefont
  {Tsang}}, \bibinfo {author} {\bibfnamefont {F.}~\bibnamefont {Albarelli}},\
  and\ \bibinfo {author} {\bibfnamefont {A.}~\bibnamefont {Datta}},\ }\bibfield
   {title} {\bibinfo {title} {Quantum semiparametric estimation},\ }\href@noop
  {} {\bibfield  {journal} {\bibinfo  {journal} {Physical Review X}\ }\textbf
  {\bibinfo {volume} {10}},\ \bibinfo {pages} {031023} (\bibinfo {year}
  {2020})}\BibitemShut {NoStop}%
\bibitem [{\citenamefont {Belliardo}\ and\ \citenamefont
  {Giovannetti}(2021)}]{belliardo2021incompatibility}%
  \BibitemOpen
  \bibfield  {author} {\bibinfo {author} {\bibfnamefont {F.}~\bibnamefont
  {Belliardo}}\ and\ \bibinfo {author} {\bibfnamefont {V.}~\bibnamefont
  {Giovannetti}},\ }\bibfield  {title} {\bibinfo {title} {Incompatibility in
  quantum parameter estimation},\ }\href@noop {} {\bibfield  {journal}
  {\bibinfo  {journal} {New Journal of Physics}\ }\textbf {\bibinfo {volume}
  {23}},\ \bibinfo {pages} {063055} (\bibinfo {year} {2021})}\BibitemShut
  {NoStop}%
\bibitem [{\citenamefont {Holevo}(2011)}]{holevo2011probabilistic}%
  \BibitemOpen
  \bibfield  {author} {\bibinfo {author} {\bibfnamefont {A.~S.}\ \bibnamefont
  {Holevo}},\ }\href@noop {} {\emph {\bibinfo {title} {Probabilistic and
  {Statistical} {Aspects} of {Quantum} {Theory}}}}\ (\bibinfo  {publisher}
  {Springer Science \& Business Media},\ \bibinfo {year} {2011})\BibitemShut
  {NoStop}%
\bibitem [{\citenamefont {Hayashi}(2005)}]{hayashi2005asymptotic}%
  \BibitemOpen
  \bibinfo {editor} {\bibfnamefont {M.}~\bibnamefont {Hayashi}},\ ed.,\
  \href@noop {} {\emph {\bibinfo {title} {Asymptotic Theory Of Quantum
  Statistical Inference}}}\ (\bibinfo  {publisher} {World Scientific Pub Co
  Inc},\ \bibinfo {address} {Singpore},\ \bibinfo {year} {2005})\BibitemShut
  {NoStop}%
\bibitem [{\citenamefont {Conlon}\ \emph {et~al.}(2021)\citenamefont {Conlon},
  \citenamefont {Suzuki}, \citenamefont {Lam},\ and\ \citenamefont
  {Assad}}]{conlon2021efficient}%
  \BibitemOpen
  \bibfield  {author} {\bibinfo {author} {\bibfnamefont {L.~O.}\ \bibnamefont
  {Conlon}}, \bibinfo {author} {\bibfnamefont {J.}~\bibnamefont {Suzuki}},
  \bibinfo {author} {\bibfnamefont {P.~K.}\ \bibnamefont {Lam}},\ and\ \bibinfo
  {author} {\bibfnamefont {S.~M.}\ \bibnamefont {Assad}},\ }\bibfield  {title}
  {\bibinfo {title} {Efficient computation of the
  {Nagaoka}{\textendash}{Hayashi} bound for multiparameter estimation with
  separable measurements},\ }\href {https://doi.org/10.1038/s41534-021-00414-1}
  {\bibfield  {journal} {\bibinfo  {journal} {npj Quantum Information}\
  }\textbf {\bibinfo {volume} {7}},\ \bibinfo {pages} {1} (\bibinfo {year}
  {2021})}\BibitemShut {NoStop}%
\bibitem [{\citenamefont {Albarelli}\ \emph {et~al.}(2019)\citenamefont
  {Albarelli}, \citenamefont {Friel},\ and\ \citenamefont
  {Datta}}]{albarelli2019evaluating}%
  \BibitemOpen
  \bibfield  {author} {\bibinfo {author} {\bibfnamefont {F.}~\bibnamefont
  {Albarelli}}, \bibinfo {author} {\bibfnamefont {J.~F.}\ \bibnamefont
  {Friel}},\ and\ \bibinfo {author} {\bibfnamefont {A.}~\bibnamefont {Datta}},\
  }\bibfield  {title} {\bibinfo {title} {Evaluating the {{Holevo
  Cram}}\textbackslash 'er-{{Rao Bound}} for {{Multiparameter Quantum
  Metrology}}},\ }\href {https://doi.org/10.1103/PhysRevLett.123.200503}
  {\bibfield  {journal} {\bibinfo  {journal} {Physical Review Letters}\
  }\textbf {\bibinfo {volume} {123}},\ \bibinfo {pages} {200503} (\bibinfo
  {year} {2019})}\BibitemShut {NoStop}%
\bibitem [{\citenamefont {Matsumoto}(2002)}]{matsumoto2002anew}%
  \BibitemOpen
  \bibfield  {author} {\bibinfo {author} {\bibfnamefont {K.}~\bibnamefont
  {Matsumoto}},\ }\bibfield  {title} {\bibinfo {title} {A new approach to the
  {Cram{\'e}r}-{Rao}-type bound of the pure-state model},\ }\href
  {https://doi.org/10.1088/0305-4470/35/13/307} {\bibfield  {journal} {\bibinfo
   {journal} {Journal of Physics A: Mathematical and General}\ }\textbf
  {\bibinfo {volume} {35}},\ \bibinfo {pages} {3111} (\bibinfo {year}
  {2002})}\BibitemShut {NoStop}%
\bibitem [{\citenamefont {Ragy}\ \emph {et~al.}(2016)\citenamefont {Ragy},
  \citenamefont {Jarzyna},\ and\ \citenamefont
  {{Demkowicz-Dobrza\'nski}}}]{ragy2016compatibility}%
  \BibitemOpen
  \bibfield  {author} {\bibinfo {author} {\bibfnamefont {S.}~\bibnamefont
  {Ragy}}, \bibinfo {author} {\bibfnamefont {M.}~\bibnamefont {Jarzyna}},\ and\
  \bibinfo {author} {\bibfnamefont {R.}~\bibnamefont
  {{Demkowicz-Dobrza\'nski}}},\ }\bibfield  {title} {\bibinfo {title}
  {Compatibility in multiparameter quantum metrology},\ }\href
  {https://doi.org/10.1103/PhysRevA.94.052108} {\bibfield  {journal} {\bibinfo
  {journal} {Physical Review A}\ }\textbf {\bibinfo {volume} {94}},\ \bibinfo
  {pages} {052108} (\bibinfo {year} {2016})}\BibitemShut {NoStop}%
\bibitem [{\citenamefont {Pezz{\`e}}\ \emph {et~al.}(2017)\citenamefont
  {Pezz{\`e}}, \citenamefont {Ciampini}, \citenamefont {Spagnolo},
  \citenamefont {Humphreys}, \citenamefont {Datta}, \citenamefont {Walmsley},
  \citenamefont {Barbieri}, \citenamefont {Sciarrino},\ and\ \citenamefont
  {Smerzi}}]{pezz`e2017optimal}%
  \BibitemOpen
  \bibfield  {author} {\bibinfo {author} {\bibfnamefont {L.}~\bibnamefont
  {Pezz{\`e}}}, \bibinfo {author} {\bibfnamefont {M.~A.}\ \bibnamefont
  {Ciampini}}, \bibinfo {author} {\bibfnamefont {N.}~\bibnamefont {Spagnolo}},
  \bibinfo {author} {\bibfnamefont {P.~C.}\ \bibnamefont {Humphreys}}, \bibinfo
  {author} {\bibfnamefont {A.}~\bibnamefont {Datta}}, \bibinfo {author}
  {\bibfnamefont {I.~A.}\ \bibnamefont {Walmsley}}, \bibinfo {author}
  {\bibfnamefont {M.}~\bibnamefont {Barbieri}}, \bibinfo {author}
  {\bibfnamefont {F.}~\bibnamefont {Sciarrino}},\ and\ \bibinfo {author}
  {\bibfnamefont {A.}~\bibnamefont {Smerzi}},\ }\bibfield  {title} {\bibinfo
  {title} {Optimal {Measurements} for {Simultaneous} {Quantum} {Estimation} of
  {Multiple} {Phases}},\ }\href
  {https://doi.org/10.1103/PhysRevLett.119.130504} {\bibfield  {journal}
  {\bibinfo  {journal} {Physical Review Letters}\ }\textbf {\bibinfo {volume}
  {119}},\ \bibinfo {pages} {130504} (\bibinfo {year} {2017})}\BibitemShut
  {NoStop}%
\bibitem [{\citenamefont {Zhu}\ and\ \citenamefont
  {Hayashi}(2018)}]{zhu2018universally}%
  \BibitemOpen
  \bibfield  {author} {\bibinfo {author} {\bibfnamefont {H.}~\bibnamefont
  {Zhu}}\ and\ \bibinfo {author} {\bibfnamefont {M.}~\bibnamefont {Hayashi}},\
  }\bibfield  {title} {\bibinfo {title} {Universally {Fisher}-{Symmetric}
  {Informationally} {Complete} {Measurements}},\ }\href
  {https://doi.org/10.1103/PhysRevLett.120.030404} {\bibfield  {journal}
  {\bibinfo  {journal} {Physical Review Letters}\ }\textbf {\bibinfo {volume}
  {120}},\ \bibinfo {pages} {030404} (\bibinfo {year} {2018})}\BibitemShut
  {NoStop}%
\bibitem [{\citenamefont {Yang}\ \emph
  {et~al.}(2019{\natexlab{a}})\citenamefont {Yang}, \citenamefont
  {Chiribella},\ and\ \citenamefont {Hayashi}}]{yang2019attaining}%
  \BibitemOpen
  \bibfield  {author} {\bibinfo {author} {\bibfnamefont {Y.}~\bibnamefont
  {Yang}}, \bibinfo {author} {\bibfnamefont {G.}~\bibnamefont {Chiribella}},\
  and\ \bibinfo {author} {\bibfnamefont {M.}~\bibnamefont {Hayashi}},\
  }\bibfield  {title} {\bibinfo {title} {Attaining the {Ultimate} {Precision}
  {Limit} in {Quantum} {State} {Estimation}},\ }\href
  {https://doi.org/10.1007/s00220-019-03433-4} {\bibfield  {journal} {\bibinfo
  {journal} {Communications in Mathematical Physics}\ }\textbf {\bibinfo
  {volume} {368}},\ \bibinfo {pages} {223} (\bibinfo {year}
  {2019}{\natexlab{a}})}\BibitemShut {NoStop}%
\bibitem [{\citenamefont {Lu}\ and\ \citenamefont
  {Wang}(2021)}]{lu2021incorporating}%
  \BibitemOpen
  \bibfield  {author} {\bibinfo {author} {\bibfnamefont {X.-M.}\ \bibnamefont
  {Lu}}\ and\ \bibinfo {author} {\bibfnamefont {X.}~\bibnamefont {Wang}},\
  }\bibfield  {title} {\bibinfo {title} {Incorporating {{Heisenberg}}'s
  {{Uncertainty Principle}} into {{Quantum Multiparameter Estimation}}},\
  }\href {https://doi.org/10.1103/PhysRevLett.126.120503} {\bibfield  {journal}
  {\bibinfo  {journal} {Physical Review Letters}\ }\textbf {\bibinfo {volume}
  {126}},\ \bibinfo {pages} {120503} (\bibinfo {year} {2021})}\BibitemShut
  {NoStop}%
\bibitem [{\citenamefont {Chen}\ \emph {et~al.}(2022)\citenamefont {Chen},
  \citenamefont {Chen},\ and\ \citenamefont {Yuan}}]{chen2022information}%
  \BibitemOpen
  \bibfield  {author} {\bibinfo {author} {\bibfnamefont {H.}~\bibnamefont
  {Chen}}, \bibinfo {author} {\bibfnamefont {Y.}~\bibnamefont {Chen}},\ and\
  \bibinfo {author} {\bibfnamefont {H.}~\bibnamefont {Yuan}},\ }\bibfield
  {title} {\bibinfo {title} {Information geometry under hierarchical quantum
  measurement},\ }\href@noop {} {\bibfield  {journal} {\bibinfo  {journal}
  {Physical Review Letters}\ }\textbf {\bibinfo {volume} {128}},\ \bibinfo
  {pages} {250502} (\bibinfo {year} {2022})}\BibitemShut {NoStop}%
\bibitem [{\citenamefont {Zhou}\ and\ \citenamefont
  {Chen}(2025)}]{zhou2025randomized}%
  \BibitemOpen
  \bibfield  {author} {\bibinfo {author} {\bibfnamefont {S.}~\bibnamefont
  {Zhou}}\ and\ \bibinfo {author} {\bibfnamefont {S.}~\bibnamefont {Chen}},\
  }\href {https://doi.org/10.48550/arXiv.2502.03536} {\bibinfo {title}
  {Randomized measurements for multi-parameter quantum metrology}} (\bibinfo
  {year} {2025}),\ \Eprint {https://arxiv.org/abs/2502.03536} {arXiv:2502.03536
  [quant-ph]} \BibitemShut {NoStop}%
\bibitem [{\citenamefont {Pezzè}\ and\ \citenamefont
  {Smerzi}(2025)}]{PezzPRL2025}%
  \BibitemOpen
  \bibfield  {author} {\bibinfo {author} {\bibfnamefont {L.}~\bibnamefont
  {Pezzè}}\ and\ \bibinfo {author} {\bibfnamefont {A.}~\bibnamefont
  {Smerzi}},\ }\bibfield  {title} {\bibinfo {title} {Distributed quantum
  multiparameter estimation with optimal local measurements},\ }\bibfield
  {journal} {\bibinfo  {journal} {Physical Review Letters}\ }\textbf {\bibinfo
  {volume} {135}},\ \href {https://doi.org/10.1103/f2jf-bg7g}
  {10.1103/f2jf-bg7g} (\bibinfo {year} {2025})\BibitemShut {NoStop}%
\bibitem [{\citenamefont {Horodecki}\ \emph {et~al.}(2022)\citenamefont
  {Horodecki}, \citenamefont {Rudnicki},\ and\ \citenamefont
  {{\.Z}yczkowski}}]{horodecki2022fiveopen}%
  \BibitemOpen
  \bibfield  {author} {\bibinfo {author} {\bibfnamefont {P.}~\bibnamefont
  {Horodecki}}, \bibinfo {author} {\bibfnamefont {{\L}.}~\bibnamefont
  {Rudnicki}},\ and\ \bibinfo {author} {\bibfnamefont {K.}~\bibnamefont
  {{\.Z}yczkowski}},\ }\bibfield  {title} {\bibinfo {title} {Five {{Open
  Problems}} in {{Quantum Information Theory}}},\ }\href
  {https://doi.org/10.1103/PRXQuantum.3.010101} {\bibfield  {journal} {\bibinfo
   {journal} {PRX Quantum}\ }\textbf {\bibinfo {volume} {3}},\ \bibinfo {pages}
  {010101} (\bibinfo {year} {2022})}\BibitemShut {NoStop}%
\bibitem [{\citenamefont {Yang}\ \emph
  {et~al.}(2019{\natexlab{b}})\citenamefont {Yang}, \citenamefont {Pang},
  \citenamefont {Zhou},\ and\ \citenamefont {Jordan}}]{yang2019optimal}%
  \BibitemOpen
  \bibfield  {author} {\bibinfo {author} {\bibfnamefont {J.}~\bibnamefont
  {Yang}}, \bibinfo {author} {\bibfnamefont {S.}~\bibnamefont {Pang}}, \bibinfo
  {author} {\bibfnamefont {Y.}~\bibnamefont {Zhou}},\ and\ \bibinfo {author}
  {\bibfnamefont {A.~N.}\ \bibnamefont {Jordan}},\ }\bibfield  {title}
  {\bibinfo {title} {Optimal measurements for quantum multiparameter estimation
  with general states},\ }\href {https://doi.org/10.1103/PhysRevA.100.032104}
  {\bibfield  {journal} {\bibinfo  {journal} {Physical Review A}\ }\textbf
  {\bibinfo {volume} {100}},\ \bibinfo {pages} {032104} (\bibinfo {year}
  {2019}{\natexlab{b}})}\BibitemShut {NoStop}%
\bibitem [{\citenamefont {Imai}\ \emph {et~al.}()\citenamefont {Imai},
  \citenamefont {Yang},\ and\ \citenamefont {Pezz\'e}}]{Imai2026}%
  \BibitemOpen
  \bibfield  {author} {\bibinfo {author} {\bibfnamefont {S.}~\bibnamefont
  {Imai}}, \bibinfo {author} {\bibfnamefont {J.}~\bibnamefont {Yang}},\ and\
  \bibinfo {author} {\bibfnamefont {L.}~\bibnamefont {Pezz\'e}},\ }\href@noop
  {} {\bibinfo  {journal} {To appear soon}\ }\BibitemShut {NoStop}%
\bibitem [{Note1()}]{Note1}%
  \BibitemOpen
\bibfield  {journal} {  }\bibinfo {note} {Clearly, Eqs.~(\ref {eq:W-opt}-\ref
  {eq:M-opt}) are $U(1)$ gauge invariant under the transformation of the
  eigenstates $\mathinner {|{\psi _{a\protect \bm {\lambda }}}\rangle }\to
  \mathinner {|{\psi _{a\protect \bm {\lambda }}}\rangle }\protect \mathrm
  {e}^{\protect \mathrm {i}\theta _{a\protect \bm {\lambda }}}$ and the scaling
  transformation of the measurement basis $\mathinner {|{\pi _{\omega }}\rangle
  }\to \alpha _{\omega }\mathinner {|{\pi _{\omega }}\rangle }$, where $\alpha
  _{\omega }$ is any complex number.}\BibitemShut {Stop}%
\bibitem [{Note2()}]{Note2}%
  \BibitemOpen
  \bibinfo {note} {The case where $F^Q_{\omega }=0$ is trivial, which leads to
  $F^C_\omega =0$}\BibitemShut {NoStop}%
\bibitem [{SM()}]{SM}%
  \BibitemOpen
  \href@noop {} {}\bibinfo {note} {See Supplemental Materials.}\BibitemShut
  {Stop}%
\bibitem [{\citenamefont {Zhou}\ \emph {et~al.}(2020)\citenamefont {Zhou},
  \citenamefont {Zou},\ and\ \citenamefont {Jiang}}]{zhou2020saturating}%
  \BibitemOpen
  \bibfield  {author} {\bibinfo {author} {\bibfnamefont {S.}~\bibnamefont
  {Zhou}}, \bibinfo {author} {\bibfnamefont {C.-L.}\ \bibnamefont {Zou}},\ and\
  \bibinfo {author} {\bibfnamefont {L.}~\bibnamefont {Jiang}},\ }\bibfield
  {title} {\bibinfo {title} {Saturating the quantum
  {{Cram\'er}}\textendash{{Rao}} bound using {{LOCC}}},\ }\href
  {https://doi.org/10.1088/2058-9565/ab71f8} {\bibfield  {journal} {\bibinfo
  {journal} {Quantum Science and Technology}\ }\textbf {\bibinfo {volume}
  {5}},\ \bibinfo {pages} {025005} (\bibinfo {year} {2020})}\BibitemShut
  {NoStop}%
\bibitem [{\citenamefont {Liu}\ \emph {et~al.}(2025)\citenamefont {Liu},
  \citenamefont {Yang}, \citenamefont {Shi},\ and\ \citenamefont
  {Yu}}]{liu2025optimal}%
  \BibitemOpen
  \bibfield  {author} {\bibinfo {author} {\bibfnamefont {J.-X.}\ \bibnamefont
  {Liu}}, \bibinfo {author} {\bibfnamefont {J.}~\bibnamefont {Yang}}, \bibinfo
  {author} {\bibfnamefont {H.-L.}\ \bibnamefont {Shi}},\ and\ \bibinfo {author}
  {\bibfnamefont {S.}~\bibnamefont {Yu}},\ }\bibfield  {title} {\bibinfo
  {title} {Optimal local measurements in single-parameter quantum metrology},\
  }\href {https://doi.org/10.1103/PhysRevA.111.022436} {\bibfield  {journal}
  {\bibinfo  {journal} {Physical Review A}\ }\textbf {\bibinfo {volume}
  {111}},\ \bibinfo {pages} {022436} (\bibinfo {year} {2025})}\BibitemShut
  {NoStop}%
\bibitem [{\citenamefont {Fillmore}(1969)}]{fillmore1969similarity}%
  \BibitemOpen
  \bibfield  {author} {\bibinfo {author} {\bibfnamefont {P.~A.}\ \bibnamefont
  {Fillmore}},\ }\bibfield  {title} {\bibinfo {title} {On similarity and the
  diagonal of a matrix},\ }\href {https://www.jstor.org/stable/2317264}
  {\bibfield  {journal} {\bibinfo  {journal} {The American Mathematical
  Monthly}\ }\textbf {\bibinfo {volume} {76}},\ \bibinfo {pages} {167}
  (\bibinfo {year} {1969})}\BibitemShut {NoStop}%
\bibitem [{Note3()}]{Note3}%
  \BibitemOpen
  \bibinfo {note} {More precisely, we consider the $\protect \mathfrak {su}(d)$
  Lie algebra is defined over on the $d$-dimensional Hilbert space containing
  the subspace $\protect \mathrm {span}_\protect \mathbb {C}\{\mathinner
  {|{\psi _{\protect \bm {a\lambda }}}\rangle },\mathinner {|{\partial _i \psi
  _{\protect \bm {a\lambda }}}\rangle }\}_{a,\protect \,i}$.}\BibitemShut
  {Stop}%
\bibitem [{\citenamefont {Yang}(2024{\natexlab{a}})}]{yang2024purestate}%
  \BibitemOpen
  \bibfield  {author} {\bibinfo {author} {\bibfnamefont {J.}~\bibnamefont
  {Yang}},\ }\href {https://doi.org/10.48550/arXiv.2405.00405} {\bibinfo
  {title} {Pure {{State Inspired Lossless Post-selected Quantum Metrology}} of
  {{Mixed States}}}} (\bibinfo {year} {2024}{\natexlab{a}}),\ \Eprint
  {https://arxiv.org/abs/2405.00405} {arXiv:2405.00405 [physics,
  physics:quant-ph]} \BibitemShut {NoStop}%
\bibitem [{\citenamefont {Yang}(2024{\natexlab{b}})}]{yang2024theoryof}%
  \BibitemOpen
  \bibfield  {author} {\bibinfo {author} {\bibfnamefont {J.}~\bibnamefont
  {Yang}},\ }\bibfield  {title} {\bibinfo {title} {Theory of {{Compression
  Channels}} for {{Postselected Quantum Metrology}}},\ }\href
  {https://doi.org/10.1103/PhysRevLett.132.250802} {\bibfield  {journal}
  {\bibinfo  {journal} {Physical Review Letters}\ }\textbf {\bibinfo {volume}
  {132}},\ \bibinfo {pages} {250802} (\bibinfo {year}
  {2024}{\natexlab{b}})}\BibitemShut {NoStop}%
\end{thebibliography}%

\clearpage\newpage{}
\begin{center}
{\large\textbf{End Matter}}{\large\par}
\par\end{center}
\section{Proof of Observation~\ref{obs:IC-POVM}}
\begin{proof}
We prove by contradiction. We note the fact that IC POVM forms a complete basis on $\mathrm{Herm}(d)$. According to Theorem~\ref{thm:Simultaneous=000020Hollowization=000020Theorem}, we have $M_{i,\,ab}=0$ for all
$i\in[1,s],\,a,b\in[1,\,r]$. Then according to Theorem~\ref{thm:vanishing-M-condition}
in the Supplemental Material, we know $L_{i}=\alpha_{i}\Pi_{\mathrm{s}}$, where $\Pi_{\mathrm{s}}$ is the project to the support space of $\rho_{\bm{\lambda}}$ and
$\alpha_{i}\in\mathbb{R}$. This is consistent with the condition
that $W_{ij,\,ab}=0$, for all $i,j\in[1,s],\,a,b\in[1,\,r]$. However,
in this case, the QFIM element reads $F_{jk}^{Q}=\alpha_{j}\alpha_{k}$,
which is of rank-one and thus non-invertible, which is in contradiction with the premise that the QFIM is non-singular.
\end{proof}

\section{Proof of Observation~\ref{obs:=000020dim-bound}}
\begin{proof}
Again we prove by contradiction. We first assume there exists one optimal measurements consisting of rank-one measurement operators, which is denoted as $\{E_{\omega}\}_{\omega\in\Omega}$.  Since $|\Omega|\ge d$ and the optimal POVM rank-one operator 
forms a linearly independent basis on $\mathcal{V}_{\perp}$, therefore
$\dim\mathcal{V}_{\perp}$ must be greater or equal to $d$, which in contradiction to condition that $\dim \mathcal{V}_{\perp}<d$.
\end{proof}

\section{Proof of Theorem~\ref{thm:outcome-wise=000020sat}}

\label{sec:Proof-Outcome-wise-saturability}
\begin{proof}
We first prove the forward direction: if a projector $\Pi_\omega$ saturate the outcome-wise QCRB, then according the geometric interpretation of the saturation condition~\eqref{eq:Tr-condition}, we know
\begin{equation}
\frac{1}{d}(\alpha\mathbb{I}-\bm{v}^{(\omega)}\cdot\bm{T})=\ket{\pi_{\omega}}\bra{\pi_{\omega}},
\end{equation}
where $\alpha\in\mathbb{R}$ and $\bm{v}^{(\omega)}$ is a real vector.
This would immediate imply that 
\begin{equation}
\bm{v}^{(\omega)}\cdot\bm{T}=\alpha\mathbb{I}-d\ket{\pi_{\omega}}\bra{\pi_{\omega}}.
\end{equation}
The traceless property of $T_{k}$ implies that $\alpha=1$. Then
spectrum on the r.h.s. is $\{1-d,1,1,\cdots1\}$, which completes the proof.
The backward direction is obvious: assuming Eq.~\eqref{eq:opt-E-struct} holds, it apparent satisfies Eq.~\eqref{eq:Tr-condition}.
\end{proof}

\section{Numerical algorithms for searching the optimal measurements}
\begin{enumerate}
\item Choose a set of orthonormal complete traceless basis $\{T_{k}\}_{k=1}^{n}$
on $\mathcal{V}_{\perp}$ that satisfies (\ref{eq:su-alg-norm})
and impose $d-1$ eigenvalues of $\bm{v}\cdot\bm{T}$ are equal to $1$,
where $\bm{v}$ is $n$-dimensional real vector. This amounts to imposing
$d-1$ real constraints, which is possible only for $n\ge d-1$, consistent
with the Observation~\ref{obs:=000020dim-bound}. The number of distinct
ways of imposing such a$(d-1)$ degeneracy is $d$, which leads to
a set of vectors and optimal rank-one projectors denoted as $\{\bm{v}^{(\omega)},\,\Pi_{\omega}\}$.
\item Search for non-negative $\alpha^{(\omega)}\in[0,1]$ that satisfies
$\sum_{\omega=1}^{|\Omega|}\alpha^{(\omega)}=d,\,\sum_{\omega=1}^{|\Omega|}\alpha^{(\omega)}\bm{v}^{(\omega)}=0$.
These two conditions guarantees that to the set of optimal rank-one
POVM $E_{\omega}\equiv\alpha^{(\omega)}\Pi_{\omega}$ is complete,
i.e., $\sum_{\omega=1}^{|\Omega|}\alpha^{(\omega)}\Pi_{\omega}=\mathbb{I}$, see Sec.~\ref{sec:justification-efficient-search-algorithm}
in~\citep{SM} for details. If such $\{\alpha^{(\omega)}\}$ does
not exists, then the QCRB is not saturable. In particular,
to search for projective measurements, $|\Omega|=d$ and $\alpha^{(\omega)}=1$
and one just need to search for $d$ vectors such that $\sum_{\omega=1}^{|\Omega|}\bm{v}^{(\omega)}=0$. 
\end{enumerate}

\section{Schematic Proof of Theorem~\ref{thm:suff-ineq}}

\begin{proof}
We prove by giving a recipe of constructing the optimal rank-one projective
measurements. A comprehensive description of the procedure can be
found in Sec.~\ref{sec:Detailed-proof-sufficiency} in~\citep{SM}.
Here we briefly mention the idea behind the iterative construction.
Clearly, the first projector can be constructed via the algorithm
above. Assuming first $\mu$ rank-one projectors, denoted as $\{\Pi_{q}\}_{q=1}^{\mu}$,
has been constructed, the $(\mu+1)$-th can constructed as follows:
\begin{enumerate}
\item We denote the orthogonal complement of the linear subspace $\mathrm{span}_{\mathbb{R}}\{\Pi_{q}\}_{q=1}^{\mu}$
in $\mathcal{V}_{\perp}$ as $\mathcal{V}_{\perp}^{(\mu+1)}$, with
$\dim\mathcal{V}_{\perp}^{(\mu+1)}=n-\mu+1$. Clearly $\mathbb{I}^{(\mu+1)}\equiv\mathbb{I}-\sum_{q=1}^{\mu}\Pi_{q}\in\mathcal{V}_{\perp}^{(\mu+1)}$.
All other basis on $\mathcal{V}_{\perp}^{(\mu+1)}$, denoted as $T_{k}^{(\mu+1)}$
with $k=1,2,\cdots n-\mu$, that are orthogonal to $\mathbb{I}^{(\mu+1)}$
must be traceless and satisfies 
\[
\langle T_{k}^{(\mu+1)},\Pi_{q}\rangle=\braket{\pi_{q}|T_{k}^{(\mu+1)}|\pi_{q}}=0,\,q\in[1,\,\mu].
\]
 We choose the normalization 
\begin{equation}
\langle T_{k}^{(\mu+1)},\,T_{l}^{(\mu+1)}\rangle=(d-\mu)\delta_{kl}.
\end{equation}
\item Clearly $\Pi_{\mu+1}$ lies in $\mathcal{V}_{\perp}^{(\mu+1)}$. Using
a theorem that is analogous to Theorem~\ref{thm:outcome-wise=000020sat}
(see Sec.~\ref{sec:justification-efficient-search-algorithm} in~\citep{SM}),
the only possibility to form a rank-one project is that there exists
a real vector $\bm{v}^{(\mu+1)}$ such that (i) The kernel space of$\bm{v}^{(\mu+1)}\cdot\bm{T}^{(\mu+1)}\equiv\sum_{k=1}^{n-\mu}v_{k}^{(\mu+1)}T_{k}$
is equal to $\mathrm{span}_{\mathbb{C}}\{\ket{\pi_{q}}\}_{q=1}^{\mu}$
(ii) The $(d-\mu-1)$ of the non-zero eigenvalues must be $1$, with
the remaining one has to be $-(d-\mu-1)$ due to the traceless property
of $T_{k}^{(\mu+1)}$.
\end{enumerate}
Such an iterative procedure can be carried out until the construction
of the $(d-1)$th rank-one projector, corresponding to $\mu=d-2$.
It turns out that for constructing the $(\mu+1)$-th projector, condition
(i) \textit{at most} $2\mu(d-\mu)+\mu-1$ real constraints while condition
(ii) imposes \textit{at most} $(d-\mu-1)$ constraints. On the other
hand, we note the number of free parameters is $n-\mu$. Therefore,
$(\mu+1)$-th optimal rank-one projector can be constructed successfully
if $n-\mu\ge2\mu(d-\mu)+d-2$. Imposing such an inequality holds for
all $\mu\in[1,\,d-2]$ leads to the inequality~\eqref{eq:suff-ineq}.
\end{proof}

\section{Proof of the results on quasi-pure states} 

\begin{proof}
For the state given by Eq.~\eqref{eq:QP-bipartite}, the SLD can be expressed as~\citep{yang2024purestate}
\begin{equation}
L_{i}=\sum_{a=1}^{r}L_{i,\,a}\otimes|a\rangle\bra{a},
\end{equation}
where $L_{i,\,a}$ is the SLD for the pure state $|\phi_{a\bm{\lambda}}\rangle$
defined as 
\begin{equation}
L_{i,a}=2(\ket{D_{i}\phi_{a\bm{\lambda}}}\bra{\phi_{a\bm{\lambda}}}+\ket{\phi_{a\bm{\lambda}}}\bra{D_{i}\phi_{a\bm{\lambda}}}).
\end{equation}
Therefore, it can be readily calculated
\begin{align}
W_{ij,\,ab} & =\mathcal{W}_{ij,\,ab}\otimes\ket{a}\bra{b},\\
M_{i,\,ab} & =\mathcal{M}_{i,\,ab}\otimes|a\rangle\langle b|,
\end{align}
where 
\begin{align}
\mathcal{W}_{ij,\,ab} & \equiv L_{i,\,a}\ket{\phi_{a\bm{\lambda}}}\bra{\phi_{b\bm{\lambda}}}L_{j,\,b}-L_{j,\,a}\ket{\phi_{a\bm{\lambda}}}\bra{\phi_{b\bm{\lambda}}}L_{i,\,b}\nonumber \\
 & =2\left(\ket{D_{i}\phi_{a\bm{\lambda}}}\bra{D_{j}\phi_{b\bm{\lambda}}}-\ket{D_{j}\phi_{a\bm{\lambda}}}\bra{D_{i}\phi_{b\bm{\lambda}}}\right),
\end{align}
and 
\begin{align}
\mathcal{M}_{i,\,ab} & \equiv L_{i,\,a}\ket{\phi_{a\bm{\lambda}}}\bra{\phi_{b\bm{\lambda}}}-\ket{\phi_{a\bm{\lambda}}}\bra{\phi_{b\bm{\lambda}}}L_{i,\,b}\nonumber \\
 & =2\left(\ket{D_{i}\phi_{a\bm{\lambda}}}\bra{\phi_{b\bm{\lambda}}}-\ket{\phi_{a\bm{\lambda}}}\bra{D_{i}\phi_{b\bm{\lambda}}}\right).
\end{align}
Consider the local measurement with classical communications (LMCC) with the measurement basis of the form $\ket{\pi_{\nu,\,a}}=\ket{e^{(a)}_{\nu}}\otimes\ket{a}$.
Note that
\begin{align}
\braket{\pi_{\nu,\,a}|W_{ij,\,cd}|\pi_{\nu,\,a}} & =\braket{e^{(a)}_{\nu}|\mathcal{W}_{ij,\,cd}|e^{(a)}_{\nu}}\delta_{ac}\delta_{ad},\\
\braket{\pi_{\nu,\, a}|M_{i,\,cd}|\pi_{\nu,\,a}} & =\braket{e^{(a)}_{\nu}|\mathcal{M}_{i,\,cd}|e^{(a)}_{\nu}}\delta_{ac}\delta_{ad}.
\end{align}
Then according to the Hollowization theorem~\ref{thm:Simultaneous=000020Hollowization=000020Theorem}, this set of measurements
optimal if and only 
\begin{equation}
\braket{e^{(a)}_{\nu}|\mathcal{W}_{ij,\,aa}|e^{(a)}_{\nu}}=0,\,\braket{e^{(a)}_{\nu}|\mathcal{M}_{i,\,aa}|e^{(a)}_{\nu}}=0.\label{eq:Hollowization-phi}
\end{equation}
Applying the Hollowization theorem again to find the optimal the optimal
measurement condition for the pure state $\ket{\phi_{a\bm{\lambda}}}$,
one immediately observe that it is the same as Eq.~(\ref{eq:Hollowization-phi}).
Since $\ket{\phi_{a\bm{\lambda}}}$ satisfies the WCC, the optimal
measurement $\{\ket{e_{\nu_a}}\}$ can be constructed systematically
from the recipes proposed by Mastumoto and Pezze et al~\citep{matsumoto2002anew,pezz`e2017optimal}. It is clear that LMCC constructed in this way complete, since
\begin{equation}
   \sum_a \sum_{\nu} \ket{e^{(a)}_{\nu}}\bra{e^{(a)}_{\nu}}\otimes \ket{a}\bra{a}=\sum_a \ket{a}\bra{a}=\mathbb{I}.
\end{equation}

\end{proof}

\section{Application to systems consisting of a two-qubit primary system and qubit ancilla}

We consider a two-parameter estimation problem with the primary system
being two qubits, where $H_{1}=\lambda_{1}\sigma_{z}^{(1)}\sigma_{z}^{(2)}$
and $H_{2}=\lambda_{2}\sigma_{x}^{(1)}\sigma_{x}^{(2)}$. Clearly
here $[H_{1},\,H_{2}]=0$. We consider the initial state 
\begin{equation}
\rho_{0}=q\ket{0,+}\bra{0,+}\otimes\ket{0}\bra{0}+(1-q)\ket{1,\varphi}\bra{1,\varphi}\otimes\ket{1}\bra{1},
\end{equation}
where $\ket{\varphi}=\cos(\theta/2)\ket{0}+\sin(\theta/2)\ket{1}$.
It can be calculated that 
\begin{equation}
\rho_{\bm{\lambda}}=q\ket{\phi_{0\bm{\lambda}}}\bra{\phi_{0\bm{\lambda}}}\otimes\ket{0}\bra{0}+(1-q)\ket{\phi_{1\bm{\lambda}}}\bra{\phi_{1\bm{\lambda}}}\otimes\ket{1}\bra{1},
\end{equation}
where $\ket{\phi_{0\bm{\lambda}}}=U_{\bm{\lambda}}\ket{0,+}$ and
$\ket{\phi_{1\bm{\lambda}}}=U_{\bm{\lambda}}\ket{1,\varphi}$.
Since this state is a quasi-pure states~\cite{yang2024purestate}, the QFIM can be calculated
as 
\begin{align}
F^{Q} & =qF^{Q}[\ket{\phi_{0\bm{\lambda}}}]+(1-q)F^{Q}[\ket{\phi_{1\bm{\lambda}}}]=\begin{bmatrix}4 & 0\\
0 & 4q+4(1-q)\sin^{2}\theta
\end{bmatrix}.
\end{align}
Following Matsumoto and Pezze~\cite{matsumoto2002anew,pezz`e2017optimal}, we first construct the regular optimal
measurements for the estimation of $\ket{\phi_{0\bm{\lambda}}}$ and
$\ket{\phi_{1\bm{\lambda}}}$ locally $\bm{\lambda}=0$. Optimal projectors
for estimating $\ket{\phi_{0\bm{\lambda}}}$ is
\begin{align}
\ket{e_{1}^{(0)}} & =\frac{1}{\sqrt{3}}\left(\ket{0+}+\mathrm{i}\sqrt{2}\ket{1+}\right),\\
\ket{e_{2}^{(0)}} & =\frac{1}{\sqrt{3}}\left(\ket{0+}-\mathrm{i}\frac{1}{\sqrt{2}}\ket{1+}+\mathrm{i}\sqrt{\frac{3}{2}}\ket{0-}\right),\\
\ket{e_{3}^{(0)}} & =\frac{1}{\sqrt{3}}\left(\ket{0+}-\mathrm{i}\frac{1}{\sqrt{2}}\ket{1+}-\mathrm{i}\sqrt{\frac{3}{2}}\ket{0-}\right),\\
\ket{e_{4}^{(0)}} & =\ket{1-},
\end{align}
while the optimal regular projectors for estimating $\ket{\phi_{1\bm{\lambda}}}$
can be constructed as follows:
\begin{align}
\ket{e_{1}^{(1)}} & =\frac{1}{\sqrt{3}}\left(\ket{1\varphi}+\mathrm{i}\sqrt{2}\ket{0\varphi_{X}}\right),\\
\ket{e_{2}^{(1)}} & =\frac{1}{\sqrt{3}}\left(\ket{1\varphi}-\mathrm{i}\frac{1}{\sqrt{2}}\ket{0\varphi_{X}}+\mathrm{i}\sqrt{\frac{3}{2}}\ket{1\varphi^{\perp}}\right),\\
\ket{e_{3}^{(1)}} & =\frac{1}{\sqrt{3}}\left(\ket{1\varphi}-\mathrm{i}\frac{1}{\sqrt{2}}\ket{0\varphi_{X}}-\mathrm{i}\sqrt{\frac{3}{2}}\ket{1\varphi^{\perp}}\right),\\
\ket{e_{4}^{(1)}} & =\ket{0\varphi_{X}^{\perp}},
\end{align}
where 
\begin{align}
\ket{\varphi^{\perp}} & =\sin(\theta/2)\ket{0}-\cos(\theta/2)\ket{1},\\
\ket{\varphi_{X}} & \equiv\sigma_{x}\ket{\varphi}=\cos(\theta/2)\ket{1}+\sin(\theta/2)\ket{0},\\
\ket{\varphi_{X}^{\perp}} & \equiv\sigma_{x}\ket{\varphi^{\perp}}=\sin(\theta/2)\ket{1}-\cos(\theta/2)\ket{0}.
\end{align}
Following theorem in the main text, we consider the measurements 
$\{\ket{\pi_{\nu,\,a}}\equiv\ket{e^{(a)}_{\nu}}\otimes\ket{a}\}$,
which is optimal. Indeed, it can be easily verified that the
CFIM $F^{C}$ is always identical to $F^{Q}$, although this choice
of measurement is meant to be locally optimal at $\bm{\lambda}=0$.

\clearpage\newpage{}
\setcounter{equation}{0} 
\setcounter{section}{0} 
\setcounter{subsection}{0} 
\setcounter{thm}{0} 
\setcounter{lem}{0} 
\renewcommand{\thesection}{S\arabic{section}}
\renewcommand{\theequation}{S\arabic{equation}}
\renewcommand{\thethm}{S\arabic{thm}}
\renewcommand{\theobs}{S\arabic{obs}}
\renewcommand{\thelem}{S\arabic{lem}}
\onecolumngrid 
\setcounter{enumiv}{0}
\begin{center}
{\large\textbf{Supplemental Materials for "A geometric criterion for optimal measurements in multiparameter quantum metrology"}}{\large\par}
\par\end{center}

\let\addcontentsline\oldaddcontentsline  

\addtocontents{toc}{\protect\thispagestyle{empty}}
\pagenumbering{gobble}

\tableofcontents{}

\section{Proof of Theorem~\ref{thm:Simultaneous=000020Hollowization=000020Theorem}
in the main text}

\label{sec:Proof-Hollowization-Theorem}

\subsection{Null POVM basis}

Our goal is to eliminate the proportionality constant $\eta_{\omega,\,ij}$
so that Eq.~(\ref{eq:Opt-Null}) can be recast into a compact form.
To this end, we show that for a given pair $(i,j)$, Eq.~(\ref{eq:Opt-Null})
is equivalent to 
\begin{equation}
\frac{\braket{\psi_{b\bm{\lambda}}\big|L_{i}\big|\pi_{\omega}}}{\braket{\psi_{b\bm{\lambda}}\big|L_{j}\big|\pi_{\omega}}}=\frac{\braket{\pi_{\omega}\big|L_{i}\big|\psi_{a\bm{\lambda}}}}{\braket{\pi_{\omega}\big|L_{j}\big|\psi_{a\bm{\lambda}}}},\,\forall a,\,b.\label{eq:Opt-Null-ratio}
\end{equation}
Following from Eq.~(\ref{eq:Opt-Null-ratio}), we find 
\begin{equation}
\braket{\pi_{\omega}\big|W_{ij,\,ab}\big|\pi_{\omega}}=0,\,\forall i,\,j,\,a,\,b,
\end{equation}
where 
\begin{equation}
W_{ij,\,ab}\equiv L_{i}\ket{\psi_{a\bm{\lambda}}}\bra{\psi_{b\bm{\lambda}}}L_{j}-L_{j}\ket{\psi_{a\bm{\lambda}}}\bra{\psi_{b\bm{\lambda}}}L_{i}.
\end{equation}
To see the reverse direction also holds, let us first note when $a=b$,
Eq.~(\ref{eq:Opt-Null-ratio}) indicates that 
\begin{equation}
\eta_{\omega,\,ij}^{(a)}\equiv\frac{\braket{\psi_{a\bm{\lambda}}\big|L_{i}\big|\pi_{\omega}}}{\braket{\psi_{a\bm{\lambda}}\big|L_{j}\big|\pi_{\omega}}}
\end{equation}
is a real number. Then for $a\neq b$, Eq.~(\ref{eq:Opt-Null-ratio})
clearly indicates that $\eta_{\omega,\,ij}^{(a)}$ is independent
of $a$.

\subsection{Regular POVM basis}

For given $i$ and $\omega$, Eq.~(\ref{eq:Opt-Reg}) is equivalent
to 
\begin{equation}
\frac{\braket{\psi_{b\bm{\lambda}}\big|L_{i}\big|\pi_{\omega}}}{\braket{\psi_{b\bm{\lambda}}\big|\pi_{\omega}}}=\frac{\braket{\pi_{\omega}\big|L_{i}\big|\psi_{a\bm{\lambda}}}}{\braket{\pi_{\omega}\big|\psi_{a\bm{\lambda}}}},\,\forall a,\,b.\label{eq:Opt-Reg-ratio}
\end{equation}
Follows from Eq.~(\ref{eq:Opt-Reg}) straightforwardly, we obtain
\begin{equation}
\braket{\pi_{\omega}\big|M_{i,\,ab}\big|\pi_{\omega}}=0,
\end{equation}
where 
\begin{equation}
M_{i,\,ab}\equiv L_{i}\ket{\psi_{a\bm{\lambda}}}\bra{\psi_{b\bm{\lambda}}}-\ket{\psi_{a\bm{\lambda}}}\bra{\psi_{b\bm{\lambda}}}L_{i}.
\end{equation}
 For the reverse direction, when $a=b$, Eq.~(\ref{eq:Opt-Reg-ratio})
implies that 
\begin{equation}
\xi_{\omega,\,i}^{(a)}\equiv\frac{\braket{\psi_{a\bm{\lambda}}\big|L_{i}\big|\pi_{\omega}}}{\braket{\psi_{a\bm{\lambda}}\big|\pi_{\omega}}},
\end{equation}
is a real number. For $a\neq b$, Eq.~(\ref{eq:Opt-Reg-ratio}) clearly
indicates that $\xi_{\omega,\,i}^{(a)}$ is independent of $a$.

Now we take one step further. Note that taking two equations from
Eq.~(\ref{eq:Opt-Reg-ratio}) for different $i$ and $j$ respectively
and then dividing them, we obtain Eq.~(\ref{eq:Opt-Null-ratio}).
However, the reverse direction is not guaranteed. To elaborate, we
rewrite Eq.~(\ref{eq:Opt-Null-ratio}) as 

\begin{equation}
\frac{\braket{\psi_{b\bm{\lambda}}\big|L_{i}\big|\pi_{\omega}}/\braket{\psi_{b\bm{\lambda}}\big|\pi_{\omega}}}{\braket{\psi_{b\bm{\lambda}}\big|L_{j}\big|\pi_{\omega}}/\braket{\psi_{b\bm{\lambda}}\big|\pi_{\omega}}}=\frac{\braket{\pi_{\omega}\big|L_{i}\big|\psi_{a\bm{\lambda}}}/\braket{\pi_{\omega}\big|\psi_{a\bm{\lambda}}}}{\braket{\pi_{\omega}\big|L_{j}\big|\psi_{a\bm{\lambda}}}/\braket{\pi_{\omega}\big|\psi_{a\bm{\lambda}}}}, \forall a,b,
\end{equation}
which implies 
\begin{equation}
\eta_{\omega,\,ij}^{(b)}=\frac{\xi_{\omega,\,i,}^{(b)}}{\xi_{\omega,\,j}^{(b)}}=\left(\frac{\xi_{\omega,\,i}^{(a)}}{\xi_{\omega,\,j}^{(a)}}\right)^{*}=\left(\eta_{\omega,\,ij}^{(a)}\right)^{*}.
\end{equation}
As before, from this equation, we conclude that $\eta_{\omega,\,ij}^{(a)}$
is real and independent of $a$ and
\begin{equation}
\frac{\xi_{\omega,\,i}^{(a)}}{\xi_{\omega,\,j}^{(a)}}=\eta_{\omega,ij}\in\mathbb{R}.
\end{equation}
This means if when Eq.~(\ref{eq:Opt-Null-ratio}) holds for regular
operators, for a given pair $(i,\,j)$, it only ensures the ratios
between $\xi_{\omega,\,i}^{(a)}$'s are real and independent of $a$,
not $\xi_{\omega,\,i}^{(a)}$ themselves. As such, to guarantee the optimality of a regular measurement operator,
in addition to imposing Eq.~(\ref{eq:Opt-Null-ratio}), we also require
$\xi_{\omega,\,i}^{(a)}$ is real and independent of $a$, which means that Eq.~(\ref{eq:Opt-Reg-ratio}) must hold. Note that when if $E_{\omega}$
is null, Eq.~(\ref{eq:Opt-Reg-ratio}) holds automatically. 

To summarize, the optimal condition becomes the following: For measurement
operator $E_{\omega}$ of any type, either null or regular, we require
Eq.~~(\ref{eq:Opt-Null-ratio}) and Eq.~(\ref{eq:Opt-Reg-ratio}) hold.   On the other hand,  Eq.~(\ref{eq:Opt-Null-ratio}) is equivalent to Eq.~(\ref{eq:W-opt})
in Theorem~\ref{thm:Simultaneous=000020Hollowization=000020Theorem}
while Eq.~(\ref{eq:Opt-Reg-ratio}) 
is equivalent to Eq.~(\ref{eq:M-opt}) in Theorem~\ref{thm:Simultaneous=000020Hollowization=000020Theorem}.
Note that given Eq.~(\ref{eq:Opt-Reg-ratio})  and  Eq.~(\ref{eq:M-opt}), there is no need to distinguish whether the measurement operator
is null or regular.  This concludes the proof of the first part of  Theorem~\ref{thm:Simultaneous=000020Hollowization=000020Theorem}
in the main text.

\section{Proof of Eq.~(\ref{eq:W-opt-pair}) and Eq.~(\ref{eq:M-opt-pair}) in the main
text}

\label{sec:Proof-Pair-Hollowization-Theorem}

Before we prove Eq.~\eqref{eq:W-opt-pair} and Eq.~\eqref{eq:M-opt-pair}, we present a useful observation as follows:

\begin{obs}\label{obs:non-singular}
\textup{If $F^Q_{\omega}\neq 0$, i.e., $[F^Q_{\omega}]_{ij}=\mathrm{Tr}(\rho_{\bm{\lambda}}L_iE_{\omega}L_j)\neq0$, there must exists an estimation parameter $\lambda_{\bar{i}}$ and an eigenstate $\ket{\psi_{\bar{a}\bm{\lambda}}}$, called central estimation parameter and central eigenstate respectively, such that 
\begin{equation}\label{eq:non-singular}
\braket{\psi_{\bar{a}\bm{\lambda}}|L_{\bar{i}}|\pi_{\omega}}\neq0.
\end{equation}
}
\end{obs}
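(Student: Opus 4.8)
The plan is to prove Observation~\ref{obs:non-singular} by contradiction, exploiting the definition of the outcome-wise QFIM directly. Suppose, toward a contradiction, that for every estimation parameter $\lambda_i$ and every eigenstate $\ket{\psi_{a\bm{\lambda}}}$ we have $\braket{\psi_{a\bm{\lambda}}|L_i|\pi_\omega}=0$. First I would rewrite the outcome-wise QFIM element in terms of the spectral decomposition $\rho_{\bm{\lambda}}=\sum_a p_{a\bm{\lambda}}\ket{\psi_{a\bm{\lambda}}}\bra{\psi_{a\bm{\lambda}}}$, obtaining
\begin{equation}
[F^Q_\omega]_{ij}=\mathrm{Tr}(\rho_{\bm{\lambda}}L_i E_\omega L_j)=\sum_a p_{a\bm{\lambda}}\braket{\psi_{a\bm{\lambda}}|L_i|\pi_\omega}\braket{\pi_\omega|L_j|\psi_{a\bm{\lambda}}},
\end{equation}
where I have inserted $E_\omega=\ket{\pi_\omega}\bra{\pi_\omega}$. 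The key observation is that this expression factorizes: if every factor $\braket{\psi_{a\bm{\lambda}}|L_i|\pi_\omega}$ vanishes, then every summand vanishes and $[F^Q_\omega]_{ij}=0$ for all $i,j$, contradicting the hypothesis $F^Q_\omega\neq 0$.

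The logical structure is therefore almost immediate once the QFIM element is expanded in the eigenbasis. The main step is to justify inserting the resolution implied by the spectral decomposition and to note that $p_{a\bm{\lambda}}>0$ (stated in the main text), so the positivity of the weights plays no essential role beyond ensuring the sum is well-defined; what matters is purely that each term is a product containing the factor $\braket{\psi_{a\bm{\lambda}}|L_i|\pi_\omega}$. Consequently, the negation of the conclusion forces $F^Q_\omega=0$, and the contrapositive gives the desired existence of at least one pair $(\bar{i},\bar{a})$ with $\braket{\psi_{\bar{a}\bm{\lambda}}|L_{\bar{i}}|\pi_\omega}\neq 0$.

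I do not anticipate a genuine obstacle here, since the statement is essentially a nondegeneracy bookkeeping result: a bilinear form built from vectors $\{\braket{\pi_\omega|L_i|\psi_{a\bm{\lambda}}}\}$ is nonzero if and only if at least one such vector component is nonzero. The only point requiring mild care is the ordering of indices and complex conjugation in the factorized form, ensuring that $\braket{\psi_{a\bm{\lambda}}|L_i|\pi_\omega}=\braket{\pi_\omega|L_i|\psi_{a\bm{\lambda}}}^*$ (using Hermiticity of the SLD $L_i$), so that the two factors in each summand are complex conjugates when $i=j$, guaranteeing $[F^Q_\omega]_{ii}\geq 0$ and, more importantly, that their simultaneous vanishing for all $i,a$ collapses the entire matrix. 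This confirms that the existence of the central pair $(\lambda_{\bar{i}},\ket{\psi_{\bar{a}\bm{\lambda}}})$ is equivalent to $F^Q_\omega\neq 0$, which is exactly what downstream reductions~(\ref{eq:W-opt-pair},\ref{eq:M-opt-pair}) require.
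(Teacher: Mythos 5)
Your proof is correct and follows essentially the same route as the paper: assume all $\braket{\psi_{a\bm{\lambda}}|L_i|\pi_\omega}$ vanish, expand $[F^Q_\omega]_{ij}$ over the spectral decomposition of $\rho_{\bm{\lambda}}$ so that each summand contains this factor, and conclude $F^Q_\omega=0$, contradicting the hypothesis. The only cosmetic difference is that the paper keeps the real part $\mathrm{Re}[\cdots]$ in the expansion (consistent with its definition of $[F^Q_\omega]_{ij}$), which changes nothing since each term vanishes identically.
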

\begin{proof}
We prove by constradictions. Assuming $\braket{\psi_{{a}\bm{\lambda}}|L_{{i}}|\pi_{\omega}}=0$  for all $a$ and $i$. Then it can be readily found that 
\begin{equation}
    [F^Q_{\omega}]_{ij}=\sum_a p_a \mathrm{Re}[\braket{\psi_{{a}\bm{\lambda}}|L_{{i}}|\pi_{\omega}}\braket{\pi_{\omega}|L_{{j}}|\psi_{{a}\bm{\lambda}}}]=0,
\end{equation}
which is in contradiction to the premise that $F^Q_{\omega}\neq 0$.
\end{proof}

\subsection{Choose a central estimation parameter}

By definition, we know 
\begin{equation}\label{eq:xi-constraint}
\xi_{\omega,\,j}^{(b)}=\xi^{(b)}_{\omega,\,\bar{i}}\eta^{(b)}_{\omega,j\bar{i}}.
\end{equation}
Let us assume Eq.~\eqref{eq:W-opt-pair} and Eq.~\eqref{eq:M-opt-pair} hold for the
estimation parameter $\lambda_{\bar{i}}$ and the central eigenstate $\ket{\psi_{\bar{a}\bm{\lambda}}}$.  Then we know that $\eta^{(b)}_{\omega,j\bar{i}}$ and $\xi^{(b)}_{\omega,\,\bar{i}}$ are
real and independent of $b$. Therefore we conclude that $\xi_{\omega,\,j}^{(b)}$ must
be also real and independent of $b$, which implies that Eq.~\eqref{eq:M-opt} holds for all the estimation parameters and eigenstates.

On the other hand, we should avoid $\xi^{(b)}_{\omega,\,\bar{i}}$ and $\eta^{(b)}_{\omega,j\bar{i}}$ being indetermine, i.e. of the type $0/0$. In this case,  the constraint given by Eq.~\eqref{eq:xi-constraint}  is trivial in that it is not able to constrain  $\xi_{\omega,\,j}^{(b)}$ such that is it is real and independent of $b$. Fortunately, thanks to Observation~\ref{obs:non-singular},when Eq.~\eqref{eq:M-opt-pair} holds, we know $\xi_{\omega,\,j}^{(b)}=\xi_{\omega,\,j}^{(\bar{a})}$ is not of the type $0/0$.
Similar arugements can be applied to conclude that $\eta^{(b)}_{\omega,j\bar{i}}$ is not of the type $0/0$ when Eq.~\eqref{eq:W-opt-pair} and  hold.

\subsection{Choose a central eigenstate}
Given Eq.~\eqref{eq:W-opt-pair} and Eq.~\eqref{eq:M-opt-pair} hold, it can be readily shown that for any given pair $(i,j)$ and $(a,b)$,
\begin{align}
\frac{\braket{\psi_{b\bm{\lambda}}\big|L_{i}\big|\pi_{\omega}}}{\braket{\psi_{b\bm{\lambda}}\big|L_{j}\big|\pi_{\omega}}} & =\frac{\braket{\psi_{b\bm{\lambda}}\big|L_{i}\big|\pi_{\omega}}}{\braket{\psi_{b\bm{\lambda}}\big|L_{\bar{i}}\big|\pi_{\omega}}}\frac{\braket{\psi_{b\bm{\lambda}}\big|{L_{\bar{i}}}\big|\pi_{\omega}}}{\braket{\psi_{b\bm{\lambda}}\big|L_{j}\big|\pi_{\omega}}}=\frac{\braket{\pi_{\omega}\big|L_{i}\big|\psi_{\bar{a}\bm{\lambda}}}}{\braket{\pi_{\omega}\big|L_{\bar{i}}\big|\psi_{\bar{a}\bm{\lambda}}}}\frac{\braket{\pi_{\omega}\big|L_{\bar{i}}\big|\psi_{\bar{a}\bm{\lambda}}}}{\braket{\pi_{\omega}\big|L_{j}\big|\psi_{\bar{a}\bm{\lambda}}}}\nonumber \\
 & =\frac{\braket{\psi_{\bar{a}\bm{\lambda}}\big|L_{i}\big|\pi_{\omega}}}{\braket{\psi_{\bar{a}\bm{\lambda}}\big|L_{\bar{i}}\big|\pi_{\omega}}}\frac{\braket{\psi_{\bar{a}\bm{\lambda}}\big|L_{\bar{i}}\big|\pi_{\omega}}}{\braket{\psi_{\bar{a}\bm{\lambda}}\big|L_{j}\big|\pi_{\omega}}}=\frac{\braket{\pi_{\omega}\big|L_{i}\big|\psi_{c\lambda}}}{\braket{\pi_{\omega}\big|L_{\bar{i}}\big|\psi_{c\bm{\lambda}}}}\frac{\braket{\pi_{\omega}\big|L_{\bar{i}}\big|\psi_{c\bm{\lambda}}}}{\braket{\pi_{\omega}\big|L_{j}\big|\psi_{c\bm{\lambda}}}}=\frac{\braket{\pi_{\omega}\big|L_{i}\big|\psi_{c\lambda}}}{\braket{\pi_{\omega}\big|L_{j}\big|\psi_{c\bm{\lambda}}}}.
\end{align}
In othe words, 
\begin{equation}
\eta_{\omega,\,ij}^{(b)}=\eta_{\omega,\,i\bar{i}}^{(b)}\eta_{\omega,\,\bar{i}j}^{(b)}=\eta_{\omega,\,i\bar{i}}^{(\bar{a})*}\eta_{\omega,\,\bar{i}j}^{*(\bar{a})}=\eta_{\omega,\,i\bar{i}}^{(\bar{a})}\eta_{\omega,\,\bar{i}j}^{(\bar{a})}=\eta_{\omega,\,i\bar{i}}^{(c)*}\eta_{\omega,\,\bar{i}j}^{(c)*}=\eta_{\omega,\,i\bar{i}}^{(c)*}\eta_{\omega,\,\bar{i}j}^{(c)*}=\eta_{\omega,\,ij}^{(c)*},
\end{equation}
which is equivalent to Eq.~\eqref{eq:W-opt}. Similar with the previous case, if $\eta_{\omega,\,i\bar{i}}^{(b)}$ and $\eta_{\omega,\,\bar{i}j}^{(b)}$ are of the type $0/0$, it will invalidate the above identities. Nevertheless, 
Observation~\ref{obs:non-singular} excludes these singular cases.

\section{Results on the linearly (in)dependence of $\{W_{ij,\,ab}^{(\alpha)},\,M_{i,ab}^{(\alpha)}\}$}

\label{sec:lin-indept}

In this section, unless otherwise stated, we always focus on real
vector space and therefore linear independence or dependence is always
defined over the real field. 
\begin{lem}
\textup{The QFIM is non-singular, then the set of $\{L_{i}\}_{i=1}^{s}$
are linearly independent in $\mathrm{Herm}(d)$.}
\end{lem}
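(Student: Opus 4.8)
The plan is to argue by contraposition: I will show that if the operators $\{L_i\}_{i=1}^s$ are linearly dependent over $\mathbb{R}$ in $\mathrm{Herm}(d)$, then the QFIM $F^Q$ must be singular. Since $\mathrm{Herm}(d)$ is a real vector space and each $L_i$ is Hermitian, the relevant notion of linear (in)dependence uses real coefficients, which is exactly what keeps the associated linear combination Hermitian and the quadratic form real.

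First I would record the explicit form of the QFIM implied by the definitions in the main text. Summing $[F^Q_\omega]_{ij}=\mathrm{Re}\,\mathrm{Tr}(\rho_{\bm{\lambda}}L_iE_\omega L_j)$ over $\omega$ and using the completeness relation $\sum_\omega E_\omega=\mathbb{I}$ gives
\[
[F^Q]_{ij}=\mathrm{Re}\,\mathrm{Tr}(\rho_{\bm{\lambda}}L_iL_j)=\tfrac12\,\mathrm{Tr}\big(\rho_{\bm{\lambda}}\{L_i,L_j\}\big),
\]
where the second equality uses the Hermiticity of $\rho_{\bm{\lambda}}$, $L_i$, and $L_j$. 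This is the standard symmetric, positive-semidefinite QFIM.

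Next, suppose $\sum_i c_i L_i=0$ for some real vector $\bm{c}=(c_1,\dots,c_s)\neq 0$, and set $L\equiv\sum_i c_i L_i$. I would then evaluate the quadratic form
\[
\bm{c}^{\mathrm{T}}F^Q\bm{c}=\sum_{i,j}c_ic_j[F^Q]_{ij}=\tfrac12\,\mathrm{Tr}\big(\rho_{\bm{\lambda}}\{L,L\}\big)=\mathrm{Tr}(\rho_{\bm{\lambda}}L^2)=0,
\]
which vanishes because $L=0$. Since $F^Q$ is positive semidefinite — being a quantum Fisher information matrix, or equivalently since $\mathrm{Tr}(\rho_{\bm{\lambda}}X^2)\ge 0$ for every Hermitian $X$ — the elementary fact that $A\succeq 0$ together with $\bm{c}^{\mathrm{T}}A\bm{c}=0$ forces $A\bm{c}=0$ yields $\bm{c}\in\ker F^Q$. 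Hence $F^Q$ has a nontrivial kernel and is singular, contradicting the hypothesis; the contrapositive is the claim.

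I do not anticipate a genuine obstacle here. The only points that require care are the bookkeeping that keeps the quadratic form real and identifies it with $\mathrm{Tr}(\rho_{\bm{\lambda}}L^2)$ via the anticommutator form of $F^Q$, and the invocation of positive semidefiniteness, which is precisely what upgrades the scalar identity $\bm{c}^{\mathrm{T}}F^Q\bm{c}=0$ to the membership $\bm{c}\in\ker F^Q$.
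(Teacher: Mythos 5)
Your proof is correct and follows essentially the same route as the paper's: both argue by contraposition that a real linear dependence among the $\{L_i\}$ transfers, via the bilinearity of $[F^Q]_{ij}=\mathrm{Re}\,\mathrm{Tr}(\rho_{\bm{\lambda}}L_iL_j)$, into a nontrivial kernel vector (equivalently, a dependent column) of the QFIM. Your detour through the quadratic form and positive semidefiniteness is valid but unnecessary, since $L=\sum_i c_iL_i=0$ already gives $[F^Q\bm{c}]_j=\mathrm{Re}\,\mathrm{Tr}(\rho_{\bm{\lambda}}L_jL)=0$ directly by linearity, which is how the paper phrases it.
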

\begin{proof}
We prove by contradiction. Consider the matrix element of the QFIM
$[F^{Q}]_{jk}=\mathrm{ReTr}(\rho_{\bm{\lambda}}L_{j}L_{k})$. If $\{L_{i}\}_{i=1}^{s}$
are not linearly dependent, then $\exists k$ such that $L_{k}=\sum_{l\neq k}c_l L_{l}$ with $c_l\in \mathbb{R}$
and therefore $[F^{Q}]_{jk}=\sum_{l\neq k} c_l [F^{Q}]_{jl}$. That is, the
$k$-column of the QFIM is linear combination of the other column.
Therefore the QFIM is rank-deficient and singular, which completes
the proof. 
\end{proof}
\begin{lem}
\textup{If the QFIM is non-singular, then $[F^{Q}]_{jj}=\mathrm{Tr}(\rho_{\bm{\lambda}}L_{j}^{2})$
must be strictly positive.}
\end{lem}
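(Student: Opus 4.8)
The plan is to argue by contradiction, using only the positive semidefiniteness of $\rho_{\bm{\lambda}}$ together with the strict positivity of its spectral weights $p_{a\bm{\lambda}}$ (no appeal to the preceding lemma is needed). First I would record that the diagonal entry is real and nonnegative. Inserting the spectral decomposition $\rho_{\bm{\lambda}}=\sum_{a}p_{a\bm{\lambda}}\ket{\psi_{a\bm{\lambda}}}\bra{\psi_{a\bm{\lambda}}}$ and using Hermiticity of $L_{j}$ gives
\begin{equation}
[F^{Q}]_{jj}=\mathrm{Tr}(\rho_{\bm{\lambda}}L_{j}^{2})=\sum_{a=1}^{r}p_{a\bm{\lambda}}\,\big\|L_{j}\ket{\psi_{a\bm{\lambda}}}\big\|^{2}\ge0,
\end{equation}
where every weight satisfies $p_{a\bm{\lambda}}>0$ and every summand is a squared norm.

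Next I would suppose, for contradiction, that $[F^{Q}]_{jj}=0$. Since the sum above has strictly positive coefficients and nonnegative terms, its vanishing forces $L_{j}\ket{\psi_{a\bm{\lambda}}}=0$ for every $a$; that is, $L_{j}$ annihilates the entire support of $\rho_{\bm{\lambda}}$. Taking the adjoint and using that $L_{j}$ is Hermitian, one equally has $\bra{\psi_{a\bm{\lambda}}}L_{j}=0$ for all $a$.

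The one nonroutine step is to show that this annihilation kills the whole $j$-th row of the QFIM. Here I would rewrite the off-diagonal entry in its symmetrized form, which follows from $[F^{Q}]_{jk}=\mathrm{ReTr}(\rho_{\bm{\lambda}}L_{j}L_{k})$ by cyclicity of the trace,
\begin{equation}
[F^{Q}]_{jk}=\tfrac{1}{2}\sum_{a=1}^{r}p_{a\bm{\lambda}}\,\bra{\psi_{a\bm{\lambda}}}\big(L_{j}L_{k}+L_{k}L_{j}\big)\ket{\psi_{a\bm{\lambda}}},
\end{equation}
and observe that each term carries either the factor $L_{j}\ket{\psi_{a\bm{\lambda}}}=0$ or $\bra{\psi_{a\bm{\lambda}}}L_{j}=0$, so that $[F^{Q}]_{jk}=0$ for every $k$. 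A vanishing row and column render $F^{Q}$ singular, contradicting the hypothesis, and therefore $[F^{Q}]_{jj}>0$. I do not anticipate a genuine obstacle: the proof is short, and the only point requiring a little care is justifying the passage to the anticommutator form so that the annihilation property propagates from the diagonal to the whole row.
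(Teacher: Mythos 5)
Your proof is correct, but it follows a different route from the paper's. The paper argues via the Cauchy--Schwarz inequality on the Hilbert--Schmidt inner product: writing $\mathrm{Tr}(\rho_{\bm{\lambda}}L_{j}L_{k})=\langle L_{j}\sqrt{\rho_{\bm{\lambda}}},\,L_{k}\sqrt{\rho_{\bm{\lambda}}}\rangle$, the assumption $F^{Q}_{jj}=0$ bounds $|\mathrm{Tr}(\rho_{\bm{\lambda}}L_{j}L_{k})|^{2}\leq F^{Q}_{jj}F^{Q}_{kk}=0$, so the whole $j$-th row vanishes and $F^{Q}$ is singular. You instead expand in the spectral decomposition, use the strict positivity of the $p_{a\bm{\lambda}}$ to conclude $L_{j}\ket{\psi_{a\bm{\lambda}}}=0$ for all $a$, and then propagate this to the off-diagonal entries through the symmetrized (anticommutator) form of $\mathrm{Re}\,\mathrm{Tr}(\rho_{\bm{\lambda}}L_{j}L_{k})$; that symmetrization step is correctly justified by Hermiticity and cyclicity. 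Both arguments are short and valid. The Cauchy--Schwarz route is more compact and does not require passing to the eigenbasis; your route is more explicit and has the side benefit of extracting the statement $L_{j}\ket{\psi_{a\bm{\lambda}}}=0$ for all $a$, which is precisely the content the paper isolates in the lemma that follows this one, so your argument effectively proves both at once.
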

\begin{proof}
We prove by contradiction. If $\exists j$ such that $[F^{Q}]_{jj}=\mathrm{Tr}(\rho_{\bm{\lambda}}L_{j}^{2})=0$,
the using Cauch-Schwarz inequality
\begin{equation}
|\mathrm{Tr}(\rho_{\bm{\lambda}}L_{j}L_{k})|^{2}=|\langle L_{j}\sqrt{\rho_{\bm{\lambda}}},\,L_{k}\sqrt{\rho_{\bm{\lambda}}}\rangle|^{2}\leq F_{jj}^{Q}F_{kk}^{Q}=0,
\end{equation}
which forces $\mathrm{Tr}(\rho_{\bm{\lambda}}L_{j}L_{k})=0$ for all
$k$. This means that $[F^{Q}]_{jk}=0$ for all $k$, which indicates
that the $j$-row and the $j$-column of the QFIM vanishes. Therefore
the QFIM is rank-deficient and singular, which completes the proof.
\end{proof}
An immediate consequence of above lemma is the following:
\begin{lem}
\textup{If the QFIM is non-singular, then for every parameter $\lambda_{j}$,
there must exists an eigenstate of $\rho_{\bm{\lambda}}$ corresponding
to a positive eigenvalue such that $L_{j}\ket{\psi_{a\bm{\lambda}}}\neq0$.}
\end{lem}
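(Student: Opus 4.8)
The plan is to derive this statement as an immediate corollary of the preceding lemma, which guarantees that whenever the QFIM is non-singular one has the strict positivity $[F^{Q}]_{jj}=\mathrm{Tr}(\rho_{\bm{\lambda}}L_{j}^{2})>0$ for every parameter index $j$. First I would insert the spectral decomposition $\rho_{\bm{\lambda}}=\sum_{a=1}^{r}p_{a\bm{\lambda}}\ket{\psi_{a\bm{\lambda}}}\bra{\psi_{a\bm{\lambda}}}$ into this trace. Since the SLD $L_{j}$ is Hermitian, $L_{j}^{2}$ is positive semi-definite and each diagonal matrix element is a squared norm, yielding
\begin{equation}
[F^{Q}]_{jj}=\sum_{a=1}^{r}p_{a\bm{\lambda}}\braket{\psi_{a\bm{\lambda}}|L_{j}^{2}|\psi_{a\bm{\lambda}}}=\sum_{a=1}^{r}p_{a\bm{\lambda}}\,\big\|L_{j}\ket{\psi_{a\bm{\lambda}}}\big\|^{2}.
\end{equation}

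The key step is then a simple positivity argument. Because each weight $p_{a\bm{\lambda}}$ is strictly positive by assumption and every summand $\big\|L_{j}\ket{\psi_{a\bm{\lambda}}}\big\|^{2}\ge0$, a strictly positive total sum forces at least one term to be nonzero. Hence there must exist an index $a$ with $\big\|L_{j}\ket{\psi_{a\bm{\lambda}}}\big\|^{2}>0$, i.e.\ $L_{j}\ket{\psi_{a\bm{\lambda}}}\neq0$. Since every eigenstate appearing in the decomposition is associated with a strictly positive eigenvalue $p_{a\bm{\lambda}}>0$ (as $r$ is the rank of $\rho_{\bm{\lambda}}$), the eigenstate so obtained automatically corresponds to a positive eigenvalue, which is exactly the claim.

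There is essentially no technical obstacle in the final argument itself; the only point requiring care is the chain of dependencies on the two preceding lemmas, namely that non-singularity of the QFIM rules out a vanishing diagonal element so that the \emph{strict} inequality $[F^{Q}]_{jj}>0$ is genuinely available. Once that input is secured, the conclusion follows purely from the non-negativity of the individual squared norms together with the strict positivity of the spectral weights.
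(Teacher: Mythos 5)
Your proof is correct and essentially identical to the paper's: both rely on the decomposition $[F^{Q}]_{jj}=\sum_{a}p_{a\bm{\lambda}}\|L_{j}\ket{\psi_{a\bm{\lambda}}}\|^{2}$ together with the preceding lemma that non-singularity forces $[F^{Q}]_{jj}>0$. The only difference is presentational — the paper phrases it as a proof by contradiction while you argue the contrapositive directly.
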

\begin{proof}
It can be readily proved by contraction. We note that $[F^{Q}]_{jj}=\mathrm{Tr}(\rho_{\bm{\lambda}}L_{j}^{2})=\sum_{a}p_{a\bm{\lambda}}\|L_{j}\ket{\psi_{a\bm{\lambda}}}\|^{2}$.
Thus for given $j$, if $L_{j}\ket{\psi_{a\bm{\lambda}}}=0$ for all
$a$, then $[F^{Q}]_{jj}=0$, which leads to singular QFIM. 
\end{proof}

Before proceeding, let us  note that
the SLD operator can be rewritten as~$L_{i}=L_{i}^{\mathrm{ss}}+(L_{i}^{\mathrm{sk}}+\mathrm{h.c.})+L_{i}^{\mathrm{kk}}$,
where $L_{i}^{\alpha\beta}\equiv\Pi_{\mathrm{\alpha}}L_{i}\Pi_{\mathrm{\beta}}$,
$\Pi_{\mathrm{s}}$ and $\Pi_{\mathrm{k}}$ are projectors to the
support and kernel of $\rho_{\bm{\lambda}}$, respectively. In particular,
it is shown in \citep{yang2019optimal,yang2024purestate} that 
\begin{equation}
L_{i}^{\mathrm{sk}}=2\sum_{a}\ket{\psi_{a\bm{\lambda}}}\bra{\partial_{i}\psi_{a\bm{\lambda}}}\Pi_{k},\quad L_{i}^{\mathrm{kk}}=0.\label{eq:Lsk-expression}
\end{equation}

\begin{thm} \label{thm:vanishing-M-condition}
\textup{(Vanishing $M$ condition)
For a given estimation parameter $\lambda_{j}$, if $M_{j,\,ab}^{(\alpha)}=0$
for all $a$, $b$, $\alpha$. Then $L_{j}$ bears the structure
\begin{equation}
L_{j}=\alpha_{j}\Pi_{\mathrm{s}},
\end{equation}
where $\alpha_{j}\in\mathbb{R}$ and the density matrix satisfies}
\begin{equation}
\partial_{j}\ln p_{a\bm{\lambda}}=\alpha_{j},\,\forall a\in[1,r],
\end{equation}
\begin{equation}
p_{a\bm{\lambda}}=p_{b\bm{\lambda}}\,\mathrm{or}\,\braket{\partial_{j}\psi_{a\bm{\lambda}}|\psi_{b\bm{\lambda}}}=0,\,\forall a,\,b\in[1,r],a\neq b,
\end{equation}
\textup{and $\ket{\partial_{j}\psi_{a\bm{\lambda}}}$ fully lies the
support of $\rho_{\bm{\lambda}}$.}
\end{thm}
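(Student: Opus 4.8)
The plan is to first collapse the hypothesis into a single family of operator identities, then pin down the eigenstructure of $L_j$, and finally read the three stated consequences off the defining equation of the symmetric logarithmic derivative. First I would observe that the vanishing of $M_{j,ab}^{(\alpha)}$ for $\alpha=x,y,z$ is equivalent, upon taking suitable real and imaginary combinations of the $x$- and $y$-versions, to the commutator identities $[L_j,P_{ab}]=0$ for every pair $(a,b)$, where $P_{ab}=\ket{\psi_{a\bm{\lambda}}}\bra{\psi_{b\bm{\lambda}}}$. Writing this as $L_j\ket{\psi_{a\bm{\lambda}}}\bra{\psi_{b\bm{\lambda}}}=\ket{\psi_{a\bm{\lambda}}}\bra{\psi_{b\bm{\lambda}}}L_j$ and applying it to $\ket{\psi_{b\bm{\lambda}}}$ on the right gives $L_j\ket{\psi_{a\bm{\lambda}}}=\braket{\psi_{b\bm{\lambda}}|L_j|\psi_{b\bm{\lambda}}}\ket{\psi_{a\bm{\lambda}}}$. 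Since the left-hand side is manifestly independent of $b$, the real scalar $\braket{\psi_{b\bm{\lambda}}|L_j|\psi_{b\bm{\lambda}}}$ must be a single common constant $\alpha_j$, so that every support eigenvector satisfies $L_j\ket{\psi_{a\bm{\lambda}}}=\alpha_j\ket{\psi_{a\bm{\lambda}}}$.

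With this in hand I would assemble the block structure of $L_j$. Because $L_j$ maps each support eigenvector to a multiple of itself, the support-support block is $\Pi_{\mathrm{s}}L_j\Pi_{\mathrm{s}}=\alpha_j\Pi_{\mathrm{s}}$ and the support-kernel block vanishes, $\Pi_{\mathrm{s}}L_j\Pi_{\mathrm{k}}=0$; combined with $L_j^{\mathrm{kk}}=0$ from the SLD structure in Eq.~(\ref{eq:Lsk-expression}), this yields $L_j=\alpha_j\Pi_{\mathrm{s}}$, establishing the first claim.

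The remaining identities I would obtain by evaluating matrix elements of $\partial_j\rho_{\bm{\lambda}}=\tfrac{1}{2}(L_j\rho_{\bm{\lambda}}+\rho_{\bm{\lambda}}L_j)$ in the eigenbasis. The diagonal element in the state $\ket{\psi_{a\bm{\lambda}}}$, after using $\partial_j\braket{\psi_{a\bm{\lambda}}|\psi_{a\bm{\lambda}}}=0$, gives $\partial_j p_{a\bm{\lambda}}=\alpha_j p_{a\bm{\lambda}}$, i.e.\ $\partial_j\ln p_{a\bm{\lambda}}=\alpha_j$. The off-diagonal element between $\ket{\psi_{a\bm{\lambda}}}$ and $\ket{\psi_{b\bm{\lambda}}}$ with $a\neq b$, after using $\braket{\partial_j\psi_{a\bm{\lambda}}|\psi_{b\bm{\lambda}}}+\braket{\psi_{a\bm{\lambda}}|\partial_j\psi_{b\bm{\lambda}}}=0$, reduces to $(p_{a\bm{\lambda}}-p_{b\bm{\lambda}})\braket{\partial_j\psi_{a\bm{\lambda}}|\psi_{b\bm{\lambda}}}=0$, which is precisely the third claim. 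Finally, comparing the vanishing block $\Pi_{\mathrm{s}}L_j\Pi_{\mathrm{k}}=0$ with the explicit formula $L_j^{\mathrm{sk}}=2\sum_a\ket{\psi_{a\bm{\lambda}}}\bra{\partial_j\psi_{a\bm{\lambda}}}\Pi_{\mathrm{k}}$ and invoking the linear independence of the $\ket{\psi_{a\bm{\lambda}}}$ forces $\Pi_{\mathrm{k}}\ket{\partial_j\psi_{a\bm{\lambda}}}=0$, so each $\ket{\partial_j\psi_{a\bm{\lambda}}}$ lies fully in the support.

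The main obstacle is the eigenstructure step: one must recognize that the full family of commutator conditions forces $L_j$ to act as a single scalar on the entire support, the essential trick being that right-multiplication by $\ket{\psi_{b\bm{\lambda}}}$ isolates an expression independent of $b$ and thereby collapses what could a priori be distinct eigenvalues into one. Once $L_j=\alpha_j\Pi_{\mathrm{s}}$ is secured, the three consequences are routine bookkeeping of SLD matrix elements.
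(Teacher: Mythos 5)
Your proof is correct, and it reaches the conclusion by a genuinely different route from the paper. The paper's argument splits the commutator as $[\sigma_{ab}^{(\alpha)},L_j]=[\sigma_{ab}^{(\alpha)},L_j^{\mathrm{ss}}]+(\sigma_{ab}^{(\alpha)}L_j^{\mathrm{sk}}-\mathrm{h.c.})$, notes that the three pieces live in linearly independent blocks so each must vanish separately, and then invokes Schur's lemma for the irreducible fundamental representation of $\mathfrak{u}(r)$ to conclude $L_j^{\mathrm{ss}}\propto\Pi_{\mathrm{s}}$. You instead act with $[L_j,P_{ab}]=0$ on $\ket{\psi_{b\bm{\lambda}}}$ to get $L_j\ket{\psi_{a\bm{\lambda}}}=\braket{\psi_{b\bm{\lambda}}|L_j|\psi_{b\bm{\lambda}}}\ket{\psi_{a\bm{\lambda}}}$ and use the $b$-independence of the left-hand side to collapse all the would-be eigenvalues into one real constant; the block structure $L_j=\alpha_j\Pi_{\mathrm{s}}$ then follows from Hermiticity together with $L_j^{\mathrm{kk}}=0$, exactly as in the paper. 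Your version is more elementary---it needs no representation theory---while the paper's Schur-lemma argument makes transparent why irreducibility of the $\sigma_{ab}^{(\alpha)}$ algebra on the support is the operative mechanism. A further point in your favor: the paper's proof stops at $L_j^{\mathrm{ss}}\propto\Pi_{\mathrm{s}}$ and never explicitly derives the three stated consequences for the density matrix, whereas you obtain $\partial_j\ln p_{a\bm{\lambda}}=\alpha_j$, the dichotomy $(p_{a\bm{\lambda}}-p_{b\bm{\lambda}})\braket{\partial_j\psi_{a\bm{\lambda}}|\psi_{b\bm{\lambda}}}=0$, and $\Pi_{\mathrm{k}}\ket{\partial_j\psi_{a\bm{\lambda}}}=0$ by direct evaluation of the SLD equation and of $L_j^{\mathrm{sk}}=0$ against Eq.~(\ref{eq:Lsk-expression}); this is routine but makes the statement of the theorem fully accounted for.
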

\begin{proof}
We note that the set of matrices $\{\sigma_{ab}^{(\alpha)}\}_{}$
containing $r^{2}$ elements forms a basis of the subspace of a Hermitian
operators that fully lies on the support of $\rho_{\lambda}$, which
is isomorphic to $\mathrm{Herm}(r)$ or $\mathfrak{u}(r)$. Note that $\sigma_{ab}^{(\alpha)}$ is an operator acting on the support of $\rho_{\bm{\lambda}}$, we know $L_{i}^{\mathrm{sk}}\sigma_{ab}^{(\alpha)}=\sigma_{ab}^{(\alpha)}L_{i}^{\mathrm{ks}}=0$. Therefore, we obtain
\begin{equation}
[\sigma_{ab}^{(\alpha)},\,L_{i}]=[\sigma_{ab}^{(\alpha)},\,L_{i}^{\mathrm{ss}}]+(\sigma_{ab}^{(\alpha)}L_{i}^{\mathrm{sk}}-\mathrm{h.c.}).\label{eq:commutator}
\end{equation}
On the other hand, we know $[\sigma_{ab}^{(\alpha)},\,L_{i}^{\mathrm{ss}}]$is an operator acting on the support of $\rho_{\bm{\lambda}}$, $\sigma_{ab}^{(\alpha)}L_{i}^{\mathrm{sk}}$ maps the kernel space to the support space of $\rho_{\bm{\lambda}}$ while its Hermitian conjugate maps support space to the kernel space of $\rho_{\bm{\lambda}}$, these three terms must be linearly independently. Using Eq.~\eqref{eq:Lsk-expression}, we obtain
\begin{equation}
L_{i}^{\mathrm{sk}}=0, \,[\sigma_{ab}^{(\alpha)},\,L_{i}^{\mathrm{ss}}]=0
\end{equation}
On the other hand, $\{\mathrm{i}\sigma_{ab}^{(\alpha)}\}_{ab,\,{\alpha}}$ is basis of the fundamental representation of  the $\mathfrak{u}(r)$ algebra, which is irreducible.
According to Schur's lemma, we know 
\begin{equation}
L_{i}^{\mathrm{ss}}\propto\Pi_{\mathrm{s}}.
\end{equation}
The Hermiticity requirement imposes the proportionality constant must
be real. This concludes the proof.
\end{proof}

The following theorem states that for a given $i$, when the set of vectors $\{\Pi_{\mathrm{k}}\ket{\partial_{i}\psi_{a\bm{\lambda}}}\}_{a\in[1,\,r]}$
are linearly independent, the set of matrices $\{M_{i,\,ab}^{(\alpha)}\}_{a,\,b\in[1,\,r],\,\alpha=x,y,z}$
are also linearly independent.
\begin{thm}
\textup{(Linear independence among $M$'s) For a given $i$, if the
set of vectors$\{\Pi_{k}\ket{\partial_{i}\psi_{a\bm{\lambda}}}\}_{a\in[1,r]}$
is linearly independent then the set of matrices $\{M_{i,\,ab}^{(\alpha)}\}$
must be linear independent.}
\end{thm}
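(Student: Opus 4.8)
The plan is to prove linear independence directly: I would suppose a real linear combination $\sum_{a,b,\alpha} c_{ab}^{(\alpha)} M_{i,ab}^{(\alpha)}=0$ and show that every coefficient $c_{ab}^{(\alpha)}$ must vanish. The decisive tool is the block decomposition already recorded in Eq.~\eqref{eq:commutator}, namely $M_{i,ab}^{(\alpha)}=[\sigma_{ab}^{(\alpha)},\,L_{i}^{\mathrm{ss}}]+(\sigma_{ab}^{(\alpha)}L_{i}^{\mathrm{sk}}-\mathrm{h.c.})$, in which the three summands occupy the mutually disjoint support-support, support-kernel, and kernel-support blocks relative to the splitting $\mathbb{I}=\Pi_{\mathrm{s}}+\Pi_{\mathrm{k}}$. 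I would therefore project the vanishing combination onto the support-kernel block $\Pi_{\mathrm{s}}(\cdot)\Pi_{\mathrm{k}}$: the commutator term $[\sigma_{ab}^{(\alpha)},\,L_{i}^{\mathrm{ss}}]$ lives entirely in the support-support block and the $\mathrm{h.c.}$ term in the kernel-support block, so both drop out. This leaves the single clean condition $S\,L_{i}^{\mathrm{sk}}=0$, where $S\equiv\sum_{a,b,\alpha}c_{ab}^{(\alpha)}\sigma_{ab}^{(\alpha)}$ is a Hermitian operator acting on the support (Hermitian because each $\sigma_{ab}^{(\alpha)}$ is Hermitian and the $c_{ab}^{(\alpha)}$ are real).

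Next I would substitute the explicit form of $L_{i}^{\mathrm{sk}}$ from Eq.~\eqref{eq:Lsk-expression}. Writing $\ket{\chi_{a}}\equiv\Pi_{\mathrm{k}}\ket{\partial_{i}\psi_{a\bm{\lambda}}}$ and using $\Pi_{\mathrm{k}}^{\dagger}=\Pi_{\mathrm{k}}$, this gives $L_{i}^{\mathrm{sk}}=2\sum_{a}\ket{\psi_{a\bm{\lambda}}}\bra{\chi_{a}}$, so the condition becomes $\sum_{a}(S\ket{\psi_{a\bm{\lambda}}})\bra{\chi_{a}}=0$. Here the hypothesis enters: since $\{\ket{\chi_{a}}\}_{a=1}^{r}$ are assumed linearly independent, their Gram matrix is invertible, so for each fixed $b$ I can build a kernel vector $\ket{\phi_{b}}$ biorthogonal to them, $\braket{\chi_{a}|\phi_{b}}=\delta_{ab}$, and apply the operator equation to $\ket{\phi_{b}}$, concluding $S\ket{\psi_{b\bm{\lambda}}}=0$ for every $b$. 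Because $\{\ket{\psi_{a\bm{\lambda}}}\}$ is a basis of the support and $S=\Pi_{\mathrm{s}}S\Pi_{\mathrm{s}}$ annihilates the kernel by construction, this forces $S=0$. Finally, invoking the fact established in the proof of Theorem~\ref{thm:vanishing-M-condition} that $\{\sigma_{ab}^{(\alpha)}\}$ is a genuine basis of the $r^{2}$-dimensional space of Hermitian operators on the support, $S=0$ yields $c_{ab}^{(\alpha)}=0$ for all $a,b,\alpha$, completing the argument.

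I expect the only real subtlety, rather than a deep obstacle, to be the passage from $\sum_{a}(S\ket{\psi_{a\bm{\lambda}}})\bra{\chi_{a}}=0$ to $S\ket{\psi_{b\bm{\lambda}}}=0$. It is tempting to simply ``read off'' the coefficients, but since the $\ket{\chi_{a}}$ are not orthonormal, merely linearly independent, one must explicitly use the dual vectors supplied by the invertible Gram matrix; this is precisely the step that consumes the linear-independence hypothesis and would fail without it. Everything else is bookkeeping: the block orthogonality in the first paragraph is immediate from $\Pi_{\mathrm{s}}\Pi_{\mathrm{k}}=0$ together with $L_{i}^{\mathrm{kk}}=0$, and the final descent from $S=0$ to the vanishing of the coefficients is just the basis property. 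I would also note that the support-support block furnishes a second, redundant condition $\sum_{a,b,\alpha}c_{ab}^{(\alpha)}[\sigma_{ab}^{(\alpha)},\,L_{i}^{\mathrm{ss}}]=0$, automatically satisfied once $S=0$, so it need not be used.
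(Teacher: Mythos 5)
Your proposal is correct and follows essentially the same route as the paper's own proof: decompose the linear combination $\sum_{ab,\alpha}c_{ab}^{(\alpha)}M_{i,ab}^{(\alpha)}$ into the support--support and support--kernel (plus h.c.) blocks, isolate the support--kernel block to obtain $S\,L_{i}^{\mathrm{sk}}=0$ with $S\equiv\sum_{ab,\alpha}c_{ab}^{(\alpha)}\sigma_{ab}^{(\alpha)}$, use the linear independence of $\{\Pi_{\mathrm{k}}\ket{\partial_{i}\psi_{a\bm{\lambda}}}\}$ to force $S\ket{\psi_{b\bm{\lambda}}}=0$ for all $b$, hence $S=0$, and then invoke that $\{\sigma_{ab}^{(\alpha)}\}$ is a basis to kill all coefficients. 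The only difference is presentational: you spell out, via the biorthogonal dual vectors supplied by the invertible Gram matrix, the step that the paper states tersely when it passes from the vanishing of $(\sum c\sigma)\sum_{c}\ket{\psi_{c\bm{\lambda}}}\bra{\partial_{i}\psi_{c\bm{\lambda}}}\Pi_{\mathrm{k}}$ to $S\ket{\psi_{c\bm{\lambda}}}=0$.
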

\begin{proof}
We consider the linear combination of the $M$ matrices as follows:
\begin{align}
\sum_{ab,\,\alpha}c_{i,\,ab}^{(\alpha)}M_{i,\,ab}^{(\alpha)} & =[\sum_{ab,\,\alpha}c_{i,\,ab}^{(\alpha)}\sigma_{ab}^{(\alpha)},\,L_{i}]\nonumber \\
 & =[\sum_{ab,\,\alpha}c_{i,\,ab}^{(\alpha)}\sigma_{ab}^{(\alpha)},\,L_{i}^{\mathrm{ss}}]+\left(\sum_{ab,\,\alpha}c_{i,\,ab}^{(\alpha)}\sigma_{ab}^{(\alpha)}L_{i}^{\mathrm{sk}}-\mathrm{h.c.}\right),
\end{align}
where the coefficient $c_{i,\,ab}^{(\alpha)}\in \mathbb{R}$. 
More explicitly, according to Eq.~(\ref{eq:Lsk-expression}), we find 
\begin{equation}
\sum_{ab,\,\alpha}c_{i,\,ab}^{(\alpha)}\sigma_{ab}^{(\alpha)}L_{i}^{\mathrm{sk}}=\left(\sum_{ab,\,\alpha}c_{i,\,ab}^{(\alpha)}\sigma_{ab}^{(\alpha)}\right)\sum_{c}\ket{\psi_{c\bm{\lambda}}}\bra{\partial_{i}\psi_{c\bm{\lambda}}}\Pi_{k}.
\end{equation}
Since the set of vectors$\{\Pi_{k}\ket{\partial_{i}\psi_{a\bm{\lambda}}}\}_{a\in[1,\,r]}$
lie in the kernel of $\rho_{\bm{\lambda}}$ and are linearly independent and $\ket{\psi_{c\bm{\lambda}}}$ lies on the support of $\rho_{\bm{\lambda}}$,
in order for the commutator $[\sum_{ab,\,\alpha}c_{i,\,ab}^{(\alpha)}\sigma_{ab}^{(\alpha)},\,L_{i}]$
to vanish, one must have
\begin{equation}
\left(\sum_{ab,\,\alpha}c_{i,\,ab}^{(\alpha)}\sigma_{ab}^{(\alpha)}\right)\ket{\psi_{c\bm{\lambda}}}=0,\,\forall c.
\end{equation}
This further leads to
\begin{equation}
\sum_{ab,\,\alpha}c_{i,\,ab}^{(\alpha)}\sigma_{ab}^{(\alpha)}=0.
\end{equation}
Since the set of operators $\{\sigma_{ab}^{(\alpha)}\}$ are linearly
independent, this means that $c_{i,\,ab}^{(\alpha)}=0$, which concludes
the proof.
\end{proof}

On the other hand, if $L_{i}^{\mathrm{sk}}=0$,
i.e., the SLD $L_{i}$ fully lies on the support of $\rho_{\bm{\lambda}}$,
the set of matrices $\{M_{i,\,ab}^{(\alpha)}\}_{a,\,b\in[1,\,r],\,\alpha=x,y,z}$
must be linearly dependent, as stated in the following theorem:
\begin{thm}
\textup{(Linear dependence among $M$'s) If $L_{i}^{\mathrm{sk}}=0$, i.e., $L_{i}$ only lies on the
support of $\rho_{\bm{\lambda}}$ (including the case of full rank
states), $\{M_{i,\,ab}^{(\alpha)}\}$ is linearly dependent.}
\end{thm}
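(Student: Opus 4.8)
The plan is to certify linear dependence in the most economical way, namely by exhibiting a single \emph{explicit} nontrivial combination of the matrices in $\{M_{i,\,ab}^{(\alpha)}\}$ that vanishes. The combination I would use is suggested by the fact that the support projector is itself a sum of the diagonal generators: since $\sigma_{aa}^{(z)}\equiv P_{aa}$ and $\{\ket{\psi_{a\bm{\lambda}}}\}$ is an orthonormal basis of the support, one has $\Pi_{\mathrm{s}}=\sum_{a=1}^{r}P_{aa}=\sum_{a=1}^{r}\sigma_{aa}^{(z)}$. Thus $\Pi_{\mathrm{s}}$ lies in the real span of the $\sigma$'s that generate the $M$ matrices.

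The key step is to turn the hypothesis $L_i^{\mathrm{sk}}=0$ into a commutation statement. Using the block decomposition $L_i=L_i^{\mathrm{ss}}+(L_i^{\mathrm{sk}}+\mathrm{h.c.})+L_i^{\mathrm{kk}}$ together with $L_i^{\mathrm{kk}}=0$ from Eq.~(\ref{eq:Lsk-expression}), the assumption forces $L_i=L_i^{\mathrm{ss}}$, i.e.\ the SLD lives entirely on the support of $\rho_{\bm{\lambda}}$. Because $\Pi_{\mathrm{s}}$ acts as the identity on the support and annihilates the kernel, it commutes with every operator supported there, so $[\Pi_{\mathrm{s}},\,L_i]=[\Pi_{\mathrm{s}},\,L_i^{\mathrm{ss}}]=0$.

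Assembling these two ingredients, linearity of the commutator gives the dependence immediately,
\begin{equation}
\sum_{a=1}^{r}M_{i,\,aa}^{(z)}=\Big[\sum_{a=1}^{r}\sigma_{aa}^{(z)},\,L_i\Big]=[\Pi_{\mathrm{s}},\,L_i]=0,
\end{equation}
which is a nontrivial relation since the $r$ diagonal generators all enter with coefficient one; hence $\{M_{i,\,ab}^{(\alpha)}\}$ is linearly dependent. Conceptually, this is just the observation that $\Pi_{\mathrm{s}}$ sits in the kernel of the adjoint map $\sigma\mapsto[\sigma,\,L_i]$ whenever $L_i$ is supported on the range of $\rho_{\bm{\lambda}}$, so that map cannot be injective on the $r^{2}$-dimensional span of the $\sigma_{ab}^{(\alpha)}$ and its image is therefore linearly dependent. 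I do not anticipate any real obstacle here: the only point deserving a line of justification is that $\Pi_{\mathrm{s}}$ commutes with $L_i^{\mathrm{ss}}$, which is immediate because both act on the same $r$-dimensional support where $\Pi_{\mathrm{s}}$ reduces to the identity.
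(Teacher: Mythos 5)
Your proof is correct. Structurally it is the same one-line idea as the paper's: exhibit an explicit element of $\mathrm{span}_{\mathbb{R}}\{\sigma_{ab}^{(\alpha)}\}$ that commutes with $L_i$, so that the adjoint map $\sigma\mapsto[\sigma,L_i]$ has a nontrivial kernel on that span. The difference is the witness. The paper expands $L_i$ itself in the $\sigma$-basis (possible precisely because $L_i=L_i^{\mathrm{ss}}$ lives on the support) and invokes $[L_i,L_i]=0$, which yields the relation $\sum_{ab,\alpha}c_{i,ab}^{(\alpha)}M_{i,ab}^{(\alpha)}=0$ with the coefficients being the (generally unknown) expansion coefficients of $L_i$; you instead use $\Pi_{\mathrm{s}}=\sum_a\sigma_{aa}^{(z)}$ and $[\Pi_{\mathrm{s}},L_i]=0$, giving the fully explicit relation $\sum_a M_{i,aa}^{(z)}=0$ with all coefficients equal to one. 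Your choice is marginally cleaner on two counts: the coefficients are manifestly not all zero regardless of $L_i$ (the paper's phrase ``with at least one of them non-zero'' tacitly assumes $L_i\neq0$, a degenerate case in which the set is linearly dependent anyway since it consists of zero matrices), and it isolates a dependence among the $z$-type diagonal generators alone. Both arguments are equally valid proofs of the theorem as stated.
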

\begin{proof}
We note that the set of basis $\{\sigma_{ab}^{(\alpha)}\}_{}$ containing
$r^{2}$ elements nows forms a basis of a Hermitian operators that
fully lies on the support of $\rho_{\lambda}$. Therefore, there exists
a set of real numbers $\{c_{i,\,ab}^{(\alpha)}\}$, with at least
one of them are non-zero, such that 
\begin{equation}
L_{i}=\sum_{ab,\,\alpha}c_{i,\,ab}^{(\alpha)}\sigma_{ab}^{(\alpha)},
\end{equation}
which leads to
\begin{equation}
\sum_{ab,\,\alpha}c_{i,\,ab}^{(\alpha)}M_{i,\,ab}^{(\alpha)}=0.
\end{equation}
Therefore in this case $\{M_{i,\,ab}^{(\alpha)}\}$ are linearly dependent.
\end{proof}

\section{Details about the efficient numerical search algorithms for global
saturation}

\label{sec:justification-efficient-search-algorithm}

The numerical search algorithm presented in the main text is justified
by the following theorem:
\begin{thm}
\textup{(Global saturability criterion) \label{thm:global-sat}When
PCC is satisfied, the QCRB is saturable }if and only if\textup{
there exists a tuplle $\Lambda\equiv\{\bm{v}^{(\omega)},\,\alpha^{(\omega)}\in[0,1]\}_{\omega\in\Omega}$,
where $\bm{v}^{(\omega)}$ is defined as in Theorem~\ref{thm:outcome-wise=000020sat},
such that 
\begin{equation}
\sum_{\omega}\alpha^{(\omega)}=d,\,\sum_{\omega}\alpha^{(\omega)}\bm{v}^{(\omega)}=0.\label{eq:global-sat}
\end{equation}
}
\end{thm}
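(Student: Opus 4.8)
The plan is to reduce the global statement to the outcome-wise characterization already established in Theorem~\ref{thm:outcome-wise=000020sat}, and then to read off the two conditions in \eqref{eq:global-sat} by expanding the POVM completeness relation in the orthonormal basis $\{T_0=\mathbb{I},T_1,\dots,T_n\}$ of $\mathcal{V}_{\perp}$ fixed in \eqref{eq:su-alg-norm}. Recall that the QCRB is saturated if and only if the refined inequality \eqref{eq:Refined-Helstrom} holds for every element of a complete POVM, and that, focusing on optimal measurements, one may take the POVM to consist of rank-one operators $E_{\omega}=\alpha^{(\omega)}\Pi_{\omega}$ with $\alpha^{(\omega)}=\mathrm{Tr}\,E_{\omega}\ge0$ and $\Pi_{\omega}$ a rank-one projector. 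The outcome-wise saturation $F^{C}_{\omega}=F^{Q}_{\omega}$ is precisely membership $E_{\omega}\in\mathcal{V}_{\perp}$ through \eqref{eq:Tr-condition}; since $\mathcal{V}_{\perp}$ is a linear subspace, this is equivalent to $\Pi_{\omega}\in\mathcal{V}_{\perp}$.

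For the forward (``only if'') direction I would assume the QCRB is saturable and fix a saturating complete rank-one POVM $\{E_{\omega}\}_{\omega\in\Omega}$, discarding any null elements so that $\alpha^{(\omega)}>0$. Each $\Pi_{\omega}$ then saturates the outcome-wise bound, so Theorem~\ref{thm:outcome-wise=000020sat} (applicable because PCC is assumed) supplies a nonzero $\bm{v}^{(\omega)}$ with $\Pi_{\omega}=\tfrac1d(\mathbb{I}-\bm{v}^{(\omega)}\cdot\bm{T})$. Substituting into $\sum_{\omega}\alpha^{(\omega)}\Pi_{\omega}=\mathbb{I}$ yields
\begin{equation}
\frac1d\Big(\sum_{\omega}\alpha^{(\omega)}\Big)\mathbb{I}-\frac1d\Big(\sum_{\omega}\alpha^{(\omega)}\bm{v}^{(\omega)}\Big)\cdot\bm{T}=\mathbb{I}.
\end{equation}
Because $\mathbb{I}$ and the traceless $T_{k}$ are mutually orthogonal, and hence linearly independent, under \eqref{eq:su-alg-norm}, I would equate the $T_{0}=\mathbb{I}$ component to obtain $\sum_{\omega}\alpha^{(\omega)}=d$ and the $\bm{T}$ components to obtain $\sum_{\omega}\alpha^{(\omega)}\bm{v}^{(\omega)}=0$, which is exactly \eqref{eq:global-sat}. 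The range $\alpha^{(\omega)}\le1$ follows from the POVM ordering $E_{\omega}\preceq\mathbb{I}$, whose largest eigenvalue equals $\alpha^{(\omega)}$ since $\Pi_{\omega}$ is a rank-one projector.

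For the converse I would run the same computation backwards: given a tuple $\{\bm{v}^{(\omega)},\alpha^{(\omega)}\}$ obeying \eqref{eq:global-sat}, define $\Pi_{\omega}=\tfrac1d(\mathbb{I}-\bm{v}^{(\omega)}\cdot\bm{T})$, which is a genuine rank-one projector by the spectral constraint on $\bm{v}^{(\omega)}$ built into Theorem~\ref{thm:outcome-wise=000020sat}, and set $E_{\omega}=\alpha^{(\omega)}\Pi_{\omega}\succeq0$. The two conditions in \eqref{eq:global-sat} collapse $\sum_{\omega}E_{\omega}$ to $\mathbb{I}$, so $\{E_{\omega}\}$ is a bona fide complete rank-one POVM; each $E_{\omega}$ lies in $\mathcal{V}_{\perp}$ by construction and therefore saturates the outcome-wise bound, which by the global-versus-outcome-wise equivalence gives $F^{C}=F^{Q}$.

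I expect the genuinely delicate points, rather than a serious obstacle, to be the bookkeeping of null and zero-weight outcomes (absorbed by allowing $\alpha^{(\omega)}\in[0,1]$ and dropping vanishing elements) and the verification that \emph{every} outcome-wise-optimal rank-one element is captured by the parametrization of Theorem~\ref{thm:outcome-wise=000020sat}, so that no optimal POVM escapes the tuple description. Once these are settled, the matching of components in the orthonormal basis $\{T_{0},\dots,T_{n}\}$ is routine linear algebra, and the equivalence is essentially a corollary of the outcome-wise criterion together with the completeness relation.
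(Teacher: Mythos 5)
Your proof is correct and follows essentially the same route as the paper: parametrize each outcome-wise-optimal rank-one element via Theorem~\ref{thm:outcome-wise=000020sat} as $E_{\omega}=\alpha^{(\omega)}\Pi_{\omega}=\tfrac{\alpha^{(\omega)}}{d}(\mathbb{I}-\bm{v}^{(\omega)}\cdot\bm{T})$, expand the completeness relation in the orthogonal basis $\{\mathbb{I},T_{1},\dots,T_{n}\}$, and read off the two conditions of \eqref{eq:global-sat}, with the converse obtained by running the computation backwards. Your additional remarks on $\alpha^{(\omega)}\le 1$ and on discarding null outcomes are harmless refinements of the same argument.
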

\begin{proof}
Consider the parameterization of the optimal rank-one POVM given by
\[
E_{\omega}=\alpha^{(\omega)}\Pi_{\omega}=\frac{\alpha^{(\omega)}}{d}(\mathbb{I}-\bm{v}^{(\omega)}\cdot\bm{T}).
\]
The completeness relation $\sum_{\omega}E_{\omega}=\mathbb{I}$ implies
that 
\begin{equation}
\sum_{\omega}E_{\omega}=\frac{1}{d}\left(\sum_{\omega}\alpha^{(\omega)}\right)\mathbb{I}+\frac{1}{d}\sum_{k=1}^{n}\left(\sum_{\omega}\alpha^{(\omega)}v_{k}^{(\omega)}\right)T_{k}=\mathbb{I}.\label{eq:E-sum}
\end{equation}
Taking trace on both sides leads to 
\begin{equation}
\sum_{\omega}\alpha^{(\omega)}=d,\label{eq:cond1}
\end{equation}
which leads to 
\begin{equation}
\sum_{k=1}^{n}\left(\sum_{\omega}\alpha^{(\omega)}v_{k}^{(\omega)}\right)T_{k}=0.
\end{equation}
Multiplying above equation from the left by $T_{l}$ and then taking
the trace yields
\begin{equation}
\sum_{\omega}\alpha^{(\omega)}v_{k}^{(\omega)}=0. \label{eq:cond2}
\end{equation}
On the other hand, it can be readily checked that with Eqs.~(\ref{eq:cond1},~\ref{eq:cond2}),
$\sum_{\omega}E_{\omega}=\mathbb{I}$ is satisfied according to Eq.~(\ref{eq:E-sum}).
Furthermore, for projective measurements, $|\Omega|=d$ and
$\alpha^{(\omega)}=1$, we only require 
\begin{equation}
\sum_{\omega=1}^{d}v_{k}^{(\omega)}=0.
\end{equation}
\end{proof}

\section{Detailed proof of Theorem~\ref{thm:sufficiency}}

\label{sec:Detailed-proof-sufficiency}
\begin{lem}
\textup{\label{lem:rank-one-generalied-structure}Given a projector
$\mathbb{I}^{(\mu+1)}\equiv\mathbb{I}-\sum_{q=1}^{\mu}\Pi_{q}$, where
$\Pi_{q}\equiv\ket{\pi_{q}}\bra{\pi_{q}}$ and $\{\ket{\pi_{q}}\}_{q=1}^{\mu}$
is an orthonormal basis where $\mu\le d-2$, and a set of traceless
Hermitian matrices $\{T_{k}^{(\mu+1)}\}_{k\ge1}$, then the }real\textup{
linear combination between $\mathbb{I}^{(\mu+1)}$ and $\{T_{k}^{(\mu+1)}\}_{k\ge1}$
forms a rank-one projector $\Pi_{\mu+1}=\ket{\pi_{\mu+1}}\bra{\pi_{\mu+1}}$
that is orthogonal to $\{\Pi_{q}\}_{q=1}^{\mu}$ }if and only if\textup{
there exists a real vector $\bm{v}$ such that (i) The kernel space
of $\bm{v}^{(\mu+1)}\cdot\bm{T}^{(\mu+1)}\equiv\sum_{k}v_{k}^{(\mu+1)}T_{k}^{(\mu+1)}$
is $\mathrm{span}_{\mathbb{C}}\{\ket{\pi_{q}}\}_{q=1}^{\mu}$. (ii)
$1$ is the eigenvalue of $\bm{v}^{(\mu+1)}\cdot\bm{T}^{(\mu+1)}$
with degeneracy $d-\mu-1$. }
\end{lem}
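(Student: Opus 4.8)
The plan is to recognize this lemma as the subspace analog of the outcome-wise saturability criterion, Theorem~\ref{thm:outcome-wise=000020sat}: we simply replace the full identity $\mathbb{I}$ (and dimension $d$) by the restricted identity $\mathbb{I}^{(\mu+1)}$ (and effective dimension $d-\mu$) acting on the $(d-\mu)$-dimensional subspace $\mathcal{R}\equiv\mathrm{range}(\mathbb{I}^{(\mu+1)})$, which is the orthogonal complement of $\mathrm{span}_{\mathbb{C}}\{\ket{\pi_q}\}_{q=1}^{\mu}$. First I would observe that orthogonality of $\Pi_{\mu+1}$ to each $\Pi_q$ is equivalent to $\braket{\pi_{\mu+1}|\pi_q}=0$, i.e.\ $\ket{\pi_{\mu+1}}\in\mathcal{R}$, so the entire problem is confined to $\mathcal{R}$, with $\mathrm{span}\{\ket{\pi_q}\}$ acting as a spectator kernel of dimension $\mu$.

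For the forward direction, suppose $X\equiv c_0\mathbb{I}^{(\mu+1)}+\bm{c}\cdot\bm{T}^{(\mu+1)}=\ket{\pi_{\mu+1}}\bra{\pi_{\mu+1}}$ is a rank-one projector orthogonal to the $\Pi_q$. I would rearrange to $\bm{c}\cdot\bm{T}^{(\mu+1)}=\ket{\pi_{\mu+1}}\bra{\pi_{\mu+1}}-c_0\mathbb{I}^{(\mu+1)}$ and note that both terms on the right commute with $\mathbb{I}^{(\mu+1)}$ (because $\ket{\pi_{\mu+1}}\in\mathcal{R}$), so $\bm{c}\cdot\bm{T}^{(\mu+1)}$ is block diagonal across $\mathcal{R}$ and its complement. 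Reading off the blocks yields exactly three eigenvalue groups: $0$ on $\mathrm{span}\{\ket{\pi_q}\}$ (multiplicity $\mu$), $1-c_0$ on $\ket{\pi_{\mu+1}}$ (multiplicity $1$), and $-c_0$ on the remaining $(d-\mu-1)$-dimensional part of $\mathcal{R}$. Imposing tracelessness of $\bm{c}\cdot\bm{T}^{(\mu+1)}$ then pins $c_0=1/(d-\mu)$, and since $\mu\le d-2$ forces $-c_0\neq0$, the kernel is exactly $\mathrm{span}\{\ket{\pi_q}\}$. Finally I would rescale by $s=-(d-\mu)$ and set $\bm{v}\equiv s\bm{c}$, which sends the $(d-\mu-1)$-fold eigenvalue $-c_0$ to $1$ and the isolated eigenvalue to $-(d-\mu-1)$, giving precisely conditions (i) and (ii).

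For the converse, given a real $\bm{v}$ satisfying (i) and (ii), I would use that (i) makes $\mathrm{span}\{\ket{\pi_q}\}$ an invariant subspace of the Hermitian operator $\bm{v}\cdot\bm{T}^{(\mu+1)}$, so it commutes with $\mathbb{I}^{(\mu+1)}$ and is simultaneously diagonalizable with it; combining (i), (ii), and tracelessness fixes the spectrum as $\{0^{(\mu)},1^{(d-\mu-1)},[-(d-\mu-1)]^{(1)}\}$. Then I would form $X=\frac{1}{d-\mu}(\mathbb{I}^{(\mu+1)}-\bm{v}\cdot\bm{T}^{(\mu+1)})$ and count eigenvalues to find $X$ has spectrum $\{0^{(d-1)},1^{(1)}\}$ with its single eigenvector (the one with $\bm{v}\cdot\bm{T}^{(\mu+1)}$-eigenvalue $-(d-\mu-1)$) lying in $\mathcal{R}$; hence $X$ is a rank-one projector orthogonal to every $\Pi_q$, closing the equivalence. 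I expect the main obstacle to be purely the bookkeeping: keeping the full $d$-dimensional spectrum consistent with the effective $(d-\mu)$-dimensional subproblem, and reconciling the two natural normalizations — the projector form fixes the coefficient of $\bm{v}\cdot\bm{T}^{(\mu+1)}$ at $1/(d-\mu)$, whereas conditions (i),(ii) are phrased in the gauge where the degenerate eigenvalue equals $1$ — a mismatch that is cleanly absorbed by the rescaling $\bm{v}\mapsto s\bm{v}$ once the commutativity that licenses the block-diagonal argument has been established.
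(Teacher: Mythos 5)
Your proof is correct and follows essentially the same route as the paper's: rearrange the rank-one-projector ansatz, use tracelessness of the $T_k^{(\mu+1)}$ to pin the scalar coefficient, and read off the spectrum $\{0^{(\mu)},\,1^{(d-\mu-1)},\,-(d-\mu-1)\}$ of $\bm{v}^{(\mu+1)}\cdot\bm{T}^{(\mu+1)}$. The only differences are bookkeeping: you start from a general real linear combination and make explicit the rescaling $\bm{v}=-(d-\mu)\,\bm{c}$ that the paper absorbs into its ``without loss of generality'' normalization, and you spell out the backward direction that the paper dismisses as straightforward.
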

\begin{proof}
The proof is similar to Theorem~\ref{thm:outcome-wise=000020sat}
in the main text. The backward direction is straightforward to prove.
We prove the forward direction. Without loss of generality, we take
\begin{equation}
\Pi_{\mu+1}=\frac{1}{d-\mu}\left(\alpha\mathbb{I}^{(\mu+1)}-\bm{v}^{(\mu+1)}\cdot\bm{T}^{(\mu+1)}\right)
\end{equation}
with $\alpha\in\mathbb{R}$, from which we conclude 
\begin{equation}
\bm{v}^{(\mu+1)}\cdot\bm{T}^{(\mu+1)}=\alpha\mathbb{I}^{(\mu+1)}-(d-\mu)\ket{\pi_{\mu+1}}\bra{\pi_{\mu+1}}=\alpha\mathbb{I}-\alpha\sum_{q=1}^{\mu}\Pi_{q}-(d-\mu)\ket{\pi_{\mu+1}}\bra{\pi_{\mu+1}}.
\end{equation}
The traceless property of $T_{k}^{(\mu+1)}$ implies that $\alpha=1$,
which is then implies 
\begin{equation}
\bm{v}^{(\mu+1)}\cdot\bm{T}^{(\mu+1)}=\mathbb{I}-\sum_{q=1}^{\mu}\Pi_{q}-(d-\mu)\ket{\pi_{\mu+1}}\bra{\pi_{\mu+1}}.
\end{equation}
The r.h.s gives the spectral properties of $\bm{v}^{(\mu+1)}\cdot\bm{T}^{(\mu+1)}:$
The kernel space is spanned by $\{\ket{\pi_{q}}\}_{q=1}^{\mu}$ while
$(d-\mu-1)$ non-zero eigenvalues are equal to $1$.
\end{proof}
\begin{obs} \label{obs:constraint-counting1}\textup{Given a set of orthonormal
$d$-dimension complex bases$\{\ket{\pi_{q}}\}_{q=1}^{\mu}$ and a
linear Hermitian operator $T$ satisfying $\braket{\pi_{q}|T|\pi_{q}}=0$
for $q=1,2,\,\cdots\mu$. The linear subspace $\mathrm{span}_{\mathbb{C}}\{\ket{\pi_{q}}\}_{q=1}^{\mu}$
becomes the kernel of $T$ \textit{if and only if} $T$ are $\mu$-fold
degenerate and the corresponding eigenvectores lies in $\mathrm{span}_{\mathbb{C}}\{\ket{\pi_{q}}\}_{q=1}^{\mu}$
.}

\end{obs}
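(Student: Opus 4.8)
The plan is to prove both directions of the equivalence by a direct eigenspace/dimension-counting argument, exploiting the diagonal hypothesis $\braket{\pi_q|T|\pi_q}=0$ only where it is genuinely needed, namely to pin the degenerate eigenvalue to zero. Throughout I write $S\equiv\mathrm{span}_{\mathbb{C}}\{\ket{\pi_q}\}_{q=1}^{\mu}$, so that $\dim S=\mu$ because the $\ket{\pi_q}$ are orthonormal, and I use that $T$ is Hermitian, hence diagonalizable with an orthogonal eigendecomposition and with $\ker T$ equal to its $0$-eigenspace.

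For the forward (``only if'') direction I would assume $S=\ker T$. Then $\dim\ker T=\dim S=\mu$, so $0$ is an eigenvalue of $T$ of multiplicity exactly $\mu$, i.e.\ $T$ is $\mu$-fold degenerate; and the associated eigenspace is by definition $\ker T=S$, so its eigenvectors lie in $S$. The stated hypothesis $\braket{\pi_q|T|\pi_q}=0$ is then automatic, since each $\ket{\pi_q}\in\ker T$ gives $T\ket{\pi_q}=0$, so it imposes nothing new in this direction, which is essentially immediate.

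The substantive direction is the backward (``if'') one, and it is where the diagonal hypothesis does the work. I would assume $T$ has a $\mu$-fold degenerate eigenvalue with eigenspace $E$ contained in $S$. Since $\dim E=\mu=\dim S$ and $E\subseteq S$, I conclude $E=S$; hence $S$ is an eigenspace of $T$, say $T\ket{v}=\lambda\ket{v}$ for every $\ket{v}\in S$. Evaluating on $\ket{\pi_q}$ gives $\braket{\pi_q|T|\pi_q}=\lambda$, and the hypothesis forces $\lambda=0$. Thus $S=E$ is the $0$-eigenspace, so $S\subseteq\ker T$; and because the multiplicity of $0$ is exactly $\mu=\dim S$, there is no additional kernel, giving $\ker T=S$.

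The only point requiring care — the one I would flag as the ``obstacle'' — is the interpretation of ``$\mu$-fold degenerate'': the argument hinges on the degeneracy being \emph{exactly} $\mu$ rather than merely $\ge\mu$, so that the dimension count $\dim E=\mu=\dim S$ closes both inclusions $E\subseteq S$ and $\ker T\subseteq S$ into equalities. With that convention fixed, the hypothesis $\braket{\pi_q|T|\pi_q}=0$ is precisely what excludes a nonzero degenerate eigenvalue and thereby identifies $S$ with $\ker T$ rather than with some other $\mu$-dimensional eigenspace of $T$; everything else is routine linear algebra for Hermitian operators.
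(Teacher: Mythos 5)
Your proof is correct and follows essentially the same route as the paper's: identify the $\mu$-fold degenerate eigenspace with $\mathrm{span}_{\mathbb{C}}\{\ket{\pi_q}\}$ by dimension counting, then use the vanishing diagonal elements to force the degenerate eigenvalue to be zero (the paper extracts $\lambda=0$ from the trace of the projector identity, you from a single diagonal element — an immaterial difference). Your explicit remark that the degeneracy must be read as exactly $\mu$ for the dimension count to close both inclusions is a point the paper leaves implicit, and is worth keeping.
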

\begin{proof}
The forward direction is straightforward. To prove the backward direction,
we denote the degenerate eigenvalue as $\lambda$ and a basis in the
degenerate subspace as $\ket{\lambda_{q}}_{q=1}^{\mu}$ . It can be
readily observed that 
\begin{equation}
T\sum_{q=1}^{\mu}\ket{\lambda_{q}}\bra{\lambda_{q}}=T\sum_{q=1}^{\mu}\ket{\pi_{q}}\bra{\pi_{q}}=\lambda\sum_{q=1}^{\mu}\ket{\lambda_{q}}\bra{\lambda_{q}}.
\end{equation}
Then taking trace on both sides, one conclude $\lambda=0$, i.e.,
$\mathrm{span}_{\mathbb{C}}\{\ket{\pi_{q}}\}_{q=1}^{\mu}$ is the
kernel of $T$.
\end{proof}
\begin{obs} \label{obs:constraint-counting2}\textup{Given a set of $d$-dimensional
complex orthonormal basis $\{\ket{\pi_{1}},\,\ket{\pi_{2}}\cdots\ket{\pi_{\mu}}\}$,
the effective \textit{real} constraints of requiring a normalized
vector $\ket{\lambda}\in\mathrm{span}_{\mathbb{C}}\{\ket{\pi_{1}},\,\ket{\pi_{2}}\cdots\ket{\pi_{\mu}}\}$
at most $2(d-\mu)$.}
\end{obs}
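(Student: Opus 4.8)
The plan is to reduce the membership condition $\ket{\lambda}\in\mathrm{span}_{\mathbb{C}}\{\ket{\pi_{1}},\ldots,\ket{\pi_{\mu}}\}$ to a family of complex linear equations and then count their real and imaginary parts. First I would extend the given orthonormal set to a full orthonormal basis $\{\ket{\pi_{1}},\ldots,\ket{\pi_{\mu}},\ket{\pi_{\mu+1}},\ldots,\ket{\pi_{d}}\}$ of $\mathbb{C}^{d}$, which is always possible by Gram--Schmidt. Writing $\ket{\lambda}=\sum_{q=1}^{d}c_{q}\ket{\pi_{q}}$ with $c_{q}=\braket{\pi_{q}|\lambda}$, the requirement that $\ket{\lambda}$ lie in $V\equiv\mathrm{span}_{\mathbb{C}}\{\ket{\pi_{q}}\}_{q=1}^{\mu}$ is equivalent to the vanishing of all components along the orthogonal complement $V^{\perp}=\mathrm{span}_{\mathbb{C}}\{\ket{\pi_{q}}\}_{q=\mu+1}^{d}$, that is,
\begin{equation}
\braket{\pi_{q}|\lambda}=0,\quad q=\mu+1,\ldots,d.
\end{equation}

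Next I would count constraints. There are $d-\mu$ such complex conditions, and since $\mathbb{C}^{d}\cong\mathbb{R}^{2d}$ each complex equation $\braket{\pi_{q}|\lambda}=0$ splits into two real ones (vanishing of its real and imaginary parts). Hence membership in $V$ is enforced by at most $2(d-\mu)$ real linear constraints on the components of $\ket{\lambda}$. The qualifier ``at most'' is the point: these are homogeneous real-linear forms, so once the normalization $\langle\lambda|\lambda\rangle=1$ and the residual global-phase freedom are accounted for, some of them may become dependent in special configurations; in no case can the number of independent effective constraints exceed $2(d-\mu)$.

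I would then make the bound geometrically transparent via a dimension count, which also shows that $2(d-\mu)$ is tight in the generic case. The normalized vectors form the unit sphere $S^{2d-1}\subset\mathbb{C}^{d}$ of real dimension $2d-1$, while the normalized vectors lying in $V$ form the unit sphere of $V\cong\mathbb{C}^{\mu}$, namely $S^{2\mu-1}$ of real dimension $2\mu-1$. The codimension is $(2d-1)-(2\mu-1)=2(d-\mu)$, matching the linear constraint count and confirming that no more than $2(d-\mu)$ independent real conditions are ever required.

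The main obstacle here is interpretational rather than computational: one must pin down what ``effective real constraints'' means so that the claim is genuinely an upper bound. I would resolve this by stressing that the $2(d-\mu)$ orthogonality conditions above are already \emph{sufficient} to force $\ket{\lambda}\in V$, so the number of independent effective constraints can only be less than or equal to this value, with equality generically as confirmed by the codimension computation. This is precisely the form needed to feed into the constraint budget of Theorem~\ref{thm:sufficiency}: Observation~\ref{obs:constraint-counting1} converts the kernel condition into a degeneracy-plus-membership statement, and the present bound controls the cost of the membership part.
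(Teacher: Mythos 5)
Your proof is correct and follows essentially the same route as the paper: completing the set to a full orthonormal basis and recasting membership in the span as orthogonality to the remaining $d-\mu$ basis vectors, i.e., $d-\mu$ complex and hence at most $2(d-\mu)$ real constraints. The additional codimension count via the unit spheres is a nice confirmation of tightness but is not needed for the (one-line) argument the paper gives.
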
 

\begin{proof}
It can be readily observed that $\ket{\lambda}\in\mathrm{span}_{\mathbb{C}}\{\ket{\pi_{1}},\,\ket{\pi_{2}}\cdots\ket{\pi_{\mu}}\}$
is equivalent to $\ket{\lambda}$ is orthogonal to the remaining $d-\mu$
orthonormal vectors $\{\ket{\pi_{\mu+1}},\,\ket{\pi_{\mu+2}},\,\cdots\ket{\pi_{d}}\}$,
which imposes at most $d-\mu$ complex constraints. 
\end{proof}
Now we are in a position to give a comprehensive proof of Theorem~\ref{thm:sufficiency}
in the main text.
\begin{proof}
The necessity of PCC has shown in Ref.~\citep{yang2019optimal}.
Here we demonstrate its sufficiency. 
\begin{enumerate}
\item The $1^{\mathrm{st}}$ projective rank-one POVM can be constructed
according to Theorem~\ref{thm:outcome-wise=000020sat} in the main
text. More specifically,
\begin{enumerate}
\item Define $\mathcal{V}_{\perp}^{(1)}=\mathcal{V}_{\perp}$ and $T_{k}^{(1)}=T_{k},\,k\ge1$
and satisfy the normalization condition~(\ref{eq:su-alg-norm}).
Denote the eigenvalues and eigenvectors of $\bm{v}^{(1)}\cdot\bm{T}^{(1)}=\sum_{k=1}^{n}v_{k}^{(1)}T_{k}^{(1)}$
as $\{\pi_{q}(\bm{v}^{(1)})\}_{q=1}^{d}$ and $\{\ket{\pi_{q}(\bm{v}^{(1)})}\}_{q=1}^{d}$
respectively.
\item In order for the linear combination between $\mathbb{I}$ and $T_{k}^{(1)}$
to form a rank-one projector, according to Theorem~\ref{thm:outcome-wise=000020sat}
in the main text, we must impose $(d-1)$ eigenvalues of $\bm{v}^{(1)}\cdot\bm{T}^{(1)}$
be equal to $1$. With out loss of generality can be taken to be $\pi_{q}(\bm{v}^{(1)})=1$
for $2\le q\leq d$. Then the first rank-one optimal projector becomes
\begin{equation}
\Pi_{1}=\frac{1}{d}(\mathbb{I}-\bm{v}^{(1)}\cdot\bm{T}^{(1)})=\ket{\pi_{1}(\bm{v}^{(1)})}\bra{\pi_{1}(\bm{v}^{(1)})},
\end{equation}
with $v_{k}^{(1)}=-\braket{\pi_{1}(\bm{v}^{(1)})|T_{k}^{(1)}|\pi_{1}(\bm{v}^{(1)})}$
and $\sum_{k=1}^{n}[v_{k}^{(1)}]^{2}=d-1$.
\end{enumerate}
\item The remaining optimal rank-one projectors can be constructed iteratively.
To construct the $(\mu+1)^{\mathrm{th}}$ projective rank-one POVM:
\begin{enumerate}
\item $\Pi_{\mu+1}$ must lies in the orthogonal complement of the linear
subspace $\mathrm{span}_{\mathbb{R}}\{\Pi_{q}\}_{q=1}^{\mu}$, which
we denoted as $\mathcal{V}_{\perp}^{(\mu+1)}$. We define $\mathbb{I}^{(\mu+1)}\equiv\mathbb{I}^{(\mu)}-\Pi_{\mu}=\mathbb{I}-\sum_{q=1}^{\mu}\Pi_{q}\in\mathcal{V}_{\perp}^{(\mu+1)}$.
The other $n-\mu$ mutually orthogonal bases on $\mathcal{V}_{\perp}^{(\mu+1)}$
are denoted as $T_{k}^{(\mu+1)}$, where $k=1,\,2,\cdots n-\mu$.
It can be readily observed that 
\begin{equation}
\langle T_{k}^{(\mu+1)},\Pi_{q}\rangle=\braket{\pi_{q}(\bm{v}^{(q)})|T_{k}^{(\mu+1)}|\pi_{q}(\bm{v}^{(q)})}=0,\,q\in[1,\,\mu].
\end{equation}
Furthermore, they must be traceless since
\begin{equation}
\langle T_{k}^{(\mu+1)},\,\mathbb{I}^{(\mu+1)}\rangle=\langle T_{k}^{(\mu+1)},\,\mathbb{I}\rangle=0.
\end{equation}
We choose the normalization condition 
\begin{equation}
\langle T_{k}^{(\mu+1)},\,T_{l}^{(\mu+1)}\rangle=(d-\mu)\delta_{kl}.
\end{equation}
We denote the eigenvalues and eigenvectors of $\bm{v}^{(\mu+1)}\cdot\bm{T}^{(\mu+1)}=\sum_{k=1}^{n-\mu}v_{k}^{(\mu+1)}T_{k}^{(\mu+1)}$
as $\{\pi_{q}(\bm{v}^{(\mu+1)})\}_{q=1}^{d}$ and $\{\ket{\pi_{q}(\bm{v}^{(\mu+1)})}\}_{q=1}^{d}$
respectively.
\item In order to form a rank-one projector from the linear combinations
between $\mathbb{I}^{(\mu+1)}$ and $T_{k}^{(\mu+1)}$, according
to Lemma~\ref{lem:rank-one-generalied-structure}, without loss of
generality, we require (i) $\{\ket{\pi_{q}(\bm{v}^{(q)})}\}_{q=1}^{\mu}$
lies in the kernel of $\bm{v}^{(\mu+1)}\cdot\bm{T}^{(\mu+1)}$ (ii)
$(d-\mu-1)$ eigenvalues equal $1$. According to Observation~(\ref{obs:constraint-counting1}),
condition (i) can be guaranteed by imposing the $\mu$ eigenvalues
are equal, i.e.,
\begin{equation}
\pi_{1}(\bm{v}^{(\mu+1)})=\pi_{2}(\bm{v}^{(\mu+1)})=\cdots\pi_{\mu}(\bm{v}^{(\mu+1)}),
\end{equation}
and $\ket{\pi_{q}(\bm{v}^{(\mu+1)})}\in\mathrm{span}\{\ket{\pi_{q}(\bm{v}^{(q)})}\}_{q=1}^{\mu}$
for $1\le q\leq\mu$. Condition (ii) can be guaranteed by imposing
$\pi_{q}(\bm{v}^{(\mu+1)})=1$ for $\mu+2\le q\leq d$. Then 
\begin{equation}
\Pi_{\mu+1}=\frac{1}{d-\mu}(\mathbb{I}^{(\mu+1)}-\bm{v}^{(\mu+1)}\cdot\bm{T}^{(\mu+1)})=\ket{\pi_{\mu+1}(\bm{v}^{(\mu+1)})}\bra{\pi_{\mu+1}(\bm{v}^{(\mu+1)})},
\end{equation}
with $v_{k}^{(\mu+1)}=-\braket{\pi_{\mu+1}(\bm{v}^{(\mu+1)})|T_{k}^{(\mu+1)}|\pi_{\mu+1}(\bm{v}^{(\mu+1)})}$
and $\sum_{k=1}^{n-\mu}[v_{k}^{(\mu+1)}]^{2}=d-\mu-1$.
\end{enumerate}
\end{enumerate}
Now let us counter the constraints and free-parameters in the construction
of $(\mu+1)$-th rank-one projectors: According to Observations~\ref{obs:constraint-counting1}
and \ref{obs:constraint-counting2}, it is clear that condition (i)
imposes $\mu(d-\mu)+\mu-1$ real constraints. Condition (ii) imposes
$d-\mu-1$ real constraints. On the other hand, the number of real
free parameters is $n-\mu$. In order for real free parameters outnumbers
the real constraints for the construction of optimal rank-one projectors,
up to the $(d-1)$-th one, the inequality (\ref{eq:suff-ineq}) in
the main text must hold. 
\end{proof}

\end{document}